\newif\ifArxiv
\Arxivfalse

\ifArxiv
\documentclass[11pt]{article}
\else
\documentclass[a4paper,USenglish,autoref,thm-restate,numberwithinsect]{lipics-v2021}

\nolinenumbers
\hideLIPIcs 
\fi

\newif\ifMoreBalanced
\MoreBalancedtrue

\newif\ifTableOfContents
\TableOfContentstrue

\newif\ifClearPageAfterAbstract
\ClearPageAfterAbstractfalse

\newif\ifClearPageAfterTableOfContents
\ClearPageAfterTableOfContentstrue

\newif\ifLineNumbers
\LineNumbersfalse

\newif\ifCompileDate
\CompileDatefalse

\newif\ifFrontPageMessage
\FrontPageMessagefalse
\newcommand{\FrontPageMessage}{\bfseries Preliminary draft, not for distribution}

\newif\ifComments
\Commentstrue

\newif\ifSuppressCustomWarnings 
\SuppressCustomWarningsfalse

\newcommand{\TitleMetadataForHyperref}{Fast Deterministic Distributed Degree Splitting}
\newcommand{\AuthorMetadataForHyperref}{Yannic Maus, Alexandre Nolin, Florian Schager}

\usepackage{1a_most_macros}

\usepackage{1b_editorial}

\title{Fast Deterministic Distributed Degree Splitting}
\author{Yannic Maus}{TU Graz, Graz, Austria}{yannic.maus@tugraz.at}{https://orcid.org/0000-0003-4062-6991}{}
\author{Alexandre Nolin}{Télécom SudParis, Institut Polytechnique de Paris, Palaiseau, France}{alexandre.nolin@telecom-sudparis.eu}{https://orcid.org/0000-0002-3952-0586}{Funded by the European Union. Views and opinions expressed are however those of the author(s) only and do not necessarily reflect those of the European Union or the European Research Council. Neither the European Union nor the granting authority can be held responsible for them. This work is supported by ERC grant \href{https://doi.org/10.3030/101162747}{OLA-TOPSENS} (grant agreement number 101162747) under the Horizon Europe funding programme.}
\author{Florian Schager}{TU Graz, Graz, Austria}{florian.schager@tugraz.at}{https://orcid.org/0009-0009-3923-051X}{}

\authorrunning{Y. Maus, A. Nolin, and F. Schager}

\Copyright{Yannic Maus, Alexandre Nolin, and Florian Schager}

\ccsdesc{Theory of computation~Distributed algorithms}

\keywords{LOCAL model, degree splitting}

\funding{This research was funded in whole or in part by the Austrian Science Fund (FWF) \url{https://doi.org/10.55776/I6915}. For open access purposes, the author has applied a CC BY public copyright license to any author-accepted manuscript version arising from this submission. 
}

\newcommand{\lovasz}{Lov\'{a}sz\xspace}

\begin{document}

\newcommand{\makeAbstract}{
\begin{abstract}
    We obtain faster distributed algorithms for the directed and undirected degree splitting problems.
    In the \textit{directed degree splitting} problem the goal is to orient the edges of a graph to minimize the discrepancy between in-degree and out-degree at every vertex.
    In the \textit{undirected degree splitting} problem the goal is to color the edges red and blue to minimize the discrepancy between red-degree and blue-degree at every vertex.

    In particular, we design a deterministic \LOCAL algorithm of complexity $\bigO(\varepsilon^{-1} \cdot \log n)$ for computing a directed degree split with discrepancy at most $\varepsilon \cdot \deg(v)$ for every vertex $v \in V$.
    The dependence on $n$ in our runtime is optimal due to the $\Omega(\log n)$ lower bound by Chang, Kopelowitz and Pettie (STOC 2016), which already holds for $\varepsilon = 1/3$.
    We improve upon the previous result by Ghaffari, Hirvonen, Kuhn, Maus, Suomela and Uitto (DISC 2017) by eliminating all additional dependence on $\varepsilon$ beyond the leading $\varepsilon^{-1}$ factor.
    Furthermore, we extend our algorithm to compute undirected degree splits with the same discrepancy guarantee and in the same asymptotic runtime.
    Beyond our deterministic \LOCAL results, we also provide faster randomized algorithms for the same problems.

    A key ingredient in our approach is a connection to the hypergraph sinkless orientation (HSO) problem introduced by Brandt, Maus, Narayanan, Schager and Uitto (SODA 2025).
    We extend their randomized HSO algorithm to apply to a larger class of instances, which in turn reduces the runtime of our randomized degree splitting algorithms. We also provide deterministic and randomized \CONGEST algorithms for solving HSO.

    Degree splittings are a natural tool to solve other graph problems via a recursive \textit{divide-and-conquer} strategy.
    As a demonstration of this claim, we apply our degree splitting algorithm to show that $(3 / 2 + \varepsilon)\Delta$-edge coloring can now be solved in $\bigO(\varepsilon^{-1} \cdot \log^2 \Delta \cdot \log n + \varepsilon^{-2} \cdot \log n)$ rounds in \LOCAL.
    For constant $\varepsilon$ and $\Delta = \bigO(2^{\log^{1/3} n})$, this runtime matches the current state-of-the-art for $(2\Delta - 1)$-edge coloring in Ghaffari and Kuhn (FOCS 2021).
\end{abstract}
}

\maketitle
\makeAbstract
\clearpage

\ifCompileDate
\begin{center}
    \hfill\today
\end{center}
\fi

\ifFrontPageMessage
\begin{center}
    \FrontPageMessage
\end{center}
\fi

\ifClearPageAfterTableOfContents
\pagenumbering{roman}
\thispagestyle{empty}
\fi
\ifClearPageAfterAbstract
\pagenumbering{roman}
\thispagestyle{empty}
\clearpage
\else
\ifTableOfContents
\bigskip
\fi
\fi

\ifTableOfContents
\ifClearPageAfterTableOfContents
\thispagestyle{empty}
\fi
\tableofcontents
\ifClearPageAfterTableOfContents
\clearpage
\fi
\fi

\pagenumbering{arabic}

\section{Introduction}

In this work, we present improved distributed algorithms for degree splitting problems in the LOCAL model. 
Degree splitting is a fundamental primitive in distributed graph algorithms and we showcase its usefulness by providing faster algorithms for classic edge coloring problems. 

\subparagraph{Model of computation \cite{linial92}.}
We work in the standard LOCAL model of distributed computing. The network is modeled as a graph, where each node hosts a processor and communication proceeds in synchronous rounds. In each round, nodes can exchange arbitrarily large messages with their neighbors and perform unbounded local computation. The complexity of an algorithm is measured by the number of communication rounds.
The related \CONGEST model additionally imposes a $\mathcal{O}(\log n)$ restriction on the size of each message.

\subparagraph{Degree splitting.} Degree splitting problems are among the most basic problems in distributed graph algorithms that resist straightforward greedy solutions. Despite their simple formulation, they have received considerable attention due to both their intrinsic combinatorial interest and their applications, e.g., \cite{BFHKLRSU_stoc16,GS_soda17,weakSplitting19,GHKMSU_dc20,BKKLOPPRSSSU_sinkless_made_simple_sosa23,HMN22,Davies23,BKP_25,BMNSU_soda25}. In the \emph{undirected degree splitting} problem, the task is to color the edges of a graph with two colors, red and blue, such that the discrepancy between the number of incident red and blue edges at every node is minimized. The objective in the  \emph{directed degree splitting} problem is to orient the edges of a graph such that for each node the discrepancy between its in- and out-degree is minimized.

\subparagraph{Which discrepancy is possible?} From a centralized perspective, the problem admits a simple solution  via an Euler tour argument, and more generally it connects to classical discrepancy theory~\cite{chazelle2000discrepancy}. By adding a dummy vertex, all odd-degree nodes can be paired up to obtain an Eulerian graph.
Alternating the colors along an Euler tour and then removing the dummy vertex yields a discrepancy of at most $2$. In the directed variant, orienting edges along such a tour even gives a discrepancy of at most $1$.
In the distributed setting, however, the problem becomes substantially more challenging. The global structure exploited by standard centralized arguments is not directly accessible, and known distributed approaches rely on more involved techniques such as augmenting paths \cite{GS_soda17} or bootstrapping techniques~\cite{GHKMSU_dc20} for amplifying balanced orientations. 

\subparagraph{Degree splittings in the bigger scope of distributed complexity theory.}
Although degree splitting problems are easy to state and well understood centrally, they capture a fundamental source of difficulty in distributed computation. The necessity of global coordination inherent in these problems can be formalized: each admits a logarithmic lower bound on the round complexity, which holds even for constant-degree graphs~\cite{CKP_siamcomp19,weakSplitting19}.
In contrast to the classical \emph{'big four'} problems---maximal independent set, maximal matching, $(\Delta+1)$-vertex coloring, and $(2\Delta-1)$-edge coloring---which admit $\bigO(\log^* n)$-round algorithms on constant-degree graphs and whose partial solutions can always be locally extended to full ones \cite{linial92,PR01}, degree splittings fall into a qualitatively different regime.
They are closely tied to the \lovasz\ Local Lemma (LLL) \cite{CP19,weakSplitting19,HMN22,Davies23,BKP_25}, which captures the task of simultaneously avoiding a family of locally dependent bad events---for instance, that a node’s discrepancy exceeds a prescribed threshold. The distributed complexity of problems in the LLL regime is central to the landscape of locally checkable labeling problems and underlies many recent advances in sublogarithmic and randomized distributed algorithms \cite{GKM17,BHKLOS18,GHK18,CKP_siamcomp19,CP19,Chang20,BBOS20,BBOS21,GRB22}. At the same time, obtaining a sharp understanding of the distributed complexity of LLL-type problems remains a major open direction in the field.

\subparagraph{Degree splitting and the logarithmic time complexity barrier.}
The logarithmic barrier already appears in the \emph{sinkless orientation} problem, where the goal is to compute an orientation in which each node of degree $3$ or more has at least one outgoing edge.
The problem is equivalent to computing a directed degree splitting with discrepancy $\max(\deg(v)-1,2)$.
The problem is known to have a $\Theta(\log n)$ round complexity on general graphs \cite{CKP_siamcomp19,GS_soda17}, making it one of the few natural problems with a tight intermediate complexity on general graphs. 
The same logarithmic lower bound extends to weak (undirected) degree splitting~\cite{weakSplitting19} where the goal is to color  edges red and blue so that each node has at least one edge of each color. 

Thus, the $\Theta(\log n)$ barrier already arises for the largest meaningful discrepancy. Consequently, the goal is not to circumvent this barrier but to understand the trade-offs within it, namely how the round complexity depends on the allowed discrepancy. This dependence is crucial in recursive divide-and-conquer applications, where one requires discrepancy $\varepsilon \cdot \deg(v)$ with $\varepsilon=o(1)$ and the overall efficiency hinges on how the runtime scales with $\varepsilon$.

\subsection{Our Contributions}
Our main contribution is a faster and conceptually simpler algorithm for computing both directed and undirected degree splits. 

\begin{restatable}{theorem}{SplittingThm}
    \label{thm:balanced-local}
    For every $\varepsilon > 0$, there is a deterministic \LOCAL algorithm with round complexity $\bigO(\varepsilon^{-1} \cdot \log n)$ that computes a directed degree splitting of any $n$-vertex graph such that the discrepancy at each vertex $v$ of degree $\deg(v)$ is at most $\varepsilon \cdot \deg(v) + 1$ if $\deg(v)$ is odd and at most $\varepsilon \cdot \deg(v) + 2$ if $\deg(v)$ is even.

    The same statement holds for the undirected degree splitting problem.
\end{restatable}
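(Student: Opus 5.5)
We follow the classical Euler-tour reduction, made local by cutting long walks into bounded-length segments, and control the resulting error with hypergraph sinkless orientation (HSO).

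\emph{Reduction to even degrees.} We first make every degree even. Odd-degree vertices are paired with a virtual partner (a private dummy edge). This accounts for the additive $+1$ when $\deg(v)$ is odd and $+2$ when it is even. The $+2$ arises because in the undirected variant a closed walk of odd length has a color clash at its base point. In the resulting even-degree (multi)graph, each vertex pairs its incident edges arbitrarily. This decomposes the edge set into edge-disjoint closed walks, and a consistent orientation (or alternating coloring) of each walk gives discrepancy $0$ at every internal passage. The problem is that walks can have length $\Theta(n)$, so they cannot be oriented consistently in $o(n)$ rounds.

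\emph{Local cutting of walks.} Set $L=\Theta(\varepsilon^{-1})$. We compute a ruling set on the line graph induced by the walks, or equivalently cut each walk into paths of length between $L$ and $2L$. On a path or cycle this takes $O(L+\log^* n)$ rounds via Cole--Vishkin. Each segment is oriented consistently, which takes $O(L)$ rounds inside the segment. Discrepancy is then created only at segment endpoints, and each endpoint costs $\pm 1$ (or $\pm 2$) at its vertex. The vertices' "cut" imbalance must be kept below $\varepsilon\deg(v)$.

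\emph{Choosing the cuts and directions.} A vertex $v$ is passed by $\deg(v)/2$ walk-passages, and all cut points lying on them may land at $v$. Two levers are available. The first is that we may choose \emph{where} cuts lie: every window of $L$ consecutive passages of a walk must contain a cut, so each segment offers about $L$ candidate vertices. The second is the orientation of each segment, which has two choices. Flipping a segment changes the sign of its contribution at both its endpoints.

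We plan to use the orientation lever as follows. Build an auxiliary hypergraph in which each segment is a hyperedge to be "oriented," meaning we choose its direction. The requirement is that at each vertex $v$ with many segment endpoints, the signed sum stays within $\varepsilon\deg(v)$. Cutting ensures that $v$ hosts at most $O(\deg(v)/L)=O(\varepsilon\deg(v))$ endpoints on average. We then either enforce this worst case by choosing cuts with an HSO-style assignment or a defective-coloring argument, or we balance the endpoint signs as a degree splitting on the smaller "endpoint graph." In the second case we recurse: the endpoint graph has degrees reduced by a factor of about $L$, and we can apply the bootstrapping from previous work, for which a logarithmic-depth geometric recursion sums to $O(\varepsilon^{-1}\log n)$.

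Concretely, we aim to guarantee that each vertex is an endpoint of at most $\varepsilon\deg(v)/2$ segments. This follows by letting each segment choose its cut vertex so that no vertex is overloaded. That is an HSO instance: hyperedges are segments with rank about $L$, and each vertex must avoid receiving too many cuts. We solve it deterministically in $O(\log n)$ rounds on the segment graph, which costs $O(L\log n)=O(\varepsilon^{-1}\log n)$ rounds in the original graph. Vertices with $\deg(v)<1/\varepsilon$ have no slack, so for them we forbid cuts altogether; the HSO instance must remain feasible despite this restriction.

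\emph{Undirected case.} Here we color along each segment alternately. A segment of even length produces no error at its endpoints, and odd segments produce $\pm 1$. The same cut-selection argument applies.

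\emph{Main obstacle.} The hardest step is showing that the cut-placement HSO instance satisfies the required slack condition. Low-degree vertices cannot accept cuts, and high-degree vertices must receive at most an $\varepsilon$ fraction. We also need the HSO algorithm to run in $O(\log n)$ rounds on the segment graph rather than picking up extra $\mathrm{poly}(\varepsilon^{-1})$ factors. I would first try to verify the HSO feasibility condition by a counting argument: each segment has about $L$ candidate cut positions, and each vertex $v$ can tolerate about $\varepsilon\deg(v)$ cuts.
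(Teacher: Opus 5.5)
\textbf{There is a genuine gap: the load-balancing step you leave as the ``main obstacle'' is the heart of the proof, and as you formulate it, it does not work.} Your skeleton matches the paper's: split vertices into degree-two passages, cut the walks into $\Theta(\varepsilon^{-1})$-size pieces, let each piece claim one cut, and solve the claims by deterministic HSO in $\bigO(\log n)$ rounds at $\bigO(\varepsilon^{-1})$ cost per simulated round.

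The problem is your instance. With segments as hyperedges and vertices of $G$ as HSO vertices, a sinkless orientation gives every vertex of $G$ \emph{at least} one cut, the opposite of the load bound you need. HSO only yields upper bounds when the capacity-one resources are the hyperedges. The paper therefore reverses the roles and adds \emph{bucketing}:
\begin{itemize}
\item Each $v$ partitions its $\lceil \deg(v)/2\rceil$ passages into $x(v)=\max\{1,\lfloor \varepsilon\deg(v)/2\rfloor\}$ buckets of size at most $\lceil \deg(v)/(2x(v))\rceil\le 4\varepsilon^{-1}$.
\item Buckets are the hyperedges and disjoint segments (voting blocks) are the HSO vertices. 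Each bucket is assigned to one block, so $v$ suffers at most $x(v)$ cuts, i.e.\ at most $2x(v)$ extra discrepancy.
\item The rank is thus bounded by $4\varepsilon^{-1}$ independently of $\deg(v)$, which matters in non-regular graphs.
\end{itemize}
The condition $\delta\ge 2r$ then requires each block to meet $\ell=\lceil 8\varepsilon^{-1}\rceil$ \emph{distinct creators}. Measuring segments by passage count, as your Cole--Vishkin cut does, is not enough: a trail of $L$ edges in a simple graph can stay inside $\Theta(\sqrt{L})$ vertices and keep hitting the same few buckets. This is why the paper runs a ruling set on the intersection graph of blocks of true length $\ell$ (each with at most $\ell^2$ nodes).

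\textbf{You also misread the additive terms.} The $+2$ for even degree is not an undirected artifact. It is the slack that lets \emph{every} vertex absorb one cut ($x(v)\ge 1$). Forbidding cuts at vertices with $\deg(v)<\varepsilon^{-1}$ makes the task as hard as perfect splitting: on a long cycle, or any graph with $\Delta<\varepsilon^{-1}$, no walk can be cut at all.

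Allowing one cut leaves discrepancy $3$ at low-degree odd vertices. Reaching $+1$ needs a concrete extra step that your ``orientation lever'' only hints at: merge the three degree-one copies into one degree-3 node, contract chains of degree-two nodes, and compute a min-degree-3 sink- and sourceless orientation. This takes $\bigO(\log n)$ rounds of the contracted graph, each costing $\bigO(\ell)$ rounds. Your fallback of recursing through earlier bootstrapping is unsubstantiated, and the known bootstrapping carries exactly the extra $\log\varepsilon^{-1}$-type factors this theorem removes.

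\textbf{The undirected case misses short odd cycles of the split graph.} A cycle of true length below $\ell$ contains no block and so receives no cut. If it is odd it cannot be colored alternately, so its clash lands on some vertex with no control on how many such clashes a vertex collects. The paper removes these cycles beforehand by locally rewiring the pairings at shared creators, merging short cycles into longer cycles or paths. Merge targets are chosen so that no merge closes a loop. It also handles buckets that already contain a degree-one copy by one-sided chops.
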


Our algorithm is conceptually simpler than the augmenting path-based approach of \cite{GS_soda17} and the bootstrapping framework of \cite{GHKMSU_dc20}. As a consequence, we obtain a clean runtime bound of $\bigO(\varepsilon^{-1} \cdot \log n)$, improving upon the $\bigO(\varepsilon^{-1} \log \varepsilon^{-1} (\log \log \varepsilon^{-1})^{1.71} \cdot \log n)$ bound of \cite{GHKMSU_dc20}. For $\eps=1/\Delta$ our algorithm runs in $\bigO(\Delta \cdot \log n)$ time and obtains an almost optimal discrepancy. Lowering the discrepancy further cannot be expected from an efficient algorithm, as computing a perfect split for even-degree nodes already incurs an $\Omega(n)$ lower bound via a reduction from $2$-coloring on cycles.

\begin{restatable}{theorem}{SplittingThmRand}
    \label{thm:balanced-local-rand}
    For every $\varepsilon > 0$, there is a randomized \LOCAL algorithm with round complexity 
    ${\bigO(\varepsilon^{-1} \cdot 10^{\log^* \varepsilon^{-1}} \cdot \log \log n)}$ that computes  a directed degree splitting of any $n$-vertex graph such that the discrepancy at each vertex $v$ of degree $\deg(v)$ is at most $\varepsilon \cdot \deg(v) + 1$ if $\deg(v)$ is odd and at most $\varepsilon \cdot \deg(v) + 2$ if $\deg(v)$ is even.

    The same statement holds for the undirected degree splitting problem.
\end{restatable}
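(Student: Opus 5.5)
We prove \Cref{thm:balanced-local-rand} with the same framework as the deterministic \Cref{thm:balanced-local}, changing only the subroutine for the hard local step. That step is a hypergraph sinkless orientation (HSO) computation, and we replace the deterministic solver with the randomized HSO algorithm of Brandt, Maus, Narayanan, Schager and Uitto, extended to our instances.

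\textbf{Step 1: reduce to constant degree.} First we pair up the edges at each vertex locally and split every vertex into virtual vertices of degree $\Theta(\varepsilon^{-1})$. The point is that a low-discrepancy orientation of the virtual graph lifts to the original graph. The additive $+1$ (odd degree) and $+2$ (even degree) terms come from pairing the leftover edges at each vertex, as in the Euler-tour style argument. The effective maximum degree is then $\bigO(\varepsilon^{-1})$, independent of $\Delta$.

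\textbf{Step 2: path-orientation phases.} Next we decompose the edges into paths and cycles by having each vertex pair its incident edges; this costs $\bigO(1)$ rounds. Orienting each path consistently leaves discrepancy only at path endpoints. Long paths and cycles are cut into segments of length $\Theta(\varepsilon^{-1})$. Each vertex has $\bigO(\varepsilon^{-1})$ segment endpoints, and it needs their imbalances to cancel to within $\varepsilon\deg(v)$.

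\textbf{Step 3: HSO instance and randomized solver.} Choosing segment directions so that the endpoint imbalances cancel is an HSO-type problem. The hyperedges are segments and the vertices are the original nodes. A vertex is \emph{satisfied} if enough of its segments point the right way; we formulate this as requiring a constant-fraction surplus, so the instance has large rank/degree slack. Deterministically this is solved in $\bigO(\varepsilon^{-1}\log n)$ rounds. Randomly we orient segments uniformly at random. By a Chernoff bound, a vertex fails with probability $\exp(-\Omega(\deg))$ in the virtual degree. This gives an LLL-type instance with a polynomial criterion, which the HSO algorithm solves in $\bigO(\log\log n)$ rounds of the virtual graph via shattering. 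Each virtual round costs $\bigO(\varepsilon^{-1})$ real rounds because of the segment lengths, which gives the leading $\varepsilon^{-1}\log\log n$.

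\textbf{Step 4: recursion and the $10^{\log^*\varepsilon^{-1}}$ factor.} To reach discrepancy $\varepsilon\deg(v)$ while keeping each HSO instance's slack constant, we bootstrap. We first reach discrepancy $\varepsilon_1$ with $\varepsilon_1$ only logarithmically smaller than $\varepsilon_0$, then iterate. Reaching $\varepsilon$ takes $\log^*\varepsilon^{-1}$ levels, each multiplying the runtime by a constant (at most $10$), and the geometric sum is dominated by the last level. The undirected variant follows by the standard reduction: orient, then color the edges of alternate path segments red and blue, exactly as in the deterministic case.

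\textbf{Main obstacle.} The hardest step is verifying that our instances satisfy the hypotheses of the randomized HSO algorithm, i.e.\ the extension to a larger class of instances with nonuniform rank and degree. The failure probability must be small enough relative to the dependency degree for shattering to leave components of size $\mathrm{poly}\log n$. These small components are then solved with the deterministic $\bigO(\varepsilon^{-1}\log N)$ algorithm at $N=\mathrm{poly}\log n$, which costs $\bigO(\varepsilon^{-1}\log\log n)$.
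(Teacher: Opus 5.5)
\textbf{Verdict: genuine gap.} Steps 2--3 do not control where the cuts land, and the HSO formulation you propose does not work.

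\textbf{The cutting step.} Cutting long paths and cycles into segments of length $\Theta(\varepsilon^{-1})$ with a standard symmetry-breaking tool gives no control over how many cut points land on copies of the same vertex. Nothing prevents all copies of some $v$ from being cut, and each cut costs up to $2$ units of discrepancy at $v$. Your Step 1 does not help: a virtual vertex of degree $d=\Theta(\varepsilon^{-1})$ may only have discrepancy $\varepsilon d = O(1)$, yet it can host $\Theta(d)$ segment endpoints.

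\textbf{The HSO formulation.} Choosing segment directions so that endpoint imbalances cancel is not an HSO instance. A segment has two endpoints and the constraint at a vertex is two-sided, so this is again a degree-splitting problem on the multigraph of segments, which makes the reduction circular. Random orientation of $k$ endpoints deviates by $\Theta(\sqrt{k})$. The failure probability for discrepancy $\varepsilon k$ is therefore only $\exp(-\Theta(\varepsilon^2 k))$, a constant when $k=\Theta(\varepsilon^{-1})$. Relaxing to a ``constant-fraction surplus'' to get $\exp(-\Omega(k))$ gives up exactly the $\varepsilon$-balance you need.

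\textbf{The missing idea.} HSO should \emph{load-balance the cut points} rather than orient segments. The paper groups the degree-$2$ copies of each $v$ into $x(v)=\max\{1,\lfloor \varepsilon\deg(v)/2\rfloor\}$ buckets and allows at most one cut per bucket. It then builds a hypergraph whose vertices are pairwise disjoint voting blocks of true length $\ell$, selected via a ruling set on their intersection graph, and whose hyperedges are the buckets. The rank is at most $4\varepsilon^{-1}$ and the degree is at least $\ell$. An HSO solution gives each block a private bucket to cut in. Hence $v$ suffers at most $x(v)$ cuts, so its discrepancy is at most $2x(v)$, and every remaining component has weak diameter $O(\ell)$ and can be oriented consistently.

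\textbf{The $10^{\log^*\varepsilon^{-1}}$ factor.} Your Step 4 misattributes this factor, and the bootstrapping as described does not add up. Reaching $\varepsilon$ in $\log^*\varepsilon^{-1}$ levels forces $\varepsilon_i^{-1}$ to grow like a tower, which cannot cost only a constant factor in runtime per level. The paper has no recursion over discrepancy levels. The factor comes from the new randomized HSO algorithm, which requires $\delta \ge 2x\cdot 10^{\log^* r}\cdot r$; its proof shatters recursively, reducing the rank from $r$ to $\log r$ and losing a constant factor in $\delta/r$ per level. With $r=\Theta(\varepsilon^{-1})$ this forces $\ell=\lceil 8\cdot 10^{\log^*\varepsilon^{-1}}\varepsilon^{-1}\rceil$. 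HSO then runs in $O(\log\log n)$ rounds on $H$, each simulated in $O(\ell)$ rounds, and the blocks come from an $O(\log\log n)$-round randomized ruling set. The total is $O(\ell\log\log n)$.

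\textbf{Glossed-over points.}
\begin{itemize}
\item The $+1$ for odd degree comes from merging a vertex's three degree-one copies into a single degree-$3$ node and running randomized sink- and sourceless orientation on the contracted graph.
\item ``Orient, then color'' is not a valid reduction for the undirected case: an odd directed cycle is perfectly balanced but cannot be alternately $2$-colored. The paper instead removes short cycles by locally rewiring the splits, so that every cycle has true length at least $\ell$. It also handles buckets that contain degree-one nodes with one-sided chops.
\end{itemize}
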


Note that the factor $10^{\log^{\ast} \varepsilon^{-1}}$ is a very slowly growing function and is $o(\log^{(k)} \varepsilon^{-1})$ for every constant $k > 0$.

\subparagraph{Randomized Hypergraph Sinkless Orientation.}
The runtime of our randomized algorithm directly depends on the state-of-the-art for solving instances of the \emph{Hypergraph Sinkless Orientation} (HSO) problem, which we use as a subroutine.
Our algorithm would have a higher complexity of 
$\bigO(\varepsilon^{-1} \log \varepsilon^{-1} \cdot \log \log n)$
if using the published state-of-the-art.
Therefore, another contribution of our work is to extend the set of instances of Hypergraph Sinkless Orientation that are known to be solvable in $\bigO(\log \log n)$ rounds by a randomized algorithm.

\begin{restatable}{theorem}{betterHSOthm}
    \label{thm:better-rand-HSO}
    There is a randomized distributed algorithm that computes a hypergraph sinkless orientation in any hypergraph $H$ with maximum rank $r$ and minimum degree $\delta \geq 2 x \cdot 10^{\log^{\ast} r} \cdot r$
    in $\bigO(\log_x \log n)$ rounds of \LOCAL for any $x \geq 2$.
\end{restatable}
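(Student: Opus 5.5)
The plan is to follow the two-phase ``shattering'' template used for the randomized HSO algorithm of Brandt, Maus, Narayanan, Schager and Uitto, and to improve the parameter at the point where their argument needs $\delta$ to be a large multiple of $r\log r$. The extra slack $10^{\log^* r}$ should come from iterating a degree-reduction step $\log^* r$ times. Each iteration replaces the rank $r$ by roughly $\log r$ and costs only a constant factor in the degree-to-rank ratio.

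\textbf{Phase one: random orientation.} Every hyperedge independently picks a uniformly random vertex among its at most $r$ incident vertices as its head, i.e., it orients towards that vertex and away from all others. A vertex $v$ is a sink if every incident hyperedge chose $v$. This happens with probability at most $r^{-\delta}$, which is below $2^{-2x\cdot r}$. The event that $v$ is a sink depends only on hyperedges within distance one, so the dependency degree is at most $\Delta r$.

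\textbf{Handling large maximum degree.} If $\Delta$ is huge, I first let each vertex keep only $\delta$ of its hyperedges and discard the rest; discarded hyperedges are oriented arbitrarily, and this preserves sinklessness. Then $\Delta=\delta$, and the dependency degree $\delta r$ is tiny compared to $\exp(\delta/r)$. By the standard shattering lemma, after $O(1)$ rounds the set of vertices that are sinks, or that must be re-randomized to fix sinks, induces components of size $\mathrm{poly}(\delta r)\cdot\log n$.

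\textbf{Phase two: the residual components.} To get the $\log_x\log n$ dependence rather than $\log\log n$, I use the iterated sampling variant of the argument. Each round resamples only the hyperedges around bad vertices. The failure probability then goes from $p$ to roughly $p^{x}$, because the degree exceeds the rank by a factor $x$. After $O(\log_x\log n)$ rounds the failure probability is $1/\mathrm{poly}(n)$, and a union bound finishes. A cleaner alternative is to run the deterministic HSO algorithm on components of size $N=\mathrm{poly}\log n$. This takes $O(\log_x N)$ rounds provided the deterministic algorithm runs in $O(\log_{\delta/r} N)$ rounds, which is where the hypothesis $\delta \ge 2x\cdot 10^{\log^* r} r$ enters.

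\textbf{Main obstacle: the $10^{\log^* r}$ factor.} I expect this to be the hardest part. I would implement it by recursive rank reduction. Split each hyperedge of rank $r$ into sub-hyperedges of rank about $\log r$ by random partitioning. A vertex keeps enough sub-hyperedges with high probability, losing a constant factor (at most 10) in the ratio $\delta/r$. I would verify this with Chernoff bounds and the LLL on the residual graph. After $\log^* r$ levels the rank is constant, and the base-case algorithm above applies. The delicate part is to show that a sinkless orientation of the sub-hyperedges lifts to one of the original hyperedges. It also needs each level to cost only $O(1)$ rounds plus shattering, so that the total stays $O(\log_x\log n)$.
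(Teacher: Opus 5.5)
\textbf{There is a genuine gap: the rank-reduction step, which carries the whole theorem, is left open, and Phases one and two rest on a miscomputation.}

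In HSO each hyperedge is assigned to exactly \emph{one} of its vertices, and $v$ is a sink if \emph{no} incident hyperedge chose $v$. Under a uniformly random assignment this has probability $\prod_{e\ni v}(1-1/|e|)\approx e^{-\delta/r}$, not $r^{-\delta}$. You analysed the reversed convention, in which a hyperedge is outgoing for all but one of its vertices. With $\delta/r=2x\cdot 10^{\log^{\ast} r}$ and $10^{\log^{\ast} r}=o(\log r)$, this probability exceeds any fixed power $\delta^{-c}$ once $x$ is fixed and $r$ is large. So your claim that $\delta r$ is tiny compared to $\exp(\delta/r)$ is false, and the shattering lemma cannot be invoked. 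Random assignment plus shattering needs $\delta/r=\Omega(\log(\delta r))$, i.e.\ essentially $\delta=\Omega(r\log r)$. That is exactly the hypothesis of the previously known randomized HSO algorithm this theorem is meant to improve. Phases one and two can therefore serve only as a base case at constant rank.

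The rank-reduction step does not work as sketched, for two reasons.
\begin{itemize}
\item \emph{Lifting.} Partitioning a hyperedge into several sub-hyperedges cannot be lifted: each part would be given to a different vertex, while the original hyperedge may go to only one. Instead, each hyperedge should keep a \emph{single} uniformly random subset of $\log r$ of its vertices. This makes lifting trivial and shrinks degrees by the same factor $\Theta(\log r/r)$ as the rank.
\item \emph{Failed vertices.} With single-subset sampling, a vertex retains $\Omega(\delta\log r/r)$ sub-hyperedges only with probability $1-\mathrm{poly}(1/\delta)$, not with high probability in $n$. So at every level some vertices fail, and your plan has nothing to give them. Their hyperedges are already committed to the subsampled instance, and no Chernoff or LLL argument creates new ones.
\end{itemize}

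The missing idea is to \emph{reserve} hyperedges for the failures before subsampling:
\begin{itemize}
\item Mark each hyperedge independently with probability $3/4$.
\item Every vertex with fewer than $\delta/2$ or more than $7\delta/8$ marked edges unmarks all its edges.
\item Subsample only the marked edges, and hand them only to good vertices.
\item Hand the unmarked edges only to bad vertices. These are the vertices incident to an edge unmarked in the previous step, or having fewer than $\tfrac14\delta\log r/r$ sub-hyperedges.
\end{itemize}
Each bad vertex keeps at least $\delta/8$ unmarked edges of rank at most $r$, so the bad instance has degree/rank ratio at least $x$. Bad events have probability at most $\delta^{-16}$ and depend only on randomness within $3$ hops in the bipartite representation. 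Hence the bad instance shatters into components of size $\mathrm{poly}(\delta)\log n$. The deterministic HSO algorithm solves these in $O(\log_x\log n)$ rounds, for all levels in parallel. The good instance loses only a factor $4$ in its ratio, which is absorbed by $10^{\log^{\ast} r}/4\ge 10^{\log^{\ast}\log r}$.

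To keep $\log_x\delta=O(\log_x\log n)$, you must also do two things first. Handle $\delta\ge 2r\log n$ by a $0$-round random assignment, and subsample all ranks down to $O(\log n)$. Capping degrees at $\delta$ does not bound $\delta$ itself.
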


\subparagraph{Implications for edge coloring.}

The improved dependence on the discrepancy is particularly useful in applications with a recursive divide-and-conquer structure. As a corollary, we obtain the following result on edge coloring with few colors. The algorithm repeatedly applies \autoref{thm:balanced-local} to partition the graph into subgraphs of progressively smaller maximum degree, assigning disjoint color palettes to each part. Once the degrees become sufficiently small, a separate procedure is used to complete the coloring.

\begin{restatable}{corollary}{corEdgeColoring}
For every $\varepsilon > 0$, there is a deterministic \LOCAL algorithm with round complexity $\bigO(\eps^{-1} \cdot \log^2\Delta \cdot \log n + \varepsilon^{-2} \cdot \log n)$ to edge color any (multi)-graph with maximum degree $\Delta$ with $(3/2+\eps)\Delta$ colors. 
\end{restatable}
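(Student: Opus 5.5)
We prove the corollary by recursive halving with \autoref{thm:balanced-local} (undirected version), followed by a base-case coloring once the degrees are small. The parameters are chosen so that the multiplicative losses from the splits multiply to at most $(1+O(\varepsilon))$, and the additive $+2$ terms stay negligible relative to the current degree.

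\textbf{Recursive splitting.} Starting from $G$ with maximum degree $\Delta_0=\Delta$, at level $i$ we apply the undirected splitting of \autoref{thm:balanced-local} to every current part in parallel, with discrepancy parameter $\varepsilon_i$. Each part splits into a red and a blue subgraph. If the maximum degree before the split is $\Delta_i$, each new subgraph has maximum degree at most
\[
\Delta_{i+1}\le \frac{(1+\varepsilon_i)\Delta_i}{2}+1.
\]
Parts are edge-disjoint, so we may give them disjoint palettes. We stop after $k$ levels, when $\Delta_k$ has dropped to roughly $\Theta(\varepsilon^{-1})$ or to a suitable value $D$ (a small power of $\log$ if needed). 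We take $\varepsilon_i=\Theta(\varepsilon/\log\Delta)$ uniformly, so that $\prod_i(1+\varepsilon_i)\le 1+\varepsilon/4$. Since $\Delta_i\ge D\gg \varepsilon^{-1}\log\Delta$, the additive $+1$ terms also contribute only a factor $1+O(\varepsilon)$ overall. After $k\approx\log(\Delta/D)$ levels we then have $2^k$ parts, each of maximum degree at most $(1+\varepsilon/2)\Delta/2^k$.

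\textbf{Runtime of the splitting phase.} Each level costs $O(\varepsilon_i^{-1}\log n)=O(\varepsilon^{-1}\log\Delta\log n)$ rounds. Over $O(\log\Delta)$ levels this totals $O(\varepsilon^{-1}\log^2\Delta\log n)$, which matches the first term. All parts at one level are handled simultaneously because they share no edges; in a multigraph each part is just a subgraph on the same vertex set, so the simulation costs no extra rounds.

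\textbf{Base case and color count.} In each part with maximum degree $d=\Delta_k$, we need a $(3/2+\varepsilon/2)d$-edge coloring computable in $O(\varepsilon^{-2}\log n)$ rounds. Shannon's bound $\lfloor 3d/2\rfloor$ holds for multigraphs, and its distributed version is the step I expect to be the main obstacle. I would first try one of two routes. The first is to cite an existing deterministic LOCAL algorithm for $(3/2+\varepsilon)d$ or $(1+\varepsilon)d$ list-edge-coloring of multigraphs whose runtime is polynomial in $d$ times $\log n$, using $d=O(\varepsilon^{-1})$ so that the bound becomes $O(\varepsilon^{-2}\log n)$. The second is to split once more into low-degree pieces and color each greedily with $2d'-1$ colors, giving a Shannon-like constant via the "$3/2$" trick. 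That trick splits into parts of degree $d/2$ and a leftover matching-like structure, using the $\pm$ guarantees of \autoref{thm:balanced-local} with $\varepsilon^{-1}$ dependence: with $\varepsilon'=\Theta(\varepsilon)$ it costs $O(\varepsilon^{-1}\log n)$ per round of augmentation, and up to $O(\varepsilon^{-1})$ repetitions give $O(\varepsilon^{-2}\log n)$. Summing the palettes gives $2^k\cdot(3/2+\varepsilon/2)(1+\varepsilon/2)\Delta/2^k\le(3/2+\varepsilon)\Delta$ colors after rescaling $\varepsilon$.
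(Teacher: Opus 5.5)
Your splitting phase matches the paper's: $\Theta(\log\Delta)$ levels of the undirected split of \autoref{thm:balanced-local} with $\varepsilon'=\Theta(\varepsilon/\log\Delta)$, parts handled in parallel with disjoint palettes, for $\bigO(\varepsilon^{-1}\log^2\Delta\log n)$ rounds in total. The gap is the base case, which is exactly where the $\varepsilon^{-2}\log n$ term comes from, and you leave it unresolved. The paper colors each part, of maximum degree $d=\Theta(\varepsilon^{-1})$, with the deterministic $\bigO(d^2\log n)$-round $(3d/2)$-edge coloring algorithm for multigraphs of \cite{BMNSU_soda25}. It is the quadratic dependence on $d$ that yields $\bigO(\varepsilon^{-2}\log n)$.

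Your first route only asks for a runtime polynomial in $d$, which does not suffice. For instance, the earlier $\bigOtilde(d^3\log^4 n)$ or $\bigO(d^9\poly\log n)$ algorithms would give $\varepsilon^{-3}$ or $\varepsilon^{-9}$. Its $(1+\varepsilon)d$ alternative is impossible for multigraphs: a triangle with every edge of multiplicity $d/2$ has $3d/2$ pairwise adjacent edges. Your second route does not work either. Disjoint palettes on the parts preserve the colors-per-degree ratio of the base algorithm, so greedy $(2d'-1)$-colorings of the pieces add up to roughly $2\Delta$ colors. The ``$3/2$ trick'' with a leftover matching-like structure and repeated augmentations is not a concrete argument.

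There is also a quantitative problem with where you stop. You absorb the additive $+1$ terms by requiring $D\gg\varepsilon^{-1}\log\Delta$. But then even the $\bigO(d^2\log n)$ base case costs $\bigO(\varepsilon^{-2}\log^2\Delta\log n)$, which exceeds the claimed bound. The additive terms do not pile up per level, because later splits halve the earlier $+1$'s. Unrolling $\Delta_{i+1}\le\frac{1+\varepsilon'}{2}\Delta_i+1$ gives $\Delta_k\le 2^{-k}(1+\varepsilon')^k\Delta+\frac{2}{1-\varepsilon'}$. Hence you can stop at degree $\Theta(\varepsilon^{-1})$; the paper takes $h=\lfloor\log(\varepsilon\Delta/6)\rfloor$ levels and applies the base algorithm directly if $\Delta=\bigO(\varepsilon^{-1})$. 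The additive part then costs only about $3\cdot 2^h\le\varepsilon\Delta/2$ extra colors overall.
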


Note that $3\Delta/2$ colors is a natural threshold for coloring multi-graphs---according to Shannon's Theorem \cite{Shannon49} this is an upper bound on the chromatic index of any multigraph. This bound is tight as witnessed by the family of Shannon multigraphs \cite{Vizing65}.
Our result improves upon the previous state-of-the-art of $\bigO(\varepsilon^{-2} \cdot \log^{2}\Delta \cdot \poly\log\log\Delta \cdot \log n)$ rounds in \cite{BMNSU_soda25}.
Note that for constant $\varepsilon$ and $\Delta = \bigO(2^{\log^{1/3} n})$ it even matches the state-of-the-art for the greedily solvable $(2\Delta -1)$-edge coloring problem \cite{GK21}.

\subparagraph{Limitations for Degree Splitting Algorithms.}
We have already discussed the inherent logarithmic barrier for degree splitting problems. Our algorithm achieves a runtime of $\bigO(\varepsilon^{-1} \cdot \log n)$ for a discrepancy of $\varepsilon \cdot \deg(v)$ at every vertex $v \in V$. 
The first distributed algorithm for degree splitting \cite{GS_soda17} is based on augmenting paths. We prove that the $\bigO(\varepsilon^{-1} \cdot \log n)$ bound is tight for naïve augmenting-path–based approaches, even though our own algorithm does not rely on augmenting paths.
Formally, we show a \emph{mending radius} lower bound. This quantity measures, when extending a given partial solution to a new element of the graph, whether the extension can always be performed without changing the partial solution far from where it is being extended~\cite{BHMORS_sirocco22}.

\begin{restatable}{theorem}{thmMendability}
Computing a directed degree splitting with discrepancy at most $\eps \cdot \Delta$ at each vertex has mending radius $\Omega(\eps^{-1} \cdot \log (n/\Delta))$.

    The same statement holds for the undirected degree splitting problem.
\end{restatable}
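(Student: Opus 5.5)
We construct, for every $\varepsilon$ and $\Delta$, an instance and a partial solution that satisfies the discrepancy constraints everywhere except at one vertex $v$. We then show that any valid completion must modify edges at distance $\Omega(\varepsilon^{-1}\log(n/\Delta))$ from $v$. The natural host is a $\Delta$-regular high-girth graph, or more concretely a tree-like gadget of depth $L$ whose leaves are glued so that all degrees equal $\Delta$. We take $\varepsilon\Delta$ to be an integer at least $2$, and we assume $\varepsilon$ is below a small constant; for constant $\varepsilon$ the bound follows from the $\Omega(\log n)$ sinkless-orientation and weak-splitting lower bounds, which are inherited by mending radius.

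\textbf{Construction (directed case).} We fix a partial orientation in which every vertex already has the extreme allowed discrepancy $+\varepsilon\Delta$ (out minus in) in the direction pointing away from $v$. The edge at $v$ is left unoriented, and orienting it either way creates a violation at $v$, or we make $v$ violated outright.

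\textbf{Propagation.} Fix $v$'s discrepancy by flipping some edges incident to $v$. Each flip at $v$ shifts the discrepancy of a neighbor $u$ by $2$. Since $u$ is tight, $u$ must compensate by flipping an edge to another neighbor. An augmenting-path style counting argument then shows the perturbation cannot die out early. We count the total \emph{excess} $X_i$ crossing the sphere of radius $i$ in the BFS tree of $v$: the net change of out-minus-in flow across the cut between $B_i$ and its complement.

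Every vertex in $B_i$ is tight. Flipping edges inside $B_i$ conserves total discrepancy, so the repair at $v$ must be paid for by flow across the boundary. The key obstacle is to ensure this boundary flow forces changes deeper, not merely at distance one. To get this, we make the tree \emph{slack-free in a cumulative sense}. The sum of discrepancies over $B_i$ equals twice the number of boundary edges oriented outward minus the boundary size. In the tree, the number of boundary edges grows only by a factor about $(1+c\varepsilon)$ per level relative to the internal demand, which is the crux.

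Concretely, we use a ``thin'' tree in which each vertex has only about $\varepsilon\Delta$ children and its remaining $(1-\varepsilon)\Delta$ edges are paired internally into balanced local structures, for example bundles of parallel edges to a sibling or dense balanced gadgets. These internal structures are rigid, and the tree edges carry exactly the forced imbalance. Then the discrepancy demand propagating downward is a fixed quantity spread over a boundary that grows like $(\varepsilon\Delta)^{i}$ edges. More precisely, the capacity to absorb the demand at level $i$ is $\approx$ the number of vertices at level $i$ times $2\varepsilon\Delta$, while the forced flow is set up so that the required total at level $i$ grows geometrically with ratio $1+\Theta(\varepsilon)$. Balancing these choices so that absorption first becomes possible only after $\Omega(\varepsilon^{-1}\log(n/\Delta))$ levels is the delicate calculation. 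The $n/\Delta$ arises because each level uses a gadget of $\Theta(\Delta)$ vertices, and the number of levels is a logarithm with base $1+\Theta(\varepsilon)$.

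\textbf{Undirected case.} We transfer the argument by the same cut-counting, using red-minus-blue degree sums instead of orientations. Alternatively, we reduce via the bipartite double cover, since a red/blue split of a bipartite graph corresponds to an orientation. The main obstacle throughout is designing the gadget so that the required absorption grows only by a $(1+\Theta(\varepsilon))$ factor per level, while the instance keeps $n$ vertices and maximum degree $\Delta$.
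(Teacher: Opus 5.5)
Your invariant is the right one, and it is the one the paper uses. Flipping edges with both endpoints in a set $S$ leaves $\sum_{u\in S}(\mathrm{out}(u)-\mathrm{in}(u))$ unchanged. So if every vertex of $S$ is already tight and there is positive excess, any valid completion must change an edge leaving $S$. But in a lower bound the construction is the proof, and the ``delicate calculation'' you defer is exactly what is missing. Worse, the hosts you propose cannot support it.

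Suppose every vertex of the ball $B_r$ around the mending site has bias exactly $\varepsilon\Delta$ in a common direction. Then the net flow across the cut is at least $\varepsilon\Delta\lvert B_r\rvert$, so at least that many edges leave $B_r$, all from its outermost sphere. In your thin tree, with the remaining edges kept inside a level (sibling bundles), only $\varepsilon\Delta\lvert L_r\rvert<\varepsilon\Delta\lvert B_r\rvert$ edges leave $B_r$, so the vertices cannot all be tight. Concretely, with the parent edge incoming and the $\varepsilon\Delta$ child edges outgoing, each vertex has bias $\varepsilon\Delta-1$, and that unit of slack absorbs an $O(1)$ excess immediately. More fundamentally, trees and high-girth graphs have balls growing by a factor of at least $2$ per step (about $\varepsilon\Delta$ for your thin tree, $\Delta-1$ for high girth). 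So only $O(\log n)$ levels fit into $n$ vertices, and the $\varepsilon^{-1}$ factor is lost. You need a host whose balls grow by only a $1+\Theta(\varepsilon)$ factor per step while every vertex stays tight; a tree-like gadget is the wrong topology for that.

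The missing construction is the paper's layered graph. It has layers $V_1,\dots,V_D$ in which every vertex of $V_i$ has $\frac{1+\varepsilon}{2}\Delta$ edges to the next one or two layers and $\frac{1-\varepsilon}{2}\Delta$ edges to the previous ones (the skip edges absorb rounding). All these edges are oriented toward lower layers. Edge counting then forces $\lvert V_{i+1}\rvert\approx\frac{1+\varepsilon}{1-\varepsilon}\lvert V_i\rvert$, and every vertex outside the seed and the top layers is exactly tight. The seed $V_1$ is an unoriented clique on $\frac{1-\varepsilon}{2}\Delta+1$ vertices, each already receiving $\frac{1+\varepsilon}{2}\Delta$ in-edges. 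Whatever orientation the clique gets, the bias sum over $V_1\cup\dots\cup V_{D-2}$ strictly exceeds $\varepsilon\Delta$ times its size. Hence some edge among the top few layers must change, at distance $\Omega(D)$ from $V_1$. The $\log(n/\Delta)$ comes from starting at $\Theta(\Delta)$ vertices and growing by $1+\Theta(\varepsilon)$ per layer, not from a $\Theta(\Delta)$-size gadget per level.

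There are three further issues.
\begin{itemize}
\item The partial solution must be valid, so you may not make $v$ violated outright. Also, a single unoriented edge can never violate $v$ under both orientations, since that would need more than $\deg(v)$ edges. A single unoriented edge is a usable seed only when both its endpoints are already saturated on the same side.
\item Round-complexity lower bounds do not transfer to mending radius: $3$-coloring cycles has mending radius $0$ but complexity $\Theta(\log^* n)$. So your separate constant-$\varepsilon$ case is unjustified. It is also unnecessary, since the layered construction covers all $\varepsilon$.
\item Your bipartite/double-cover transfer to the undirected case is reasonable in principle. The paper's undirected instance uses only odd layer gaps and colors by layer parity. But the transfer presupposes a working directed instance.
\end{itemize}
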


\subparagraph{Open Problem I: Optimality of the runtime.}
Our algorithms achieve a round complexity of $\bigO(\varepsilon^{-1}\log n)$ for a local discrepancy of $\varepsilon \cdot \deg(v)$. Is this dependence optimal? In particular, can one improve the $\varepsilon^{-1}$ factor, or obtain a bound in which the discrepancy-dependent term appears only additively in the runtime? More generally, does a matching lower bound hold beyond the limitations of augmenting path-based techniques?

\subparagraph{Open Problem II: Multi-way splitting.}
In our edge-coloring results we recursively apply \autoref{thm:balanced-local} in a divide-and-conquer fashion, yielding a $k$-way partition at a cost of $\bigO(\varepsilon^{-1}\log^2 k \log n)$ rounds. Is there a deterministic $\bigO(\varepsilon^{-1} \cdot \log n)$-round algorithm that directly partitions the edges into $k=\omega(1)$ parts such that each node $v$ has $\deg(v)/k \pm \varepsilon\Delta$ edges in each part? Achieving such a bound would remove the additional $\log^2 k$ overhead incurred by recursive splitting.

\subsection{Related Work}

\subparagraph{Degree splitting.}
The first authors to consider the problem of degree splitting in the distributed setting were Hanćkowiak, Karoński and Panconesi in 2001 \cite{HKP01}.
They used degree splits as a subroutine for their breakthrough result of designing the first polylogarithmic deterministic distributed maximal matching algorithm.
However, to achieve this result they had to work with a relaxed version of the problem, which allowed a $1/\poly \log n$ fraction of nodes to have arbitrary splits.
The first algorithm solving the degree splitting problem in full generality only came much later, when Ghaffari and Su released their influential paper \cite{GS_soda17} in 2017.
The paper is best-known for its seminal sinkless orientation algorithm,
but it also introduced the first distributed algorithms for directed and undirected degree splitting.
Surprisingly, their splitting algorithms do not use sinkless orientation as a subroutine---instead, they rely on an augmenting-path based approach. 
As a result, they obtain a deterministic directed degree splitting algorithm in $\bigO(\log^7 n / \varepsilon^3)$ rounds and a randomized counterpart in $\bigO(\log^4 n/\varepsilon^3)$ rounds.
For the undirected setting, they obtain a deterministic algorithm of complexity $\bigO(\log^7 n / \varepsilon^{3})$ for a sufficiently large $\Delta$ and a randomized algorithm of complexity $\bigO(\Delta^2 \cdot \log^4 n / \varepsilon^2)$, which works for all values of $\Delta$.
All of their algorithms compute a balanced split with discrepancy at most $\varepsilon \cdot \Delta$.
The next improvement, and the current state-of-the-art, comes from \cite{GHKMSU_dc20}, which leverages the sinkless orientation algorithm in \cite{GS_soda17} to improve the discrepancy of a given orientation.
Their deterministic algorithm relies on a recursive bootstrapping procedure and computes both directed and undirected splits with a discrepancy of $\varepsilon \cdot \deg(v)$ in  $\bigO(\varepsilon^{-1} \log \varepsilon^{-1} (\log \log \varepsilon^{-1})^{1.71} \cdot \log n )$ rounds.
Most recently, \cite{BKP_25} further generalized the above work by providing an algorithm for mixed directed and undirected problems with the same discrepancy. Moreover, they reduce the additive constant in the discrepancy for undirected splits from $4$ down to just $1$ for odd-degree vertices and $2$ for even-degree vertices.
Furthermore, in \cite{HMN22} a different kind of splitting is considered.
They partition the vertices of a graph into $k$ parts $V_1,\dots,V_k$ such that for each $i \in [k]$, each node $v \in V$ has at most $(\deg(v) + \varepsilon \Delta)/k$ neighbors in $V_i$.
They show that this problem can be solved in $\bigO(\varepsilon^{-1} + \poly \log \log n)$ rounds, while building on randomized algorithms for the distributed Lovász Local Lemma as a core subroutine. 

\subparagraph{(Hypergraph) sinkless orientation.}

The \emph{sinkless orientation} problem is arguably the simplest problem of complexity $\Theta(\log_\Delta n)$ in the \LOCAL model of distributed computing, on $\Delta$-regular graphs.
A lower bound of $\Omega(\log_\Delta \log n)$ on the randomized complexity was shown in~\cite{BFHKLRSU_stoc16} and an algorithm matching the corresponding lower bound was promptly given in \cite{GS_soda17}.
This bound on the randomized complexity was later extended to an $\Omega(\log_\Delta n)$ lower bound on the deterministic complexity of the problem~\cite{CKP_siamcomp19} matching the complexity of the deterministic algorithm in \cite{GS_soda17}.
The $\Omega(\log_\Delta n)$ lower bound for deterministic algorithms for sinkless orientation (and hence also for directed degree splitting) was later simplified \cite{BKKLOPPRSSSU_sinkless_made_simple_sosa23} and extended to the stronger supported LOCAL model, in which the communication graph $G$ is a subgraph of a graph $H$ known to all nodes.

Recently, it has been shown that the generalization of the sinkless orientation problem to hypergraphs, i.e.,  the \emph{hypergraph sinkless orientation (HSO)} problem, also has a $\Theta(\log n)$ intermediate complexity \cite{BMNSU_soda25}. In this work, we will use HSO as a major subroutine to design efficient degree splitting algorithms with small discrepancies. In the HSO problem, one is given a hypergraph, and the goal is to orient each hyperedge toward one of its incident vertices so that no vertex becomes a sink. The deterministic HSO algorithm from \cite{BMNSU_soda25} works under the condition that the minimum degree $\delta$ of the hypergraph exceeds its maximum rank $r$ by a constant factor. Under this condition, it runs in time $\bigO(\log_{\delta / r} n)$.

\subparagraph{Related work on edge coloring.}

Distributed edge coloring is one of the most extensively studied problems in the field of distributed computing. Hence we only present the results most relevant to our work here.
For a more thorough discussion of the current state-of-the-art we refer to the overview in the recent paper \cite{JMS25}.
For the regime we care about here, the authors of \cite{GKMU18} present the first deterministic $3\Delta / 2$-edge coloring algorithm of runtime $\bigO(\Delta^9 \cdot \poly \log n)$.
The runtime was subsequently improved multiple times.
The first improvement came via a speedup in one of the subroutines, when Harris \cite{Harris20} showed a faster algorithm for the hypergraph maximal matching problem.
This brought the runtime down to $\bigOtilde(\Delta^3 \log^4 n)$\footnote{We use the notation $\bigOtilde(f(n))$ to hide factors polylogarithmic in $f(n)$.}.
The current state of the art comes from \cite{BMNSU_soda25}, which uses
hypergraph sinkless orientation as a subroutine to get the runtime down to $\bigO(\Delta^2 \cdot \log n)$.
Importantly, they use a degree splitting technique from \cite{GHKMSU_dc20} to reduce the dependency on $\Delta$, which leads to a $\bigOtilde(\varepsilon^{-2} \cdot \log^2 \Delta \cdot \log n)$-round algorithm for $(3/2 + \varepsilon)\Delta$-edge coloring.

\subsection{Our Algorithm in a Nutshell}

For simplicity's sake, we assume $G = (V,E)$ to be a $\Delta$-regular graph and focus on the directed degree splitting problem.

\subparagraph{Split and orient.}
For each vertex $v \in V$, we create $\lfloor \Delta / 2 \rfloor $ virtual copies of degree $2$ and possibly one degree-$1$ node if $\Delta$ is odd. 
The resulting \textit{split graph} is a union of paths and cycles.
From a centralized perspective we can simply orient all paths and cycles consistently. 
The discrepancy at a vertex $v \in V$ is bounded by the number of paths originating at a copy of $v$.
Hence, the discrepancy will be $0$ for nodes of even degree and $1$ for nodes of odd degree. 

\subparagraph{From centralized to distributed.}
The main bottleneck with this approach is that consistently orienting paths---for the sake of this overview assume that the split graph has no cycles---in a distributed manner has an overhead that is proportional to the length of the path. The natural solution is to chop each long path into shorter ones. 

\subparagraph{Why is standard symmetry breaking not enough?}

A commonly used tool to break long paths into short ones in distributed  models is ruling sets \cite{CP19, Chang20, Chang24}.
Ruling sets are a relaxation of maximal independent sets
and come with a parameter $k$, which prescribes the maximal distance that any vertex can have to the ruling set.
Hence, by chopping the path at every ruling set vertex, we can obtain a sequence of shorter paths of maximum length $k$.
However, these chops come at a price.
At every \emph{chopping point} we introduce new start/endpoints of shorter paths and as a result there is no coordination between the orientation of these paths. As a result, the discrepancy of the vertex of $G$ hosting the split may increase by an additive two for every chopping point that it hosts. 
Since the classic symmetry breaking tools do not come equipped with any kind of load balancing features, it may happen that we are left with a very imbalanced split for some unlucky vertices. 

\subparagraph{Balancing chopping points via HSO.} 
Ensuring that we do not chop at too many virtual copies of the same vertex can be seen as a resource allocation problem.
Each virtual node on a long path or cycle wants a chopping point somewhere in their vicinity. Therefore we let each virtual node $v$ pick a subpath $P(v)$ of length $\ell$ that contains $v$. 
Next, we compute a ruling set on the intersection graph of these subpaths (where overlapping segments are adjacent) to produce a collection of disjoint paths $\mathcal{P}$.
Each path $P \in \mathcal{P}$ can now be viewed as a player competing for limited chopping points. 
The chopping points are limited because each vertex $v$ in the original graph $G$ divides its virtual copies into $x$ buckets of approximately equal sizes and allows only one node in each bucket to be chopped. 
This guarantees that the discrepancy does not exceed $2x$ for even-degree vertices and $2 x + 1$ for odd-degree vertices. 

Our goal is now to assign buckets to paths in $\mathcal{P}$ such that no bucket is assigned more than once and each path $P \in \mathcal{P}$ gets at least one bucket that contains a virtual node in $P$.
This problem can be modeled as a hypergraph sinkless orientation instance, where each path $P \in \mathcal{P}$ represents a vertex.
Each bucket then models a hyperedge which contains all the paths that include at least one node from the bucket.
The rank of a hyperedge can be as large as the number of nodes in its corresponding bucket and is thus tied to the discrepancy we want to achieve. If we set $x := \left\lfloor \frac{\varepsilon \Delta}{2} \right\rfloor $, then the maximum rank $r$ satisfies $r \in \Theta(\varepsilon^{-1})$.
Hence, in order to satisfy the necessary degree-rank condition that the minimum degree $\delta$ of the hypergraph $H$ must exceed the maximum rank $r$ our only option is to increase the minimum degree.
The length parameter $\ell$ can be tuned to increase the degree of the HSO instance and thereby satisfy the required degree-rank condition, but it also appears as a multiplicative overhead in the simulation cost of the hypergraph.
Therefore, we choose $\ell = \Theta(\varepsilon^{-1})$.
Since hypergraph sinkless orientation can be solved in $\mathcal{O}(\log n)$ time deterministically, this leads to a total runtime of $\mathcal{O}(\varepsilon^{-1} \cdot \log n)$.

\subparagraph{Towards optimal discrepancy.} The outlined approach results in a discrepancy of $2x+1$. Since we cannot set $x$ to be zero, it seems like the best discrepancy we can achieve is three.
Thus, for very balanced splits, an additional step is needed.
For each vertex of discrepancy three, we take its three virtual copies of degree one and merge them into another virtual node of degree three. A sinkless orientation on this graph finally reduces the discrepancy to one.

\subparagraph{Extension to undirected splitting.}
For the undirected splitting problem we do not aim to orient paths consistently but to color them red/blue in an alternating manner. Here, additional care has to be taken with short cycles, e.g., a cycle of length three does not allow for a red/blue coloring that is balanced for all of its vertices. Hence, we preprocess the split graph by locally rewiring the vertex splits such that all short cycles are removed. 
This procedure takes $\mathcal{O}(\ell)$ rounds and does not dominate the overall runtime.
Long cycles are not a problem because we will chop them in the middle anyway and thus we can deal with them just like long paths.

\section{Preliminaries}

For a hypergraph $H = (V,E)$, we consider its bipartite representation $B_H = (V \sqcup E, F)$, where the nodes and edges of $H$ form opposite sides of the bipartition and there is an edge $(v,e) \in F$ for every pair $(v,e) \in V \times E$ such that $v \in e$.
We use the classic convention of defining \LOCAL (or \CONGEST) on a hypergraph $H$ as \LOCAL (or \CONGEST) on $B_H$.

\subsection{Graph Notation}

We denote by $G=(V,E)$ the input graph, which also acts as communication network. 
Throughout this paper, we assume $n$ to be an upper bound on the number of nodes $\card{V}$, known to all nodes in the graph. $\Delta$ is an upper bound on the maximum degree, also assumed to be globally known. 
Every node $v \in V$ has a unique identifier $\ID_v$ between $1$ and $n$. For a node $v$, we denote by $N(v)$ its neighborhood in the graph ($N(v) = \set{u : uv \in E}$), and by $\incN(v)$ its inclusive neighborhood ($\incN(v) = N(v) \cup \set{v}$). 
Similarly, we write $E(v) = \{ e: v \in e \} $ to denote the set of all edges incident to $v$.
We write $\dist(u,v)$ to denote the distance between two nodes $u$ and $v$, defined as the length of the shortest path in $G$ between $u$ and $v$. 
When $u$ and $v$ are disconnected, we define their distance to be infinite, $\dist(u,v) = \infty$. 
We define the inclusive distance-$k$ neighborhood of a node $v$ as $\incN^k_G(v) = \set{u:\dist(u,v) \leq k}$, and the distance-$k$ neighborhood of $v$ as $N^k_G(v) = \incN^k(v)\setminus \set{v}$.
Further, we define $B^k_G(v)$ as the vertex-induced subgraph of $\incN^k_G(v)$. When it is clear from context we sometimes also omit $G$ from the subscript.
For any integer $n \in \mathbb{N}$, we write $[n]$ to denote the set $\{1,2,\dots,n\}$.

\subsection{Problem Definitions and Subroutines}

\begin{definition}[Directed Degree Splitting]
    Given a graph with maximum degree $\Delta$, orient each edge such that for each node, the discrepancy between in-degree and out-degree is at most $\varepsilon \cdot \Delta$.
\end{definition}

\begin{definition}[Undirected Degree Splitting]
    Given a graph with maximum degree $\Delta$, color each edge red or blue such that for each node, the discrepancy between incident red and blue edges is at most $\varepsilon \cdot \Delta$.
\end{definition}

We note that our algorithms in fact solve a slightly stronger variant of the splitting problems, where the discrepancy at each node of degree $\mathrm{deg}(v)$ is at most $\varepsilon \cdot \mathrm{deg}(v)$.

\begin{definition}[Sinkless Orientation]
    In the \emph{Sinkless Orientation} problem, the goal is to orient all edges such that every node has at least one outgoing edge. Put differently, the graph is free of sinks.
\end{definition}

Sinkless orientation stated in this form can actually be potentially unsolvable, or solvable with a very high complexity, on some graphs (notably paths and cycles). As such, the following problem is often considered instead.

\begin{definition}[min-degree-$\delta$ Sinkless Orientation]
    In the \emph{min-degree-$\delta$ Sinkless Orientation} problem, the goal is to orient all edges such that every node of degree $\delta$ or higher is not a sink.
\end{definition}

\begin{lemma}[Sink- and Sourceless Orientation {\cite[Corollary 3.2]{GHKMSU_dc20}}]
    \label{lem:sink-sourceless}
    A min-degree-$\delta$ sink- and source-less orientation for $\delta \geq 3$ can be computed in $\bigO(\log_\delta n)$ rounds of deterministic \LOCAL and with high probability in $\bigO(\log_\delta \log n)$ rounds of randomized \LOCAL.
\end{lemma}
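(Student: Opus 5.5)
The plan is to reduce the problem to an ordinary min-degree sinkless orientation on an auxiliary \emph{split graph} $H$, and then to invoke the known sinkless orientation algorithms of \cite{GS_soda17}: $\bigO(\log_\delta n)$ rounds deterministically and $\bigO(\log_\delta \log n)$ rounds randomized.

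\textbf{Construction of $H$.} Each node $v$ with $\deg(v) \geq \delta$ is replaced by two virtual copies, $v_{\mathrm{out}}$ and $v_{\mathrm{in}}$. Its incident edges are divided between them, $\lceil \deg(v)/2 \rceil$ to one copy and $\lfloor \deg(v)/2 \rfloor$ to the other. Nodes of degree below $\delta$ are kept as a single node with no requirement. Every edge of $G$ thus becomes an edge of $H$ between two copies. One round of $H$ is simulated by one round of $G$, and $H$ has at most $2n$ nodes.

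\textbf{Interpretation as a sinkless orientation.} View an orientation of $H$ as each edge \emph{serving} the endpoint it points away from. Translate back to $G$ as follows. If the edge $e = \{x,y\}$ serves $x = v_{\mathrm{out}}$, orient $e$ away from $v$. If it serves $x = v_{\mathrm{in}}$, orient $e$ towards $v$. This is well defined, since each edge serves exactly one endpoint and therefore receives exactly one real orientation. Every endpoint of $e$ that is not served has no requirement on $e$. Hence, if every copy of degree at least $3$ in $H$ is served by some edge, then:
\begin{itemize}
\item every $v_{\mathrm{out}}$ yields a real out-edge of $v$;
\item every $v_{\mathrm{in}}$ yields a real in-edge of $v$.
\end{itemize}
So every node of degree at least $\delta$ is neither a sink nor a source.

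\textbf{Degrees and round complexity.} When $\delta \geq 6$, both copies of a high-degree node have degree at least $\lfloor \delta/2 \rfloor \geq 3$. Therefore a min-degree-$3$ sinkless orientation of $H$ suffices. We can also use a min-degree-$\lfloor \delta/2 \rfloor$ sinkless orientation, whose cost is $\bigO(\log_{\delta/2} n) = \bigO(\log_\delta n)$, with the analogous randomized bound. This is exactly the claimed complexity.

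\textbf{Main obstacle: small $\delta$.} The hard step is $\delta \in \{3,4,5\}$ (more generally, nodes of degree $3$ to $5$), where halving leaves a copy of degree below $3$. Here the target bound is simply $\bigO(\log n)$ deterministic and $\bigO(\log\log n)$ randomized, because $\delta$ is a constant. My first attempt would be to rerun the \cite{GS_soda17} argument directly with the two-sided requirement, since its deterministic proof goes through unchanged for this requirement:
\begin{enumerate}
\item Find short cycles, which can be oriented consistently so that all their nodes are neither sinks nor sources.
\item Elsewhere, the $\delta \geq 3$ degree growth makes BFS trees expand, so paths leading away from the constrained node give it both an out-edge and an in-edge, using two distinct branches.
\end{enumerate}
The randomized version would follow the same shattering argument, with the bad event for a node extended to being a sink or a source. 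If this direct route turns out to be messy, the fallback is to handle degree-$3$ to $5$ nodes by merging them with neighbouring high-degree copies. The part needing care is showing that the two-sided requirement does not break the cycle-orientation step when a cycle passes through a node twice.
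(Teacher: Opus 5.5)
\textbf{Verdict: genuine gap.} Your two-copy reduction is correct for nodes of degree at least $6$, but the lemma claims $\delta \geq 3$. For nodes of degree $3$, $4$ and $5$ you give no proof.

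\textbf{What works.} Under the ``serving'' translation, every edge receives exactly one real orientation. An edge serving $v_{\mathrm{out}}$ (resp.\ $v_{\mathrm{in}}$) becomes a real out-edge (resp.\ in-edge) of $v$. Both copies have degree at least $\lfloor \delta/2 \rfloor \geq 3$, so a min-degree-$\lfloor\delta/2\rfloor$ sinkless orientation of the split graph gives the $\bigO(\log_\delta n)$ and $\bigO(\log_\delta \log n)$ bounds. You still need to fix some rule, e.g.\ ``orient away'', for edges that serve an unsplit node; this is harmless. The paper gives no proof of this lemma and imports it from \cite{GHKMSU_dc20}. The question is therefore whether your argument proves the statement as written.

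\textbf{Why the missing case matters.} The hypothesis $\delta \geq 3$ covers nodes of degree $3$, $4$ and $5$, and the paper uses this. Corollary~\ref{cor:balanced-orientation} allows $s = 3$. The last step of the proof of Theorem~\ref{thm:balanced-local} needs the merged degree-$3$ nodes of $G''$ to be neither sinks nor sources. For these nodes you only list possible directions, and each one fails at a specific point.

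\textbf{Why your suggested routes fail.}
\begin{itemize}
\item The serving reduction assigns every edge to a single requirement. With a min-degree-$3$ sinkless orientation as the engine, it needs two copies of degree at least $3$ per node. An argument for degree $3$ must instead use that one edge $u \to v$ supplies $u$'s out-edge and $v$'s in-edge at the same time.
\item The cycle-plus-BFS-tree structure does not supply this. Orient disjoint short cycles consistently and all other edges along BFS trees towards them. Every node is then a non-sink, but BFS leaves receive no in-edge. A degree-$3$ leaf has only two further edges. If those lead to other such leaves, each edge can be an in-edge for only one endpoint. You are left with a sourceless-orientation problem on a subgraph of minimum degree $2$, for instance a long cycle, which must then be oriented consistently along its whole length. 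Your ``two distinct branches'' idea does not address this competition for edges.
\item Merging a low-degree node into a neighbouring copy fails. A merged node being neither sink nor source says nothing about each of its constituents.
\item The randomized shattering sketch inherits the gap. Its post-shattering phase needs exactly the deterministic degree-$3$ algorithm that is missing.
\end{itemize}
As it stands, you have proved the lemma with $\delta \geq 6$ in place of $\delta \geq 3$.
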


In particular, Sinkless Orientation has complexity $\Theta(\log_\Delta n)$ and $\Theta(\log_\Delta \log n)$ on the graph class of $\Delta$-regular graphs, for $\Delta \geq 3$.

\begin{definition}[Hypergraph sinkless orientation (HSO) \cite{BMNSU_soda25}]
    \label{def:hso}
    The objective of the \emph{hypergraph sinkless orientation problem (HSO)} is to assign each hyperedge in a hypergraph $H$ to exactly one of its incident vertices such that every vertex is assigned at least one incident hyperedge.
\end{definition}

For each hyperedge $e$ in $H$, we define its rank $r(e)$ as the number of vertices in $e$.

\begin{theorem}[{deterministic HSO \cite[Theorem 1.4]{BMNSU_soda25}}]
    \label{thm:hso}
    There is a deterministic \LOCAL algorithm of complexity $\bigO(\log_{\delta / r} n)$ for computing a hypergraph sinkless orientation on any $n$-vertex multihypergraph $H$ with maximum rank $r$ and minimum degree $\delta > r$.
\end{theorem}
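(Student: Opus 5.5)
The plan is an augmenting-path argument whose key ingredient is the expansion forced by $\delta > r$, together with a scheme that repairs many sinks in parallel while the total round complexity stays $\bigO(\log_{\delta/r} n)$.

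\textbf{Step 1 (a local expansion lemma).} Start from an arbitrary assignment of hyperedges to incident vertices, for instance assign each hyperedge to its member of largest $\ID$. For a sink $s$, define \emph{alternating paths} $s=v_0, e_1, v_1, e_2, v_2,\dots$. Here $e_i \ni v_{i-1}$ and $e_i$ is currently assigned to $v_i$. Reassigning $e_1$ to $v_0$, $e_2$ to $v_1$, and so on shifts the deficiency along the path. The path is \emph{augmenting} if its endpoint $v_k$ owns at least two hyperedges, so that after the shift nobody on the path is a sink. I would prove the following claim. Let $S_j$ be the set of vertices reachable from $s$ by alternating paths of length at most $j$, and suppose no vertex of $S_j$ owns two or more hyperedges. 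Then $\card{S_{j+1}} \ge (\delta/r)\card{S_j}$, up to lower-order terms.

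The counting behind the claim goes as follows. The vertices of $S_j$ are incident to at least $\delta\card{S_j}/r$ distinct hyperedges. Each such hyperedge is owned by some vertex, and that owner lies in $S_{j+1}$. Because every vertex owns at most one hyperedge, the number of owners is at least the number of these hyperedges. Consequently, within radius $k=\bigO(\log_{\delta/r} n)$ every sink sees an augmenting path.

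\textbf{Step 2 (parallel repair).} Augmenting paths of different sinks may conflict, so they must be chosen vertex-disjoint, or at least such that no owner is depleted twice. To avoid paying an extra $\log n$ factor for many iterations, I would not fix sinks one at a time. Instead, I would adapt the deterministic sinkless-orientation strategy of Ghaffari and Su. Each vertex first keeps only a bounded set of candidate hyperedges. Then, in $\bigO(\log_{\delta/r} n)$ rounds, every sink grows a BFS-like tree of alternating paths of depth $k$. Conflicts are resolved by priorities derived from IDs and distances, the way one resolves competition for short cycles and long paths in sinkless orientation. A vertex that already owns surplus hyperedges can serve several incoming paths, one per surplus hyperedge.

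One would argue that after a constant number of such phases, each of $\bigO(k)$ rounds, every sink is resolved. The reason is that the expansion of Step~1 holds for every sink even inside the residual instance: removing used paths reduces degrees only by a constant factor if we reserve part of the slack $\delta/r$. If the rank condition would fail after this reservation, one can first pass to a sub-instance where $\delta \ge 2r$ by weakening the statement by a constant in the base of the logarithm.

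\textbf{Main obstacle.} I expect the hard part to be Step~2: guaranteeing that simultaneous augmentations do not destroy each other's certificates within only $\bigO(1)$ phases. The expansion lemma is straightforward, but a naive application yields $\bigO(\log n)$ phases, each costing $\bigO(\log_{\delta/r} n)$ rounds. My first attempt would be to charge conflicts to the geometric growth of the balls. Two sinks that compete must share a vertex within distance $k$, and the balls grow exponentially. Hence one can choose a sparse set of disjoint augmenting paths covering all sinks simultaneously, either via a single ruling-set computation on the conflict graph or via the priority rule, where a losing sink re-routes through the remaining expansion.
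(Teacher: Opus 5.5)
The paper does not prove this statement; it imports it as a black box from \cite{BMNSU_soda25}. Judged on its own, your proposal has a genuine gap.

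Step~1 is essentially sound, with one fix. The hypothesis must be that no vertex of $S_{j+1}$ owns two hyperedges, not $S_j$: a single new vertex could own every hyperedge incident to $S_j$. In that case, though, that vertex is itself the endpoint of an augmenting path of length $j+1$, so the conclusion survives. But Step~1 only shows that, under \emph{every} assignment, each sink has an augmenting path of length $O(\log_{\delta/r} n)$. That is a quantitative form of Hall's condition and the easy, existential part. The whole content of the theorem is resolving \emph{all} sinks within $O(\log_{\delta/r} n)$ rounds in total. For this, Step~2 gives only an assertion (``after a constant number of such phases, every sink is resolved''), which you yourself flag as the main obstacle.

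None of the mechanisms you suggest delivers that bound.
\begin{itemize}
\item \textbf{Ruling set on the conflict graph.} It selects only \emph{some} sinks and leaves the others unresolved. Moreover, by your own Step~1, alternating trees of depth $\Theta(\log_{\delta/r} n)$ contain polynomially many vertices, so the conflict graph can have degree polynomial in $n$. The $O(\log\Delta+\log^* n)$ ruling-set algorithm would then need $\Theta(\log n)$ conflict-graph rounds, each costing $\Theta(\log_{\delta/r} n)$ rounds. That is already a $\log n$ factor too slow when $\delta/r$ is constant.
\item \textbf{Priority rule with re-routing.} Nothing bounds how many sinks win per phase. Many sinks can have all their short augmenting paths ending at the same few surplus units. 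Once the winners shift ownership, the losers' alternating trees are invalid and must be rebuilt in a new phase of $\Theta(\log_{\delta/r} n)$ rounds. Even the Hopcroft--Karp-type guarantee (augmenting along a \emph{maximal} family of disjoint shortest augmenting paths strictly increases the shortest augmenting length) yields only $O(\log_{\delta/r} n)$ phases, and each phase needs such a maximal family to be computed.
\item \textbf{Residual-instance remarks.} The expansion of Step~1 holds for every assignment anyway, so no slack needs reserving. If instead you delete used paths, a vertex can lose all its incident hyperedges, not a constant fraction. You also cannot pass to a sub-instance with $\delta\ge 2r$: if $H$ is $r$-uniform and $(r+1)$-regular with $r\ge 2$, any sub-instance of rank $r'$ has average degree at most $r'(r+1)/r<2r'$.
\end{itemize}

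What is missing is a way to make all local repairs mutually consistent in a single computation on $O(\log_{\delta/r} n)$-hop balls. It must avoid iterated augmentation phases and symmetry breaking on a dense conflict graph. As I understand the cited work, it does this differently: each vertex computes a carefully chosen local solution in its ball, based on a localized version of Hall's theorem, and these solutions are shown to agree with neighbouring vertices' choices without any coordination.
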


\begin{theorem}[{randomized HSO \cite[Theorem 1.5]{BMNSU_soda25}}]
    \label{thm:hso-rand}
    There is a randomized \LOCAL algorithm of complexity $\bigO(\log_{\delta / r} \delta + \log_{\delta / r} \log n)$ for computing a hypergraph sinkless orientation on any $n$-vertex multihypergraph $H$ with maximum rank $r$ and minimum degree $\delta > 320 r \log r$ with high probability.
\end{theorem}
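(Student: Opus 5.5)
The plan is to prove this by the standard \emph{shattering} strategy: a short randomized phase solves the instance for all but a sparse set of \emph{bad} vertices, and then the deterministic algorithm of \autoref{thm:hso} is run on the small residual components. Running it on components of size $N = \poly(\delta)\cdot\log n$, with the known size bound $N$ in place of $n$, costs $\bigO(\log_{\delta/r} N) = \bigO(\log_{\delta/r}\delta + \log_{\delta/r}\log n)$ rounds, which is the claimed bound.

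\textbf{Step 1: degree reduction.} The shattering analysis needs bounded dependency degrees, so I first make every vertex have degree exactly $\delta$. Each vertex selects $\delta$ of its incident hyperedges. Each hyperedge then keeps, as its vertex set, only those vertices that selected it. Hyperedges selected by nobody are assigned arbitrarily and discarded. After this, every vertex has degree exactly $\delta$ and every rank is still at most $r$. A solution of the reduced instance is a solution of the original one. Dependencies now live in the bipartite representation $B_H$ with maximum degree $\max(\delta,r)=\delta$.

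\textbf{Step 2: random assignment and declaring bad vertices.} The randomized phase runs as follows.
\begin{itemize}
\item Every hyperedge orients itself towards a uniformly random incident vertex.
\item Every vertex that received at least one hyperedge keeps one of them, say the one with smallest ID.
\item It releases all others; also released are the hyperedges that chose a vertex which is itself declared bad.
\end{itemize}
A vertex $v$ is a sink with probability at most $(1-1/r)^{\delta}\le e^{-\delta/r}$. I declare $v$ \emph{bad} if it is a sink, or if fewer than $\delta/(Cr)$ of its hyperedges end up released. Released hyperedges are exactly those a bad vertex can still use in the residual instance, where the unused released hyperedges are the hyperedges and the bad vertices are the vertices.

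Each hyperedge of $v$ that does not point to $v$ points to a vertex receiving about $\delta/r$ hyperedges in expectation, so it is released with constant probability. Hence the expected number of released hyperedges at $v$ is $\Omega(\delta/r)$. The events involved depend only on the random choices within distance $\bigO(1)$ of $v$ in $B_H$. A Chernoff bound with bounded dependence (for instance, conditioning on $v$'s own choices and using independence of the hyperedges' choices) then gives $\Pr[v\text{ bad}]\le e^{-\Omega(\delta/r)}$. With $\delta > 320\, r\log r$, this is at most $(\delta r)^{-c}$ for a large constant $c$, since $e^{-\delta/(Cr)}$ beats any fixed polynomial in $\delta$ once $\delta/r\ge 320\log r$. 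So the shattering lemma applies: the dependency degree is $\poly(\delta)$, the failure probability is $\poly(\delta)^{-c}$, and bad events at distance $\bigO(1)$ are independent. Consequently, with high probability every connected component of bad vertices, taken in the graph joining bad vertices at distance $\bigO(1)$, has size $\poly(\delta)\cdot\log n$.

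\textbf{Step 3: residual instance.} The residual hypergraph has the bad vertices as vertices and their released, unused hyperedges as hyperedges, restricted to bad vertices. By construction it has minimum degree $\delta'\ge\delta/(Cr)$ and rank $r'\le r$. This raises a concern: the ratio $\delta'/r'$ might degrade to $\delta/(Cr^2)$, which could be below $1$. To fix this, I would instead define bad as having fewer than $\delta/2$ released hyperedges. That is a constant fraction rather than a $1/r$ fraction, and I would need to show that released hyperedges are the typical case. I would do this by letting each vertex keep only its first hyperedge and release the rest, and by noting that a hyperedge of $v$ is \emph{unreleased} only if it is the kept one at its target. In expectation at most one hyperedge per vertex is kept, so only $\bigO(\delta/r)$ of $v$'s hyperedges are kept by others. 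This preserves $\delta'/r' = \Omega(\delta/r)$, and I then apply \autoref{thm:hso} on each component with $N$ in place of $n$.

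\textbf{Main obstacle.} The hardest step is this concentration argument: showing that a bad vertex retains a constant fraction of its degree among released hyperedges, with failure probability $e^{-\Omega(\delta/r)}$, so that the residual instance still satisfies $\delta' > r'$ with ratio $\Omega(\delta/r)$. This is where the constant $320$ and the $\log r$ slack would be consumed, and I would handle it via a bounded-differences inequality over the choices of hyperedges at distance at most $2$ from $v$.
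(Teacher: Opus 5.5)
This theorem is only cited in the paper (from \cite{BMNSU_soda25}), so there is no in-paper proof to compare with; judged on its own, your proposal has a genuine gap. The overall architecture is the right shape for the bound: a constant-round randomized phase, then \autoref{thm:hso} on residual components of size $\poly(\delta)\cdot\log n$. The gap is in Steps~2--3.

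You assert that the residual instance has minimum degree at least $\delta/(Cr)$ ``by construction'', but your rule does the opposite. A vertex enters the residual instance exactly when it is a sink or has \emph{fewer} than $\delta/(Cr)$ (in your revision, $\delta/2$) released hyperedges. Vertices of the second kind therefore have residual degree \emph{below} your threshold, with no lower bound at all. A sink can even have residual degree $0$: under your smallest-ID keep rule, suppose each hyperedge of $v$ has the smallest ID at all of its other vertices. Conditioned on $v$ being a sink (probability $(1-1/r)^{\delta}$), every one of them is kept by its target. The residual instance then need not satisfy $\delta'>r'$, or even admit an HSO, and \autoref{thm:hso} does not apply.

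The concentration bound you single out as the main obstacle cannot repair this. A per-vertex failure probability of $e^{-\Omega(\delta/r)}$ is far above $1/\poly(n)$ when $\delta/r\approx 320\log r\ll\log n$, so with high probability some residual vertex violates the condition. The degree--rank condition of the post-shattering instance must hold \emph{deterministically} at every residual vertex; randomness may only be spent on showing that the residual set shatters. Your definition is also circular: hyperedges that chose a bad vertex count as released, while badness is defined through the number of released hyperedges.

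The missing ingredient is a mechanism that makes the residual degree structural, independent of how the random orientation turns out. This is exactly the role of Step~2 of the pre-shattering process in the proof of \autoref{thm:better-rand-HSO}, adapted as follows.
\begin{itemize}
\item Mark each hyperedge independently with constant probability. Every vertex whose number of marked incident hyperedges falls outside a constant-factor window unmarks all of its hyperedges.
\item Only marked hyperedges are oriented at random and used to satisfy good vertices. Unmarked ones form a reserve that good vertices never consume.
\item Call $v$ bad if it, or a vertex sharing a hyperedge with it, triggered an unmarking, or if no marked hyperedge chose $v$.
\end{itemize}
Every bad vertex then keeps a constant fraction of its $\delta$ hyperedges unmarked, each of rank at most $r$. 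So the post-shattering instance has ratio $\Omega(\delta/r)$ regardless of the random outcome. The bad event is determined by an $\bigO(1)$-radius ball in $B_H$ and has probability at most $e^{-\Omega(\delta/r)}+\poly(\delta r)\,e^{-\Omega(\delta)}$.

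What remains is the quantitative check that this probability is at most $\delta^{-c_1}$ with $c_1>4c_2+2$, as the shattering lemma requires. This is where $\delta>320\,r\log r$ is consumed, and it must be done with explicit constants. In particular, your claim that $e^{-\delta/(Cr)}$ beats every fixed polynomial in $\delta$ once $\delta/r\ge 320\log r$ is false. At $\delta/r=320\log r$ it equals $r^{-a/C}$ for an absolute constant $a$, which dominates only polynomials of degree below $a/C$.

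Relatedly, under your smallest-ID keep rule a hyperedge of $v$ not pointing at $v$ need not be released with constant probability, since it may be the minimum-ID hyperedge at all its other vertices. Any keep rule you rely on should therefore be randomized.
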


\begin{definition}[$(\alpha,\beta)$-ruling set]
    For two integers $\alpha, \beta \geq 1$, an \emph{$(\alpha,\beta)$-ruling set} is a set of nodes $S \subseteq V$ such that the distance between any two nodes in $S$ is at least $\alpha$ and any node $v \in V$ has a distance of at most $\beta$ to the closest node in $S$.
\end{definition}

Ruling sets generalize maximal independent sets, which are $(2,1)$-ruling sets. 
In this paper, we will use the following ruling set algorithm obtained by combining results from \cite{SEW_RulingSetColoring_tcs13} and \cite{GV_RulingSetBoundedGrowth_podc07}.

\begin{theorem}[{\cite[Theorem 3]{SEW_RulingSetColoring_tcs13} \& \cite[Theorem 8]{GV_RulingSetBoundedGrowth_podc07}}]
    \label{thm:det_ruling_set}
    There is a deterministic distributed algorithm that computes a $(2,\bigO(\log \Delta))$-ruling set in any graph $G$ of maximum degree $\Delta$ in $\bigO(\log \Delta + \log^{\ast} n)$ rounds of the \LOCAL model.
\end{theorem}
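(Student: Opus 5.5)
The plan is to reduce to the classical bit-by-bit ruling set construction run on top of a proper vertex coloring with few colors. Both ingredients are standard, and together they give the claimed $\bigO(\log \Delta + \log^{\ast} n)$ bound.

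\textbf{Step 1 (coloring).} Using the unique identifiers in $[n]$, I would run Linial's color reduction \cite{linial92}. In $\bigO(\log^{\ast} n)$ rounds it yields a proper vertex coloring with $C = \bigO(\Delta^2)$ colors. Each color is written in binary with $b = \lceil \log_2 C \rceil = \bigO(\log \Delta)$ bits.

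\textbf{Step 2 (bitwise filtering).} Set $S_0 = V$. For $i = 1, \dots, b$, define $S_i$ from $S_{i-1}$ in one round as follows:
\begin{itemize}
\item a node of $S_{i-1}$ whose $i$-th color bit is $0$ stays in $S_i$;
\item a node of $S_{i-1}$ whose $i$-th bit is $1$ stays in $S_i$ if and only if it has no neighbor in $S_{i-1}$ whose $i$-th bit is $0$.
\end{itemize}
The output is $S = S_b$, and the total cost of this phase is $b = \bigO(\log \Delta)$ rounds.

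\textbf{Step 3 (correctness).} I would prove two claims.
\begin{itemize}
\item \emph{Domination.} By induction on $i$, every node is within distance $i$ of $S_i$. A node removed at step $i$ has a neighbor with bit $0$ in $S_{i-1}$, and such a neighbor survives into $S_i$. So distances to the candidate set grow by at most one per step, and every node has distance at most $b = \bigO(\log \Delta)$ to $S$. In particular $S$ is nonempty on every component, since $S_i$ always retains the bit-$0$ nodes or the isolated bit-$1$ nodes.
\item \emph{Independence.} Suppose $u, v \in S_b$ are adjacent. The coloring is proper, so their colors differ in some bit $i$, say $u$ has bit $0$ and $v$ has bit $1$. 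Both lie in $S_{i-1} \supseteq S_b$. Then at step $i$, $v$ has a neighbor in $S_{i-1}$ with bit $0$, so $v \notin S_i$, a contradiction.
\end{itemize}
Hence $S$ is a $(2, \bigO(\log \Delta))$-ruling set.

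The only step needing care is the domination induction. One must check that the witnessing neighbor of a removed node actually survives into $S_i$, and it does, because nodes with bit $0$ are never removed. Everything else is routine. Summing the two phases gives $\bigO(\log^{\ast} n + \log \Delta)$ rounds, as claimed in \autoref{thm:det_ruling_set}.
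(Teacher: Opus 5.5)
Your proof is correct and takes essentially the same route the paper relies on. The paper gives no proof of its own and cites the result, which amounts to computing a $\mathrm{poly}(\Delta)$-coloring in $\bigO(\log^{\ast} n)$ rounds and then running the color-based ruling set algorithm of \cite{SEW_RulingSetColoring_tcs13} with $c = \Theta(\log \Delta)$, so that $c \cdot d^{1/c} = \bigO(\log \Delta)$; your bitwise filtering is essentially that algorithm with base-$2$ digits, and your domination and independence arguments are its standard correctness proof.
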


\section{Directed Degree Splitting}
\label{sec:balanced-local}

In this section we describe our improved \LOCAL algorithm for directed degree splits that is obtained through a connection with the hypergraph sinkless orientation (HSO) problem.

\SplittingThm*

Setting $\varepsilon < 1 / \Delta$ we obtain the following corollary.

\begin{corollary}
    There is a deterministic \LOCAL algorithm of complexity $\bigO(\Delta \cdot \log n)$ that computes an orientation of the edges in any graph $G$ on $n$ nodes such that the discrepancy at each vertex $v$ is at most $2$ if $\deg(v)$ is even and at most $1$ if $\deg(v)$ is odd.
\end{corollary}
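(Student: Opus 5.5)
The corollary should follow from \autoref{thm:balanced-local} by choosing $\varepsilon$ so small that the multiplicative term disappears after rounding. Concretely, I would apply the theorem with $\varepsilon := 1/(\Delta+1)$; any fixed value strictly below $1/\Delta$ works. The theorem then gives a deterministic \LOCAL algorithm with round complexity $\bigO(\varepsilon^{-1}\cdot\log n)=\bigO((\Delta+1)\log n)=\bigO(\Delta\cdot\log n)$. Here $\Delta$ is the globally known upper bound on the maximum degree, so every node can compute this $\varepsilon$ locally.

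For the discrepancy bound, fix a vertex $v$. Since $\deg(v)\le\Delta$, we have $\varepsilon\cdot\deg(v)\le \Delta/(\Delta+1)<1$.

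\begin{itemize}
\item If $\deg(v)$ is odd, the theorem bounds the discrepancy by $\varepsilon\deg(v)+1<2$.
\item If $\deg(v)$ is even, it bounds the discrepancy by $\varepsilon\deg(v)+2<3$.
\end{itemize}

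The discrepancy $|\mathrm{in}(v)-\mathrm{out}(v)|$ is an integer with the same parity as $\deg(v)$, because $\mathrm{in}(v)+\mathrm{out}(v)=\deg(v)$. An odd integer strictly less than $2$ is at most $1$. An even integer strictly less than $3$ is at most $2$. This gives exactly the claimed bounds.

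There is no genuine obstacle here. The only care needed is to use a strict inequality $\varepsilon<1/\Delta$, so that $\varepsilon\deg(v)<1$ even when $\deg(v)=\Delta$, and then to apply the parity observation to turn the strict real bounds into integer bounds. If $\Delta=0$ the statement is trivial, so we may assume $\Delta\ge1$ when writing $\varepsilon^{-1}=\Delta+1=\bigO(\Delta)$.
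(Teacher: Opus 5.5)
Your proposal is correct and matches the paper, which obtains the corollary directly by instantiating \autoref{thm:balanced-local} with $\varepsilon < 1/\Delta$. Your explicit rounding step is a useful addition, though integrality alone already suffices: an integer strictly below $2$ is at most $1$, and one strictly below $3$ is at most $2$, so the parity observation is not needed.
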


This improves on a previous result of complexity $\bigO(\Delta \cdot \log \Delta \cdot (\log \log \Delta)^{1.71}\cdot \log n)$ for the same problem by \cite{GHKMSU_dc20}.

\subsection{High-Level Overview}

Before we describe our algorithm, it is useful to revisit a simple corollary about balanced orientations from sink- and sourceless orientations. The proof of this statement can be found in \cite{GHKMSU_dc20}, but it is simple enough to restate it here.

\begin{corollary}
    \label{cor:balanced-orientation}
    Let $s\geq 3$ be an integer. An orientation where each node $v$ of degree $\deg(v)$ has at least $\floor{\deg(v)/s}$ outgoing and incoming edges can be computed in $\bigO(\log_{s} n)$ rounds of deterministic \LOCAL and with high probability in $\bigO(\log_{s} \log n)$ rounds of randomized \LOCAL.
\end{corollary}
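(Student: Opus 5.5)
The plan is to reduce the corollary to a single call of \autoref{lem:sink-sourceless} on an auxiliary graph obtained by splitting every node into virtual copies of degree about $s$. Throughout, $s \geq 3$ is fixed. For each node $v$ of degree $\deg(v)$, let $k_v := \floor{\deg(v)/s}$.

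\textbf{Construction.} If $k_v = 0$, the claim is vacuous at $v$ and $v$ is not split. Otherwise, partition the incident edges $E(v)$ into $k_v$ groups, each of size between $s$ and $2s-1$. This is possible: take $k_v - 1$ groups of size exactly $s$ and one remainder group of size $\deg(v) - (k_v-1)s \in [s, 2s-1]$. Each group becomes a virtual node. The resulting graph $G'$ has the same edge set as $G$, so an orientation of $G'$ is an orientation of $G$. Every virtual node has degree at least $s \geq 3$ and at most $2s-1$. Multi-edges and self-loops may appear if two edges of $v$ go to the same node or land in one group. Self-loops only help, since they count as both in and out; the cited result should handle multigraphs, or one can subdivide if needed.

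\textbf{Applying the lemma.} Compute a min-degree-$s$ sink- and source-less orientation of $G'$ via \autoref{lem:sink-sourceless} with $\delta = s$. Each virtual copy of $v$ has degree at least $s$, so it receives at least one incoming and at least one outgoing edge. Summing over the $k_v$ copies, $v$ gets at least $k_v = \floor{\deg(v)/s}$ edges of each direction.

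\textbf{Simulation cost.} $G'$ is simulated on $G$ with constant overhead, since each virtual node lives at its host and virtual edges are real edges. The round complexity is therefore $\bigO(\log_s n')$, where $n' \leq 2|E| \leq n^2$ is the number of virtual nodes, giving $\bigO(\log_s n)$ deterministically. In the randomized case the bound is $\bigO(\log_s \log n)$ with high probability, since $\log n' = \bigO(\log n)$. Virtual IDs can be formed as (host ID, index) pairs.

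\textbf{Main obstacle.} The only delicate point is that the lemma's bound is phrased in terms of min-degree $\delta$ rather than the maximum degree. Virtual degrees are capped at $2s-1$, so the instance is near-regular and $\log_\delta$ matches $\log_s$ up to constants. This is what yields the claimed dependence on $s$ rather than on $\Delta$.
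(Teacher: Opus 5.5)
Your proposal is correct and matches the paper's proof: split each $v$ into $\max(1,\floor{\deg(v)/s})$ copies of degree at least $s$, apply \autoref{lem:sink-sourceless} with $\delta = s$, and lift back, with each copy contributing a distinct incoming and a distinct outgoing edge. Your concern about min-degree versus max-degree is unnecessary, since the lemma's $\bigO(\log_\delta n)$ bound with $\delta = s$ already gives the claimed complexity; likewise, splitting a simple graph creates no self-loops or multi-edges.
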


\begin{proof}
    For each vertex $v$ of degree $\deg(v)$ we create $\max(1,\floor{ \deg(v) / s })$ copies and divide the edges incident to $v$ evenly among the copies. 
    Nodes of degree $s$ or more create copies of degree $s$ or more.
    By applying \autoref{lem:sink-sourceless} we obtain an orientation
    of this virtual graph where nodes of degree $s$ or more are
    neither sinks nor sources. Computing this orientation takes
    $\bigO(\log_s n)$ rounds deterministically and $\bigO(\log_s \log
    n)$ rounds randomized.  Lifting this orientation back to $G$, we
    get that each vertex $v$ has at least $\lfloor \deg(v) / s \rfloor
    $ incoming and outgoing edges.
\end{proof}

Since sinkless orientation only guarantees an outgoing edge for nodes of degree at least three, the discrepancy achievable with this approach is capped at $\deg(v) / 3$. 
Hence, in order to go below this barrier, we need to adopt a more aggressive splitting strategy. 
We split each vertex into virtual copies of degree $2$ (with one additional degree-one copy for odd-degree vertices).
Our splitting strategy comes with the clear upside that it is now very easy to compute a near-perfect split from a centralized perspective. All nodes on the same path/cycle decide on a consistent orientation of the path/cycle. This procedure clearly produces an orientation where no node of degree two is a source or a sink. 
Since the edges of the split graph are in a one-to-one correspondence with the edges of $G$, we can immediately lift the orientation back to $G$.
All even-degree vertices are thus perfectly split, while for odd-degree vertices we get an unavoidable discrepancy of one.

However, there is also a very clear downside of this approach.
Without any additional modifications, it is a very inefficient distributed algorithm.
The maximum length of a path in the split graph might be linear in the number of edges in $G$ and even the weak diameter---where distances are measured in the original graph $G$---might still be linear in the number of nodes in $G$.
Hence, consistently orienting every path and cycle in such a split graph might take up to $\Omega(n)$ rounds in \LOCAL.

Therefore, we need to construct a split graph, where each path or cycle has a small weak diameter in $G$.
We start with an arbitrary split graph of the input graph $G$.

\begin{definition}[$G_\mathrm{split}$]
    \label{def:G_split}
    We define the split graph $G_\mathrm{split} = (V_\mathrm{split}, E)$ as follows:
    Each vertex $v$ obtains a list $N(v) = [w_1,w_2,\dots,w_{\deg(v)}]$ of all its neighbors sorted in ascending order of IDs.
    Then, it creates $\lceil \deg(v) / 2 \rceil $ virtual copies and assigns the edge $(v,w_i)$ to its $\lceil i / 2 \rceil $-th virtual copy.
    Symmetrically, the node $w_i$ assigns the edge $(v,w_i)$ to its $\lceil j / 2 \rceil $-th virtual copy, if $v$ is the neighbor with $j$-th lowest ID of $w_i$.
    We denote the set of all virtual copies of $v$ by $\mathrm{split}(v)$.
    Further, for any $w \in \mathrm{split}(v)$ we call $v$ the \emph{creator} of $w$.
\end{definition}

To avoid confusion we will reserve the term \emph{vertex} to refer to vertices of $G$ and \emph{node} to refer to nodes of $G_\mathrm{split}$.
Note that the construction of $G_\mathrm{split}$ does not require coordination between the vertices of $G$ and further, one round on $G_\mathrm{split}$ can be simulated in one round of $G$ in \LOCAL.
Furthermore, when we talk about distances in the split graph, we sometimes need to refer back to the communication network $G$.

\begin{definition}[Weak diameter]
    Let $v,w \in V, v' \in \mathrm{split}(v)$ and $w' \in \mathrm{split}(w)$.
    Then, with slight abuse of notation we write $\dist_G(v',w')$ to refer to the distance between $v$ and $w$ in $G$.
    Furthermore, for any subgraph $H \subseteq G_\mathrm{split}$ we define the \emph{weak diameter} $\mathrm{diam}_G(H)$ as the diameter of the subgraph of $G$ induced by the creators of nodes in $H$.
\end{definition}

Now we want to modify $G_\mathrm{split}$ in such a way that all long paths and cycles disappear, while chopping at most $x(v) := \max \left\{1, \left\lfloor \frac{\varepsilon \deg(v)}{2} \right\rfloor \right\}$ degree-2 copies of each vertex $v \in V$.
The first goal could be achieved by computing a maximal independent set $S$ on the power graph $G_\mathrm{split}^k$.
Chopping the paths at every node $v \in S$ would produce paths of length at most $k$, but might lead to high discrepancies at some unlucky vertices.
Hence, we let each vertex split all its virtual copies into $x(v)$ buckets of roughly equal size and allow at most one vertex to be chopped in each bucket.

\begin{definition}[buckets]
    \label{def:buckets}
    Let $v$ be a vertex of degree $\deg(v)$ in $G, x(v) \in \mathbb{N}_+$, $m := \lceil \mathrm{deg}(v) / 2 \rceil $ and $[w_1,w_2,\dots,w_m]$ be a list of its virtual copies in $G_\mathrm{split}$. 
    Then, we define 
    \[
        R_i(v) := \{ w_j: \frac{i - 1}{x(v)} \cdot m \leq j <  \frac{i}{x(v)} \cdot m\}
    \] 
    as the \emph{$i$-th bucket} of $v$ for $i \in [x(v)]$.
    Further, we write $R(v) := \{ R_i(v): i \in [x(v)] \} $ and $\mathcal{R} = \bigcup_{v \in V} R(v)$.
\end{definition}

Note that each bucket satisfies $\lvert R_i(v) \rvert \in \{ \lfloor \mathrm{deg}(v) / (2x(v)) \rfloor, \lceil  \mathrm{deg}(v) / (2x(v)) \rceil \} $.

\begin{definition}[Hypergraph for HSO]
    \label{def:hypergraph}
    Let $\mathcal{P}$ be a set of pairwise disjoint subpaths of $G_\mathrm{split}$. Then we define a hypergraph $H(\mathcal{P}, \mathcal{R})$ on the vertex set $\mathcal{P}$ as follows:
    For each bucket $R_i(v) \in \mathcal{R}$ we add a hyperedge
    $e(R_i(v)) := \{ P \in \mathcal{P}: P \cap R_i(v) \neq \emptyset\} $.
\end{definition}

Recall that solving HSO efficiently requires a hypergraph $H$, where the minimum degree $\delta$ exceeds the maximum rank $r$.
The maximum rank of $H(\mathcal{P},\mathcal{R})$ is clearly bounded by the maximum size of a bucket $R \in \mathcal{R}$. 
For each bucket $R_i(v) \in \mathcal{R}$ we have that $\lvert R_i(v) \rvert \leq \left\lceil \frac{\deg(v)}{2 x(v)} \right\rceil$.
In order to prove \autoref{thm:balanced-local}, we will choose $x(v) := \max \left\{1, \left\lfloor \frac{\varepsilon \deg(v)}{2} \right\rfloor \right\}$.
Hence, we need to construct a set $\mathcal{P}$ such that every path $P \in \mathcal{P}$ intersects at least $\Omega(\varepsilon^{-1})$ buckets.

For that purpose we first let each node $v$ in $G_\mathrm{split}$ select a subpath $B(v) \ni v$ of its own component in $G_\mathrm{split}$. The subpath $B(v)$ should contain nodes from $\ell$ distinct buckets, where $\ell$ is a parameter that we can tune to satisfy the degree-rank condition.
Since these subpaths overlap a lot, we next compute a ruling set $\mathcal{P}$ on the intersection graph $G_\mathcal{B}$ of $\mathcal{B} = \{ B(v): v \in V_\mathrm{split} \}$.
This ensures that all paths in $\mathcal{P}$ are pairwise disjoint, while the remaining graph $G_\mathrm{split} \setminus \bigcup_{\mathcal{P}} P$ decomposes into a set of paths and cycles of small diameter.
Then, we can apply the deterministic HSO algorithm from \cite{BMNSU_soda25} to assign one bucket exclusively to each path in $\mathcal{P}$.
For each path $P \in \mathcal{P}$, let $\varphi(P)$ denote the bucket assigned to $P$ by the HSO procedure.
By construction of $H$, the path $P$ now contains a virtual node of the bucket $\varphi(P)$ that it can split into two virtual nodes of degree one.
Denote this new virtual graph obtained this way by $G'$.
Now, all paths and cycles in $G'$ have a small weak diameter in $G$ and we can efficiently compute a consistent orientation of all of them.
Finally, we observe that the discrepancy of any vertex $v \in V$ is at most $2 \cdot x(v) + 1$.
Each vertex $v$ is split into $x(v)$ buckets.
For each bucket, at most one virtual node is split into two degree-one nodes and all other nodes get one incoming and one outgoing edge.
Since odd-degree vertices have an additional degree-one node, this yields a total discrepancy of at most $2 \cdot x(v) + 1$.

\subsection{The Algorithm}
\label{sec:simple_algorithm}

In this section we prove the directed statement of \autoref{thm:balanced-local}.

\begin{algorithm*}[ht!]
	\caption{Directed degree splitting}
    \label{alg:directed_splits}
	\begin{algorithmic}[1]
	\Statex \textbf{Input:} A graph $G = (V,E)$, a parameter $\varepsilon > 0$.
    \Statex \textbf{Output:} An orientation of $G$ with discrepancy at most $\varepsilon \cdot \deg(v) + 2$ for all $v \in V$.
    \Statex 
    \State Construct an arbitrary split graph $G_\mathrm{split}$.
    \Comment{\autoref{def:G_split}}
    \State Set the parameters $x(v) := \max \left\{1, \left\lfloor \frac{\varepsilon \deg(v)}{2} \right\rfloor \right\}$ and $\ell := \lceil 8 \cdot \varepsilon^{-1} \rceil $.
    \State Divide virtual nodes into buckets $R_1(v),\dots,R_{x(v)}(v)$ for each $v \in V$.
    \Comment{\autoref{def:buckets}}
    \State Each node $v \in V_\mathrm{split}$ computes a local voting block $B(v)$.
    \Comment{\autoref{lem:local_voting_block}}
    \State $\mathcal{B} \gets \{ B(v): v \in V_\mathrm{split}\}, \mathcal{R} \gets \{ R_i(v): i \in [x(v)], v \in V \}$.
    \State Compute a subset $\mathcal{P} \subseteq \mathcal{B}$ of node-disjoint voting blocks.
    \Comment{\autoref{lem:ruling_set_voting_blocks}}
    \State Set up the hypergraph $H = H(\mathcal{P},\mathcal{R})$ between voting blocks $\mathcal{P}$ and buckets $\mathcal{R}$.
    \Comment{\autoref{def:hypergraph}}
    \State Compute a hypergraph sinkless orientation $\varphi: \mathcal{P} \to \mathcal{R}$ on $H$.
    \Comment{\autoref{thm:hso}}
    \State Every voting block $P$ splits exactly one node in $\varphi(P)$ into two degree-one nodes.
    \State Orient each path and cycle in this new graph consistently.
	\end{algorithmic}
\end{algorithm*}

\begin{definition}
    We define the \emph{true length} $\ell^\star(C)$ of a connected component $C$ in $G_\mathrm{split}$ as the number of distinct creators of nodes in $C$.
\end{definition}

Note that the weak diameter of a component always lower bounds its true length.

\begin{definition}[voting block]
    A \emph{voting block} of size $\ell$ is a connected subgraph $B$ of $G_\mathrm{split}$ of true length exactly $\ell$ and weak diameter at most $\ell$.
\end{definition}

After constructing $G_\mathrm{split}$, each virtual node $v$ tries to collect a voting block of size $\ell$.
Crucially, if such a voting block does not exist, then the path or cycle containing $v$ has to be short.

\begin{lemma}
    \label{lem:local_voting_block}
    Let $v \in V_\mathrm{split}$ and $C$ be the path or cycle in $G_\mathrm{split}$ that contains $v$. If the true length of $C$ is at least $\ell$, then there is a voting block $B(v) \subseteq C$ of size $\ell$ that contains $v$.
\end{lemma}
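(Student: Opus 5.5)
We build the voting block greedily: start from $\{v\}$ and grow a connected subpath of $C$ one node at a time, each time adding a neighbour of the current subpath along $C$. We stop as soon as the number of distinct creators reaches exactly $\ell$. It is convenient to write $\ell^\star(B)$ for the number of distinct creators of a connected subgraph $B$, extending the notion of true length to arbitrary connected subgraphs.

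\textbf{Step 1: the true length can be hit exactly.} Adding one node increases $\ell^\star$ by at most one, since a node has only one creator. It starts at $1$ for $B=\{v\}$. If the whole of $C$ is absorbed, it ends at $\ell^\star(C)\ge \ell$. By this discrete intermediate-value argument, some intermediate connected subgraph $B(v)\ni v$ has $\ell^\star(B(v))=\ell$ exactly. We take the first one.

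We must also check that $B(v)$ is a subpath of $C$, or all of $C$. When $C$ is a cycle, the growth may wrap around. If the cycle closes before we reach $\ell$ creators, then $\ell^\star(C)<\ell$, contradicting the hypothesis. So $B(v)$ is a path. It is a proper subpath of $C$, unless $C$ itself has exactly $\ell$ creators, in which case $C$ is the block.

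\textbf{Step 2: bound the weak diameter by $\ell$.} Let $U$ be the set of creators of nodes in $B(v)$, so $|U|=\ell$. Consecutive nodes of $B(v)$ are joined by an edge of $G_\mathrm{split}$. Each such edge is an edge of $G$ between their creators. Hence the walk $B(v)$ projects to a walk in $G[U]$ that visits all of $U$, which shows that $G[U]$ is connected. A connected graph on $\ell$ vertices has diameter at most $\ell-1\le \ell$. Thus $B(v)$ is a voting block of size $\ell$.

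I expect the only real obstacle to be the bookkeeping in Step 1. We need exact equality $\ell^\star=\ell$ rather than merely at least $\ell$, and we must handle cycles correctly. Both are settled by the unit-increment observation together with the hypothesis $\ell^\star(C)\ge\ell$.
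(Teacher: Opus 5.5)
Your proof is correct and follows essentially the paper's argument: grow a connected subpath of $C$ around $v$ until it has $\ell$ distinct creators, then bound the weak diameter by noting that the creators induce a connected subgraph of $G$ on $\ell$ vertices. The only difference is cosmetic. The paper adds both neighbours per step and removes one endpoint if it overshoots to $\ell+1$, whereas your one-node-at-a-time growth hits $\ell$ exactly. Your side remark that $C$ is the block whenever $\ell^\star(C)=\ell$ is slightly off, since the first hitting time can occur before all of $C$ is absorbed, but this is harmless.
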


\begin{proof}
    We iteratively grow a voting block containing $v$ as follows:
    Define $B_0(v) := \{ v \} $ and for $i > 0$, add both neighbors of $B_i(v)$ in $C$ to $B_{i+1}(v)$. 
    Let $k \in \mathbb{N}$ be the smallest integer such that the true length of $B_k(v)$ is at least $\ell$.
    Note that $k$ always exists, since the true length of $C$ is at least $\ell$.
    Now observe that the creators of nodes in $B_i(v)$ induce a connected subgraph in $G$. Since the diameter of a graph is always a lower bound on its number of nodes, we get that the weak diameter of $B_k(v)$ is at most $\ell$.
    Since we add at most two nodes to $B_i(v)$ in each iteration, the true length of $B_k(v)$ is either $\ell$ or $\ell + 1$.
    In the latter case, we simply remove one of the two nodes added in the $k$-th iteration to obtain a voting block of size $\ell$.
\end{proof}

\begin{observation}
    \label{obs:true_length}
    The number of nodes in a voting block $B$ of size $\ell$ is bounded by $\lvert B \rvert \leq \ell^2$.
\end{observation}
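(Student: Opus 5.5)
The bound comes from counting how many virtual nodes a single creator can contribute to $B$, and then summing over the creators. By definition, a voting block $B$ of size $\ell$ has true length exactly $\ell$, so its nodes come from exactly $\ell$ distinct creators $v_1,\dots,v_\ell \in V$. Hence
\[
\lvert B \rvert = \sum_{j=1}^{\ell} \lvert B \cap \mathrm{split}(v_j) \rvert,
\]
and it suffices to show that every creator contributes at most $\ell$ nodes to $B$.

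To bound one creator's contribution, I would use the structure of $B$. It is a connected subgraph of a path or cycle of $G_\mathrm{split}$, so it is itself a path or a cycle $u_1, u_2, \dots$. Consecutive nodes $u_t, u_{t+1}$ are joined by an edge of $G_\mathrm{split}$, which is an edge of $G$. Since $G$ is simple, their creators are adjacent in $G$ and in particular distinct.

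Now take two consecutive occurrences of copies of the same creator $v$ along the walk, say $u_s$ and $u_t$ with $s<t$. The nodes strictly between them, $u_{s+1},\dots,u_{t-1}$, form a non-empty segment whose creators differ from $v$. The step I expect to be the main obstacle is charging each such gap to a distinct creator other than $v$. The first thing I would try is the observation that each gap contains at least one node, and that node uses one of the edges incident to the copies of $v$. Each copy of $v$ has degree at most $2$ and is paired with two distinct neighbors $w_{2i-1}, w_{2i}$ of $v$ in $G$. Since $G$ is simple, every neighbor $w$ of $v$ is joined to $v$ by exactly one edge $(v,w)$, which is assigned to exactly one copy of $v$. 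So the neighbor $u_{s+1}$ of $u_s$ along the block is created by some $w \in N(v)$, and the edge $(v,w)$ is used only at that occurrence of $v$. Therefore each occurrence of $v$ in $B$ consumes, as its forward successor, a creator $w \in N(v)$ that no other occurrence of $v$ uses. This gives an injection from the occurrences of $v$ (except possibly the last one, in the path case) into the set of creators of $B$ other than $v$, which has size $\ell-1$. It follows that $\lvert B \cap \mathrm{split}(v)\rvert \le (\ell-1)+1 = \ell$.

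Summing over the $\ell$ creators gives $\lvert B\rvert \le \ell\cdot \ell = \ell^2$. If a cleaner argument is needed, the same injection can be phrased edge-wise. Every edge of $B$ joins two distinct creators among the $\ell$, and a simple graph on $\ell$ vertices has at most $\binom{\ell}{2}$ edges, so $\lvert B \rvert \le \binom{\ell}{2}+1 \le \ell^2$. In the cycle case the bound is $\lvert B\rvert \le \binom{\ell}{2}$, which is smaller still.
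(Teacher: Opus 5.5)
Your proof is correct, and your closing ``edge-wise'' paragraph is exactly the paper's proof. The paper notes that every edge of $G_\mathrm{split}[B]$ corresponds to a distinct edge of $G[V(B)]$, where $V(B)$ is the set of the $\ell$ creators. Hence $B$ is a connected graph with at most $\binom{\ell}{2} \le \ell^2-1$ edges, and therefore has at most $\ell^2$ nodes.

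Your main argument takes a different, per-creator route. You bound the number of copies of each creator $v$ in $B$ by $\ell$: the occurrences of $v$ that have a forward successor inject into $V(B)\setminus\{v\}$, and at most one occurrence lacks a successor. You then multiply by the $\ell$ creators.

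Both arguments rest on the same two facts: edges of $G_\mathrm{split}$ are in bijection with edges of $G$, and $G$ is simple, so distinct $G$-edges at $v$ lead to distinct neighbors. The paper applies this globally, which is shorter and gives the sharper bound $\binom{\ell}{2}+1$, roughly $\ell^2/2$. Your version additionally shows that no vertex of $G$ hosts more than $\ell$ nodes of a single voting block, at the cost of that factor of $2$.

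Your per-creator injection is just the paper's edge injection restricted to the edges leaving copies of $v$. So you could drop the first argument and keep only your last paragraph.
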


\begin{proof}
    Let $V(B)$ denote the set of creators of nodes in $B$.
    Note that every edge in $G_{\mathrm{split}}[B]$ corresponds to an edge in the vertex-induced subgraph $G[V(B)]$.
    Since the true length of $B$ is $\ell$, the graph $G[V(B)]$ has at most $\ell$ vertices and thus contains at most $\binom{\ell}{2} \leq \ell^2 - 1$ edges.
    Hence, $G_{\mathrm{split}}[B]$ is a connected graph with at most $\ell^2 - 1$ edges, which implies that it can contain at most $\ell^2$ nodes.
\end{proof}

In general, the local voting blocks computed by the virtual nodes will overlap a lot.
This is problematic, since we need to bound the number of times that a single bucket is requested by a voting block.
Hence, we want to select a sparse subset $\mathcal{P}$ of non-overlapping voting blocks.
On the other hand, we want the subset $\mathcal{P}$ to be well spread out, in the sense that every node in $G_\mathrm{split}$ is still close to a voting block in $\mathcal{P}$.
The right tool for this job is to compute a ruling set on the intersection graph $G_\mathcal{B}$ of the set of all local voting blocks $\mathcal{B} = \{ B(v): v \in V_\mathrm{split} \} $.

\begin{lemma}
    \label{lem:ruling_set_voting_blocks}
    There is a deterministic distributed algorithm that computes a set $\mathcal{P}$ of pairwise disjoint paths in $G_\mathrm{split}$ in $\bigO(\ell \cdot (\log \ell + \log^\ast n))$ rounds of \LOCAL such that
    \begin{itemize}
        \item each path $P \in \mathcal{P}$ is a voting block of size at least $\ell$ and weak diameter at most $\ell$ and
        \item every component of $G_{\mathrm{split}} \setminus \bigcup_{\mathcal{P}} P$ has weak diameter less than $\ell$.
    \end{itemize}
\end{lemma}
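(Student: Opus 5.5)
Only components $C$ of $G_\mathrm{split}$ with true length at least $\ell$ matter. A component of true length less than $\ell$ has fewer than $\ell$ distinct creators, so its creators induce a connected subgraph of $G$ on fewer than $\ell$ vertices, and its weak diameter is less than $\ell$. For every node $v$ in a long component, \autoref{lem:local_voting_block} gives a voting block $B(v)\subseteq C$ containing $v$. By growing the block symmetrically, $v$ finds $B(v)$ in $\bigO(\ell)$ rounds of $G$: one round on $G_\mathrm{split}$ is simulated by one round of $G$, and each step of the growth adds at most two nodes, so $B(v)$ lies within distance $\ell$ of $v$ in $G$. Let $\mathcal{B}$ be the set of these blocks and let $G_\mathcal{B}$ be their intersection graph, with $B(u)\sim B(v)$ iff the blocks share a node. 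Nodes whose blocks coincide can be identified, or the duplicates can simply be kept; either is harmless.

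\textbf{Bounding the degree and simulation cost of $G_\mathcal{B}$.} By \autoref{obs:true_length}, $|B(v)|\le \ell^2$. If $B(u)\cap B(v)\neq\emptyset$, then $u$ lies within $\bigO(\ell^2)$ hops of $v$ along $C$. Since $C$ is a path or cycle, the number of such $u$ is $\bigO(\ell^2)$, so $\Delta(G_\mathcal{B})=\poly(\ell)$. A weak-diameter bound is also available. Both blocks have weak diameter at most $\ell$ and they share a node, so any $w\in B(u)$ is within $G$-distance $2\ell$ of $v$'s creator. One round of $G_\mathcal{B}$ can therefore be simulated in $\bigO(\ell)$ rounds of $G$, with the block of $v$ hosted at $v$. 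The IDs of the virtual nodes are (creator ID, copy index), which is polynomial in $n$. Applying \autoref{thm:det_ruling_set} to $G_\mathcal{B}$ yields a $(2,\beta)$-ruling set with $\beta=\bigO(\log\ell)$ in $\bigO(\log \ell+\log^* n)$ rounds of $G_\mathcal{B}$, hence $\bigO(\ell(\log\ell+\log^*n))$ rounds of $G$. Let $\mathcal{P}$ be this ruling set. Independence in $G_\mathcal{B}$ means the blocks in $\mathcal{P}$ are pairwise disjoint, and each is a path because it is a connected subgraph of a path or cycle. A cycle-shaped block would be all of $C$; then no other block of $C$ is disjoint from it, and the argument below still applies. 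Each member of $\mathcal{P}$ is a voting block of size $\ell$ and weak diameter at most $\ell$ by definition.

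\textbf{The second property is the main point.} Let $D$ be a component of $G_\mathrm{split}\setminus\bigcup_\mathcal{P}P$, contained in a long component $C$. I claim that $D$ contains no full block $B(v)$. Indeed, $D$ is disjoint from every block in $\mathcal{P}$, so a block $B(v)\subseteq D$ would have no neighbour in $\mathcal{P}$ within $G_\mathcal{B}$, contradicting domination; dominating at distance $\beta$ still requires some neighbour chain. This step needs care, since a ruling set with $\beta>1$ only dominates within distance $\beta$. To get domination at distance one, I would augment: after computing the ruling set, greedily extend it to a maximal independent set of $G_\mathcal{B}$. Each unruled block is within distance $\beta$ of the set, so processing the blocks by their distance $\beta,\beta-1,\dots$ together with a $\poly(\ell)$-coloring of $G_\mathcal{B}$ (Linial, $\bigO(\log^*n)$ rounds) gives an MIS in $\bigO(\beta\cdot\poly\ell)$ rounds of $G_\mathcal{B}$. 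That is too slow, so the better route is to use the structure directly. The blocks are intervals of the path or cycle $C$, so each vertex of $G_\mathcal{B}$ has an interval representation. On an interval structure, the gaps of length more than $\beta\cdot\ell^2$ between consecutive chosen blocks can be filled sequentially along $C$ in $\bigO(\beta)$ greedy phases of length $\bigO(\ell)$, each phase adding blocks disjoint from the current set. Once every block $B(v)$ intersects some $P\in\mathcal{P}$, consider any $v\in D$. The block $B(v)$ contains $v$ and is connected in $C$, so it cannot lie in $D$ and must meet some $P$. Hence $B(v)\cap D$ is a proper subpath of $B(v)$ with true length less than $\ell$: the true length of $B(v)$ is exactly $\ell$ and it loses at least one node to $P$, or, if the removed nodes only repeat creators, the argument instead bounds $D$ by the growth procedure from \autoref{lem:local_voting_block}. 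More simply, if $D$ had true length at least $\ell$, then \autoref{lem:local_voting_block} applied inside $D$ would yield a block contained in $D$; by uniqueness of the deterministic growth this block is $B(v)$, a contradiction. So $D$ has true length less than $\ell$ and therefore weak diameter less than $\ell$. The growth procedure depends only on $v$'s position in $C$, so the block it produces inside $D$ coincides with the one produced in $C$, provided the growth never reaches the endpoints of $D$ before reaching true length $\ell$. I expect this consistency point, together with obtaining distance-one domination within the stated round budget, to be the main obstacle.
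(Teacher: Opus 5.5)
\textbf{There is a genuine gap: the second property is not established.} Your first property and the ruling-set phase match the paper. You bound $\Delta(G_\mathcal{B})=\bigO(\ell^2)$ via \autoref{obs:true_length}, use $\bigO(\ell)$ simulation overhead, and apply \autoref{thm:det_ruling_set} with $\beta=\bigO(\log\ell)$. The second property is where it breaks, and you leave it open yourself. You plan to upgrade $\mathcal{P}$ to a maximal independent set of $G_\mathcal{B}$. However, the ``gap filling in $\bigO(\beta)$ greedy phases'' is asserted but never specified or analyzed. The closing step, ``by uniqueness of the deterministic growth this block is $B(v)$'', fails for a general $v\in D$. The reason is that $B(v)$ is grown symmetrically in the whole component $C$. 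For $v$ near one end of $D$, that growth leaves $D$ early and meets a block of $\mathcal{P}$, however long $D$ is on the other side. So neither the round bound for the augmentation nor the bound on the residual components is proved.

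\textbf{The missing idea is that the ruling set alone already makes every residual component short, so distance-one domination is unnecessary.} Fix $v\in D$ and take a path $B(v)=B_0,B_1,\dots,B_t$ in $G_\mathcal{B}$ with $B_t\in\mathcal{P}$ and $t\le\beta$. The union of these blocks is a connected piece of $C$ that contains $v$ and a node outside $D$. Hence it contains the subpath of $D$ from $v$ to one end of $D$, and that subpath has true length at most $(\beta+1)\ell$. Now take the last node of $D$ whose chain exits on one side, and its successor, whose chain exits on the other side. This gives $\ell^\star(D)\le 2(\beta+1)\ell=\bigO(\ell\log\ell)$.

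Each residual component can therefore be gathered in $\bigO(\ell\log\ell)$ rounds of $G$ and finished centrally. While some residual piece has weak diameter at least $\ell$ (hence true length above $\ell$), apply \autoref{lem:local_voting_block} to that piece itself and add the resulting voting block to $\mathcal{P}$. These new blocks are grown inside the residual pieces and need not belong to $\mathcal{B}$. Disjointness is then automatic, and the consistency issue between growth in $D$ and growth in $C$ never arises. This is exactly the paper's second phase.

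Your MIS route could also be completed along these lines. Once residual components are known to be short, the MIS extension is a local computation. The consistency problem disappears if you take $v$ to be a middle node of $D$, or the endpoint of $C$ when $D$ is a tail: its ball stays inside $D$ until it covers all of $D$ except possibly one end node. Note that maximality then only gives $\ell^\star(D)\le\ell$, not $\ell^\star(D)<\ell$ as you claimed. That still suffices, because $\ell$ connected creators induce a subgraph of diameter at most $\ell-1$.
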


\begin{proof}
    As a first step, we use the ruling set algorithm of \autoref{thm:det_ruling_set} to compute a subset $\mathcal{P} \subseteq \mathcal{B}$ of pairwise disjoint voting blocks.
    Next, we observe that two voting blocks $B(v), B(w) \in \mathcal{B}$ can only intersect if $v$ and $w$ lie in the same connected component $C$ of $G_\mathrm{split}$. Due to \autoref{obs:true_length} we know that each voting block in $\mathcal{B}$ contains at most $\ell^2$ nodes. Therefore, if the distance between $v$ and $w$ is at least $2 \ell^2$, their respective voting blocks cannot intersect.
    This shows that the maximum degree $\Delta_\mathcal{B}$ of the intersection graph $G_\mathcal{B}$ is bounded by $4\ell^2$.
    Since $\mathcal{P}$ is a $\bigO(\log \ell)$-ruling set on $G_\mathcal{B}$, each component of $G_\mathrm{split} \setminus \bigcup_{\mathcal{P}} P$ has weak diameter at most $\bigO(\ell \cdot \log \ell)$.
    As long as there exists a component $C$ with weak diameter at least $\ell$, we can always add a voting block $B \subseteq C$ of size at least $\ell$ and weak diameter at most $\ell$ to $\mathcal{P}$ due to \autoref{lem:local_voting_block}.
    After this procedure terminates, the set $\mathcal{P}$ fulfils both criteria.
    The ruling set algorithm runs in $\bigO(\log \ell + \log^\ast n)$ rounds on $G_\mathcal{B}$, while simulating a round on $G_\mathcal{B}$ incurs a multiplicative $\bigO(\ell)$ overhead. The second part of the algorithm only operates on components of weak diameter $\mathcal{O}(\ell \cdot \log \ell)$ and thus takes at most $\bigO(\ell \cdot \log \ell)$ rounds on $G$.
\end{proof}

Each such voting block now competes against the other blocks to win the coveted prize of being the only block, which is allowed to split one vertex from a certain bucket.
Formally, we consider the hypergraph $H = H(\mathcal{P},\mathcal{R})$ described in \autoref{def:hypergraph}. 

\begin{lemma}[HSO Condition]
    \label{lem:hso_condition}
    The minimum degree $\delta$ and the maximum rank $r$ of $H$ satisfy $\delta \geq 2r$.
\end{lemma}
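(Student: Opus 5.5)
The plan is to bound the two quantities separately: an upper bound on $r$ in terms of $\varepsilon^{-1}$, and a lower bound on $\delta$ in terms of $\ell = \lceil 8\varepsilon^{-1}\rceil$. The inequality $\delta \geq 2r$ then reduces to comparing two explicit expressions. Throughout I assume $\varepsilon \le 2$. For $\varepsilon > 2$ the discrepancy guarantee of \autoref{thm:balanced-local} is trivial anyway, since every discrepancy is at most $\deg(v)$.

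\emph{Rank.} By \autoref{def:hypergraph}, each hyperedge $e(R_i(v))$ consists of the blocks $P \in \mathcal{P}$ that meet $R_i(v)$. The blocks in $\mathcal{P}$ are pairwise disjoint by \autoref{lem:ruling_set_voting_blocks}, so each node of $R_i(v)$ lies in at most one block. Hence $r(e(R_i(v))) \le |R_i(v)| \le \lceil \deg(v)/(2x(v)) \rceil$. I distinguish two cases according to $x(v) = \max\{1, \lfloor \varepsilon \deg(v)/2\rfloor\}$.
\begin{itemize}
\item If $\varepsilon\deg(v)/2 \ge 1$, then $\lfloor y \rfloor \ge y/2$ for $y \ge 1$ gives $x(v) \ge \varepsilon \deg(v)/4$, so $\deg(v)/(2x(v)) \le 2\varepsilon^{-1}$.
\item If $\varepsilon\deg(v)/2 < 1$, then $x(v) = 1$ and $\deg(v)/2 < \varepsilon^{-1}$.
\end{itemize}
In both cases $r \le \lceil 2\varepsilon^{-1} \rceil \le 2\varepsilon^{-1} + 1$.

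\emph{Minimum degree.} Fix $P \in \mathcal{P}$. By \autoref{lem:ruling_set_voting_blocks}, $P$ is a voting block of size at least $\ell$, i.e.\ its nodes have at least $\ell$ distinct creators. For each creator $u$, some node of $P$ lies in $\mathrm{split}(u)$, and each virtual copy of $u$ belongs to one of the buckets $R_i(u)$. Since the buckets of $u$ partition its copies, $P$ meets at least one bucket of $u$. Buckets of distinct creators are distinct hyperedges, so $P$ lies in at least $\ell$ distinct hyperedges. (Several nodes of $P$ may lie in the same bucket, but this only matters for multiplicities and does not decrease the count.) Thus $\delta \ge \ell = \lceil 8\varepsilon^{-1}\rceil \ge 8\varepsilon^{-1}$.

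\emph{Comparison.} It remains to check $8\varepsilon^{-1} \ge 2(2\varepsilon^{-1}+1) = 4\varepsilon^{-1} + 2$, which is equivalent to $\varepsilon \le 2$. So $\delta \ge 2r$.

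The only delicate points are the rounding in the rank bound, including the case $x(v)=1$, and making sure that distinct creators indeed give distinct hyperedges. I expect the rounding to be the main place where care is needed; everything else is bookkeeping.
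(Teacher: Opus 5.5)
Your proof is correct and follows the paper's argument. You bound the rank by the bucket size $\lceil \deg(v)/(2x(v))\rceil \le \lceil 2\varepsilon^{-1}\rceil$, and you bound the degree by noting that each block meets buckets of at least $\ell$ distinct creators. The differences are cosmetic: you make the disjointness of blocks and the case $x(v)=1$ explicit, and you use $\lceil 2\varepsilon^{-1}\rceil \le 2\varepsilon^{-1}+1$ (needing $\varepsilon\le 2$) where the paper uses $\lceil 2\varepsilon^{-1}\rceil \le 4\varepsilon^{-1}$, which likewise tacitly assumes $\varepsilon$ is not too large.
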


\begin{proof}
    Let $v \in V$ and $R_i(v) \in \mathcal{R}$ be a bucket of virtual nodes.
    The rank of $R_i(v)$ can be upper bounded by the number of virtual nodes in $R_i(v)$, which by construction (\autoref{def:buckets}) is at most
    \[
        \ceil*{ \frac{\deg(v)}{2x(v)} }
        \leq \ceil*{ \frac{\deg(v)}{2 \max \{1, \floor{ \varepsilon \cdot \deg(v) / 2 } \} } } \leq \ceil{ 2 \varepsilon^{-1} } \leq 4 \varepsilon^{-1}.
    \]
    On the other hand, any voting block $P \in \mathcal{P}$ contains nodes from $\ell = \lceil 8 \varepsilon^{-1} \rceil$ distinct buckets. Thus, $\delta \geq \ell = \lceil 8 \varepsilon^{-1} \rceil \geq 2r$.
\end{proof}

Now we are ready to apply the deterministic HSO algorithm from \cite{BMNSU_soda25} to prove the directed statement in \autoref{thm:balanced-local}.

\begin{proof}[Proof of \autoref{thm:balanced-local} (directed)]
    We run the deterministic HSO algorithm (\autoref{thm:hso}) on $H$. 
    According to \autoref{lem:hso_condition}, it holds that $\delta \geq 2r$ and therefore the algorithm runs in $\bigO(\log n)$ rounds of \LOCAL.
    Recall that the edges of $H$ represent buckets of virtual nodes in $\mathcal{R}$ and the vertices of $H$ represent voting blocks in $\mathcal{P}$.
    In a sinkless orientation of the hypergraph $H$, each voting block has at least one outgoing edge, whereas a hyperedge is outgoing for exactly one of its incident vertices.
    In other words, the algorithm produces an injective function $\varphi: \mathcal{P} \to \mathcal{R}$.

    For any voting block $P \in \mathcal{P}$, there exists a node $v_P \in P \cap \varphi(P)$. Let $G'$ be the graph obtained by splitting all nodes $v_P$ for $P \in \mathcal{P}$ into two degree-one copies. Now we claim that each component $C$ in $G'$ has weak diameter at most $4 \cdot \ell$ in $G$.
    Let $v \in C$ be an arbitrary node. 
    If $C$ does not contain any block $P$, then by \autoref{lem:local_voting_block} its weak diameter must be less than $\ell$.
    Otherwise, if there is a block $P \in \mathcal{P}$ that contains $v$, then $\dist_G(v,v_P) \leq \ell$.
    Otherwise, there must be a block $P' \in \mathcal{P}$ such that $\dist(v,P') \leq \ell$ and thus $\dist(v,v_{P'}) \leq 2\ell$.
    Since the distance from any node in $C$ to an endpoint of $C$ is at most $2\ell$ in $G$, the weak diameter of $C$ has to be bounded by $4\ell$.

    Now we orient all paths and cycles in $C$ consistently.
    Hence, every node of degree 2 in $G'$ has one incoming and outgoing edge.
    For each vertex $v$ and each bucket $R_i(v)$ there is at most one virtual node that gets split into two degree-1 nodes in $G'$. 
    Hence, the discrepancy at $v$ is at most $2x(v)$ for vertices of even degree and $2 x(v) + 1$ for odd degree. Since $x(v) = \max \left\{1, \left\lfloor \frac{\varepsilon \deg(v)}{2} \right\rfloor \right\}$, this leads to a discrepancy of at most $\varepsilon \cdot \deg(v) + 1$ for $\varepsilon \cdot \deg(v) \geq 1$.
    Since we need to set $x \geq 1$ for our algorithm to work, we can achieve a discrepancy of $2$ for even-degree vertices and $3$ for odd-degree vertices at best.
    
    To reduce the discrepancy for odd-degree vertices further, we construct another virtual graph.
    Each vertex of discrepancy $3$ merges its three degree-one copies in $G'$ into a single node of degree $3$.
    The remaining nodes in $G'$ remain unchanged.
    This graph now contains virtual nodes with degrees between one and three.
    We now want to modify our orientation such that no node of degree two or three is a sink or a source.
    We claim that this can be done in $\bigO(\ell \cdot \log n)$ via a reduction to sinkless orientation.
    We contract all paths of degree-two nodes into a single edge and call the resulting graph $G''$.
    Since every component of $G''$ has weak diameter of $\bigO(\ell)$, a communication round on $G''$ can be simulated in $\bigO(\ell)$ rounds of $G$.
    Since $G''$ has minimum degree 3 now, we can run \autoref{lem:sink-sourceless} on $G''$ to compute a sinkless orientation on $G''$ in $\bigO(\log n)$ rounds on $G''$.
    Finally, we can lift the orientation back to the original graph $G$ and obtain a discrepancy of at most $\varepsilon \cdot \deg(v) + 1$ for all vertices of odd degree.

    Constructing the hypergraph $H$ takes $\bigO(\ell \cdot (\log \ell + \log^\ast n))$ rounds on $G$ according to \autoref{lem:ruling_set_voting_blocks}.
    Solving HSO on $H$ takes $\bigO(\log n)$ rounds using \autoref{thm:hso}, while simulating a round on $H$ takes $\bigO(\ell)$ rounds.
    Consistently orienting all paths and cycles takes just $\bigO(\ell)$ rounds, since the weak diameter of any component in $G''$ is bounded by $4 \cdot \ell$.
    Finally, computing a sinkless orientation of $G''$ takes $\bigO(\log n)$ rounds, while a round on $G''$ can be simulated in $\bigO(\ell)$ rounds on $G$.
    Since $\ell = \lceil 8 \cdot \varepsilon^{-1} \rceil $, this yields a total runtime of $\bigO(\ell \cdot (\log \ell + \log n)) = \bigO(\varepsilon^{-1} \cdot \log n)$.
\end{proof}

\subsection{Randomized Algorithm}

In order to obtain a faster randomized version of our algorithm, we make three crucial changes:
Firstly, we replace the deterministic ruling set algorithm by a faster randomized algorithm.
Secondly, we replace the deterministic sinkless orientation algorithm by its randomized counterpart.
Finally, we replace the deterministic HSO algorithm by its faster randomized counterpart.
The randomized HSO algorithm in \cite{BMNSU_soda25} requires that $\delta \geq 320 r \log r$ (see \autoref{thm:hso-rand}). 
In order to optimize the runtime of our randomized algorithm, we extend the set of instances that are solvable in $\mathcal{O}(\log \log n)$ time by proving a tighter degree-rank condition of $\delta \geq 10^{\log^{\ast} r} \cdot r$.

We obtain this result through a recursive \emph{shattering} strategy. At a high level, we downsample a constant fraction of all hyperedges, that is, we replace the sets of nodes of some hyperedges with small random subsets of their original sets.
We do so in a manner that ensures that (1) for each node $v$, a constant fraction of the edges incident to $v$ are not downsampled, (2) the hypergraph induced by downsampled hyperedges has much lower rank than the original hypergraph, and (3) in this new hypergraph, most nodes have a degree that exceeds the rank as in the original graph, so the argument can be applied recursively to ultimately obtain an instance of Sinkless Orientation. The edges that are not downsampled are used to satisfy vertices which were not successful in the downsampling.

Each iteration of our shattering procedure is essentially a generalization of the shattering procedure used by the state-of-the-art randomized LOCAL algorithm by Ghaffari and Su for Sinkless Orientation~\cite{GS_soda17}.

\betterHSOthm*

\begin{proof}
    Consider a hypergraph $H = (V,E)$ with $\delta \geq 2 x \cdot 10^{\log^{\ast} r} \cdot r$. Without loss of generality we assume that $H$ is $\delta$-regular with $\delta = 2 x \cdot 10^{\log^{\ast} r} \cdot r$. This can be achieved by letting each vertex drop out of all but $2x \cdot 10^{\log^{\ast} r} \cdot r$ of its incident hyperedges. Henceforth, we will assume all hypergraphs appearing in this proof to be regular, since we may drop edges at any step of the recursion.
    Next, we note that if $\delta \geq 2 \log n \cdot r$, then the $0$-round algorithm where every hyperedge orients itself uniformly at random succeeds with high probability, as the probability that a vertex does not receive an outgoing hyperedge is at most $(1 - \frac{1}{r})^{\delta} \leq \exp(- \delta / r) \leq 1/n^2$.
    Thus, for the remainder of the proof we assume $\delta = \mathcal{O}(\log n \cdot r)$.
    As another preprocessing step, we let each hyperedge $e \in E$ sample a subset $e' \subseteq e$ of vertices of size $\Theta(\log n)$ to keep. 
    We show that this initial subsampling only loses a constant factor in the degree-rank ratio with high probability.

    \proofsubparagraph{Initial subsampling.}

    We let each hyperedge $e \in E$ of rank $r(e) > 16 \log n$ select a subset $e' \subseteq e$ of nodes of size exactly $16 \log n$ uniformly at random.
    For every edge $e$ of rank $r(e) \leq 16 \log n$ we set $e' := e$.
    Now we consider the hypergraph $H' = (V, E' := \{ e': e \in E \} )$.
    The maximum rank $r(H')$ is now bounded by $16 \log n$.
    Further, for any edge $e \in E$, the probability that a vertex $v \in e$ remains in the subsampled edge $e'$ is at least $\min\{1, 16 \log n / r(e) \} \geq 16 \log n/ r$.
    Therefore, the expected degree of any vertex $v$ is at least $16 \log n / r \cdot \delta$.
    Since the random choices of all hyperedges are mutually independent, we may 
    apply a standard Chernoff bound (\autoref{lem:chernoff}) to get that
    \[
        \mathbb{P}\left[\mathrm{deg}_{H'}(v) \leq \frac{1}{2} \cdot \mathbb{E}[\mathrm{deg}_{H'}(v)]\right] \leq \exp\left( - \frac{16 \log n }{8} \cdot \frac{\delta}{r}\right) \leq n^{-2}
    \]
    for all $v \in V$.
    Together with a union bound we get that the minimum degree $\delta(H')$ of $H'$ satisfies 
    \[
        \delta(H') \geq 8 \log n \cdot \frac{\delta}{r} \geq \frac{r(H')}{2} \cdot 2x \cdot 10^{\log^{\ast} r} \geq r(H') \cdot x \cdot 10^{\log^{\ast} r(H')}
    \]
    with high probability.

    Now we want to apply the subsampling process recursively.
    However, for sublogarithmic $r$ we do not get high probability guarantees on the concentration of vertex degrees anymore.
    Instead, we will show that the nodes where the process fails form small connected components, thus shattering the graph.
    We set $H_1 = H'$ and describe a random process that takes any hypergraph $H_i = (V_i, E_i)$ satisfying the degree-rank condition \eqref{eq:degree-rank-condition} and produces a subsampled instance $H_{i+1}^{\mathrm{sub}}$ of maximum rank $r(H_{i+1}^{\mathrm{sub}}) = \log r(H_{i})$ of successful vertices satisfying \eqref{eq:degree-rank-condition}, as well as a subgraph $H_{i+1}^{\mathrm{bad}}$ of failed vertices. 
    The first two steps of the random process described below ensure that the hypergraph $H_{i+1}^{\mathrm{bad}}$ of failed nodes can still be solved by the deterministic HSO algorithm of \cite{BMNSU_soda25}.

    \proofsubparagraph{Pre-shattering random process.}
    Let $H_i = (V_i,E_i)$ be a hypergraph satisfying 
    \begin{equation} \label{eq:degree-rank-condition}
        \delta(H_i) \geq x \cdot 10^{\log^{\ast} r(H_i)} \cdot r(H_i)
    \end{equation}
    and $\log^{\ast} r(H_i) \geq 5$.
    \begin{enumerate}
        \item Mark each hyperedge $e \in E_i$ independently with probability $3/4$.
        \item For any vertex with more than $7/8 \cdot \delta(H_i)$ or less than $1/2 \cdot \delta(H_i)$ marked incident edges, unmark all incident edges.
        \item For each remaining marked edge $e$, select a subset $e_{\mathrm{sub}} \subseteq e$ of nodes of size $\log r(H_i)$ uniformly at random.
    \end{enumerate}
    Now we define $E^\star$ as the set of unmarked edges, $E_{\mathrm{sub}} := \{ e_{\mathrm{sub}}:  e \in E_i \setminus E^\star \}$ and set
    \begin{itemize}
        \item $\mathrm{Bad}_1$ as the set of nodes where at least one of their incident hyperedges got unmarked in Step 2 and
        \item  $\mathrm{Bad}_2$ as the set of nodes not in $\mathrm{Bad}_1$ with less than $1 / 4 \cdot \delta(H_i) \cdot \log r(H_i) / r(H_i)$ incident edges in $E_\mathrm{sub}$.
    \end{itemize}
    Each node in $V_\mathrm{bad} := \mathrm{Bad}_1 \cup \mathrm{Bad}_2$ goes to the post-shattering instance $H_{i+1}^{\mathrm{bad}} = (V_\mathrm{bad}, E^\star|_{V_\mathrm{bad}})$, while each node in $V_\mathrm{sub} := V_i \setminus V_\mathrm{bad}$ goes to the subsampled instance $H_{i+1}^{\mathrm{sub}} = (V_\mathrm{sub},\{ e \cap V_\mathrm{sub}: e \in E_\mathrm{sub} \} )$.
    Next we show that this random process indeed shatters the graph and $H_{i+1}^{\mathrm{bad}}$ consists of components of size at most $\mathrm{poly}(\delta(H_i)) \cdot \log n$.

    \proofsubparagraph{Probability of bad events.}
    Let $\mathrm{marked}(v)$ denote the number of marked edges incident to $v$.
    Then, using a standard Chernoff bound (\autoref{lem:chernoff}) we get that
    \begin{align*}
        \mathbb{P}[\mathrm{marked}(v) \geq 7 / 8 \cdot \delta(H_i)] 
        &= \mathbb{P}[\mathrm{marked}(v) \geq 7/6 \cdot \mathbb{E}[\mathrm{marked}(v)]] \\
        &\leq \exp\left( - \frac{(1/6)^2 \cdot 3 / 4 \cdot \delta(H_i)}{2 + 1/6}\right) \leq \exp(- \delta(H_i) / 60).
    \end{align*}
    Further,
    \begin{align*}
        \mathbb{P}[\mathrm{marked}(v) \leq 1/2 \cdot \delta(H_i)] &=
        \mathbb{P}[\mathrm{marked}(v) \leq 2/3 \cdot \mathbb{E}[\mathrm{marked}(v)]] \\ 
        &\leq \exp\left( - \frac{(1 / 3)^2 \cdot 3 / 4 \cdot \delta(H_i)}{2}\right) = \exp(- \delta(H_i) / 24).
    \end{align*}
    Let $\mathrm{Unmarked}(v)$ denote the event that $\mathrm{marked}(v) \not\in [1/2 \cdot \delta(H_i), 7/8 \cdot \delta(H_i)]$. 
    We note that a vertex $v$ gets added to $\mathrm{Bad}_1$ either if $\mathrm{Unmarked}(v)$ occurs, or if for some $u \in N(v)$, $\mathrm{Unmarked}(u)$ occurs.
    Therefore, for $\delta(H_i) \geq 10000$ we have
    \begin{align*}
        \mathbb{P}[v \in \mathrm{Bad}_1] &\leq \mathbb{P}[\mathrm{Unmarked}(v)] + \sum_{u \in N(v)} \mathbb{P}[\mathrm{Unmarked}(u)] \\  
        &\leq 2 \exp(- \delta(H_i) / 60) + 2 \delta \cdot r(H_i) \exp(- \delta(H_i) / 60) \leq \frac{1}{\delta(H_i)^{16}}.
    \end{align*}
    Next, we note that any vertex $v \in \mathrm{Bad}_2$ is incident to at least $1/2 \cdot \delta(H_i)$ marked edges. Let $d(v)$ denote the number of edges in $E_\mathrm{sub}$ containing $v$.
    For each marked edge $e \in E_i$, the probability that $v$ is selected in the subset $e_\mathrm{sub} \subset e$ is at least $(\log r(H_i)) / r(H_i)$ and therefore $\mathbb{E}[d(v)] \geq \frac{\delta(H_i)}{2} \cdot \frac{\log r(H_i)}{r(H_i)}$. Another standard Chernoff bound then shows
    \begin{align*}
        \mathbb{P}[v \in \mathrm{Bad}_2] &\leq \mathbb{P}[d(v) \leq 1/4 \cdot \delta(H_i) \log r(H_i) / r(H_i)] \leq 
        \mathbb{P}[d(v) \leq 1/2 \cdot \mathbb{E}[d(v)]] \\ 
        &\leq \exp\left( - \frac{\delta(H_i) \log r(H_i)}{16r(H_i)} \right) \leq \exp\left( - \frac{x 10^{\log^{\ast} r(H_i)} \cdot \log r(H_i)}{16} \right) \\  
        &\leq \exp\left( - 1000 \frac{\log \delta(H_i)}{16} \right) \leq \frac{1}{\delta(H_i)^{16}}.
    \end{align*}

    \proofsubparagraph{$H_{i+1}^{\mathrm{bad}}$ shatters into small components}

    Next, we remark that each random variable $\mathbbm{1}[v \in \mathrm{Bad}_1]$ depends only on the randomness of its own incident hyperedges and the hyperedges incident to neighbors of $v$. Further, each random variable $\mathbbm{1}[v \in \mathrm{Bad}_2]$ depends only on the randomness of its own incident hyperedges.
    Hence, in the bipartite representation of the hypergraph, each bad event for $v \in V$ depends only on the randomness of nodes within at most $3$ hops from $v$.
    The maximum degree of this bipartite graph is $\delta(H_i)$.
    Therefore, we may apply \autoref{lem:shattering} with constants $c_1 = 16$ and $c_2 = 3$, which yields that the size of any connected component in $H_{i+1}^{\mathrm{bad}}$ is at most $\mathcal{O}(\mathrm{poly}(\delta(H_i)) \cdot \log n)$. 
    We will later show that we can solve HSO on each component in parallel and combine the local solutions into a globally feasible solution.

    \proofsubparagraph{Recursing on the subsampled instance $H_{i+1}^{\mathrm{sub}}$.}
    Observe that by definition of $\mathrm{Bad}_2$ every vertex in $H_{i+1}^{\mathrm{sub}}$ has degree at least
    \[
        \delta(H_{i+1}^{\mathrm{sub}}) \geq
        \frac{\delta(H_{i}^{\mathrm{sub}}) \cdot \log r(H_{i}^{\mathrm{sub}})}{4 r(H_{i}^{\mathrm{sub}})} \geq 
        x \cdot \frac{10^{\log^{\ast} r(H_{i}^{\mathrm{sub}})}}{4} \cdot \log r(H_{i}^{\mathrm{sub}}) \geq
        x \cdot 10^{\log^{\ast} r(H_{i+1}^{\mathrm{sub}})} \cdot r(H_{i+1}^{\mathrm{sub}}),
    \]
    Therefore, the degree-rank condition is satisfied for $H_{i+1}^{\mathrm{sub}}$ and we may recurse the rank-reduction procedure on $H_{i+1} := H_{i+1}^{\mathrm{sub}}$ until the rank satisfies $\log^{\ast} r(H_{i+1}) = 4$.

    \proofsubparagraph{Solving the base case.}
    Let $k$ be the integer such that $\log^{\ast} r(H_k) = 4$.
    Since $\log^{\ast} r(H_k) = 4$ we have that $\log r(H_k) \leq 16$ and thus
    \[
        \delta(H_k) \geq x \cdot 10^{\log^{\ast} r(H_k)} \cdot r(H_k) \geq 10000 \cdot r(H_k) \geq 320 r(H_k) \log r(H_k).
    \]
    Thus, the degree-rank condition to apply the already known randomized HSO algorithm (\autoref{thm:hso-rand}) is satisfied. Since $\delta$ and $r$ are constants now, we can compute an HSO on $H_k$ in $\bigO(\log \delta(H_k) + \log_{\delta(H_k) / r(H_k)} \log n) = \bigO(\log_x \log n)$ rounds.

    \proofsubparagraph{Solving the post-shattering instances $H_i^{\mathrm{bad}}$.}
    Note that Step 2 of the pre-shattering random process guarantees that no vertex in $H_i^{\mathrm{bad}}$ can have more than $7 / 8 \cdot \delta(H_{i-1})$ incident marked edges.
    Thus, the minimum degree of $H_{i}^{\mathrm{bad}}$ is at least $1 / 8 \cdot \delta(H_{i-1})$, while the maximum rank is still at most $r(H_{i-1})$.
    Since $H_{i-1}$ satisfies the degree-rank condition \eqref{eq:degree-rank-condition} and $r(H_{i-1}) > 2$ we get that
    \[
        \delta(H_i^{\mathrm{bad}}) \geq \frac{\delta(H_{i-1})}{8} \geq x \cdot \frac{10^{\log^{\ast} r(H_{i-1})}}{8} r(H_{i-1}) \geq x \cdot r(H_{i-1}) \geq x \cdot r(H_i^{\mathrm{bad}}).
    \]
    Furthermore, due to \autoref{lem:shattering} we get that the size of each component in $H_{i}^{\mathrm{bad}}$ is bounded by $\poly(\delta) \cdot \log n$. 
    Hence, we may apply the deterministic HSO algorithm (\autoref{thm:hso}) on each component of $H_{i}^{\mathrm{bad}}$ and for each $i = 2,\dots,k$ in parallel.
    Due to our initial subsampling we have that $r(H_{i}^{\mathrm{bad}}) = \mathcal{O}(\log n)$. In addition, we have that $\delta(H_{i}^{\mathrm{bad}}) = \mathcal{O}(r \cdot \log n)$.
    Therefore the runtime can be bounded by $\bigO(\log_x \delta(H_{i}^{\mathrm{bad}}) + \log_x \log n) = \mathcal{O}(\log_x \log n)$, where the $x$ in the base of the logarithm comes from the fact that $\delta(H_{i}^{\mathrm{bad}})/ r(H_{i}^{\mathrm{bad}}) \geq x$.

    \proofsubparagraph{Correctness.}
    First, we note that each vertex $v \in V(H)$ either moves to a post-shattering instance $H_i^{\mathrm{bad}}$ during the iterative rank-reduction procedure, or it remains in the final base case $H_k$. 
    In either case, $v$ receives at least one outgoing hyperedge by the HSO algorithms we run on $H_i^{\mathrm{bad}}$ and $H_k$.
    Further, observe that if an edge $e \in E(H)$ gets unmarked at any point in the procedure its subsampled version moves to the corresponding post-shattering instance and does not participate in the rank reduction anymore.
    Hence, any edge is marked outgoing for exactly one of its incident vertices.
    Thus, the orientation is a valid hypergraph sinkless orientation on $H$.

    \proofsubparagraph{Overall runtime.}
    Executing the pre-shattering random process and determining which vertices move on to the subsampled instance only takes a constant number of rounds. The number of subsampling iterations is bounded by $\mathcal{O}(\log^{\ast} r)$, since each iteration decreases $\log^{\ast} r(H_i)$ by one.
    Finally, solving the subsampled base case takes $\mathcal{O}(\log_x \log n)$ rounds randomized.
    Also, all the post-shattering instances can be solved in parallel by a deterministic algorithm in $\mathcal{O}(\log_x \log n)$ rounds.
    In the edge case that $\mathcal{O}(\log^{\ast} r)$ is not dominated by $\mathcal{O}(\log_x \log n)$, we stop the recursion earlier: 
    Let $k$ be the smallest integer such that either $\log^{\ast} r(H_k) \leq 4$ or $\log(r(H_k)) \leq x$ and observe that $k = \mathcal{O}(\log^{\ast} r - \log^{\ast} x)$. Then,
    \[
        \delta(H_k) \geq x \cdot 10^{\log^{\ast} r(H_k)} \cdot r(H_k) \geq 320 \log(r(H_k)) \cdot r(H_k)
    \]
    and we can compute an HSO on $H_k$ in $\mathcal{O}(\log_x \log n)$ rounds.
    Finally, we note that if $x \geq \log^{(k)} n$ for any constant $k$, then $\log^{\ast} r - \log^{\ast} x \in \mathcal{O}(1)$.
    Thus, the runtime is in fact dominated by $\mathcal{O}(\log_x \log n)$.
\end{proof}

Next, we develop a faster randomized version of \autoref{lem:ruling_set_voting_blocks}.

\begin{lemma}
    \label{lem:ruling_set_voting_blocks_randomized}
    There is a randomized distributed algorithm with round complexity {$\bigO(\ell \cdot \log \log n)$} that computes a set $\mathcal{P}$ of pairwise disjoint paths in $G_\mathrm{split}$ such that
    \begin{itemize}
        \item each path $P \in \mathcal{P}$ is a voting block of size at least $\ell$ and weak diameter at most $\ell$ and
        \item every component of $G_\mathrm{split} \setminus \bigcup_{\mathcal{P}} P$ has weak diameter less than $\ell$.
    \end{itemize}    
\end{lemma}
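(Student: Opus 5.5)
We follow the structure of the proof of \autoref{lem:ruling_set_voting_blocks} and change only the ruling set subroutine. The second phase, which greedily adds voting blocks to components of weak diameter at least $\ell$, is already fast. The first phase used the deterministic algorithm of \autoref{thm:det_ruling_set} on the intersection graph $G_\mathcal{B}$, and this is what we replace by a randomized algorithm whose complexity does not depend on $\log^\ast n$.

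First, each virtual node $v \in V_\mathrm{split}$ computes its voting block $B(v)$ via \autoref{lem:local_voting_block}. This takes $\bigO(\ell)$ rounds, since the growing procedure stops once the true length reaches $\ell$ and the weak diameter stays at most $\ell$. Nodes whose component has true length less than $\ell$ do not participate; their components already have weak diameter less than $\ell$.

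Second, as in the deterministic proof, \autoref{obs:true_length} gives $\Delta_\mathcal{B} \le 4\ell^2$. A round on $G_\mathcal{B}$ can be simulated in $\bigO(\ell)$ rounds of $G$, because two intersecting blocks have creators within $G$-distance $\bigO(\ell)$.

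Third, on $G_\mathcal{B}$ we compute a $(2,\beta)$-ruling set with $\beta = \bigO(\log\log n)$, or even an MIS, using a randomized algorithm in $\bigO(\log \Delta_\mathcal{B} + \log\log n) = \bigO(\log \ell + \log\log n)$ rounds. Concretely, we run Luby/Ghaffari-style MIS for $\bigO(\log \Delta_\mathcal{B})$ rounds. By the shattering lemma (\autoref{lem:shattering}), the undecided blocks then form components of size $\poly(\Delta_\mathcal{B})\cdot\log n$. On these components we can run the deterministic ruling set algorithm of \autoref{thm:det_ruling_set} after assigning new IDs of $\bigO(\log(\poly(\ell)\log n))$ bits. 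This makes the $\log^\ast$ term $\log^\ast(\poly(\ell) \log n) = \bigO(\log\log n)$. Alternatively, the undecided nodes simply form a $(2,\bigO(\log \ell))$-ruling set within their small components.

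Multiplying by the $\bigO(\ell)$ simulation overhead gives $\bigO(\ell(\log\ell+\log\log n))$. Uncovered components then have weak diameter $\bigO(\ell \log \ell)$, and the greedy completion of the deterministic proof, based on \autoref{lem:local_voting_block}, costs $\bigO(\ell\log\ell)$ more rounds.

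The main obstacle is the leftover $\log \ell$ term: the statement claims $\bigO(\ell \log\log n)$, not $\bigO(\ell(\log \ell + \log\log n))$. We handle it by cases. If $\ell \le \log n$, then $\log\ell \le \log\log n$ and we are done. If $\ell > \log n$, each component of $G_\mathrm{split}$ with true length at least $\ell$ can instead be handled more directly. For instance, we sample each node as a chopping candidate with probability $\Theta(\log n/\ell^{2})$ per block and argue with high probability coverage within $\bigO(\ell)$ hops. Alternatively, we run the greedy procedure on components of weak diameter $\bigO(\ell \log \ell)$, accepting that this case needs a separate argument. I expect this case distinction, together with a careful application of the shattering lemma to $G_\mathcal{B}$ (whose dependency degree is $\poly(\ell)$), to be where the real work lies.
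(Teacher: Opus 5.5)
\textbf{Gap: the $\log \ell$ term is never removed.} Your argument proves an $\mathcal{O}(\ell(\log \ell + \log\log n))$ bound, not the claimed $\mathcal{O}(\ell \log\log n)$. The case $\ell > \log n$, where $\log \ell$ can be as large as $\Theta(\log n)$, is left unproven.

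Neither option you list closes it:
\begin{itemize}
\item The greedy alternative costs $\mathcal{O}(\ell \log \ell)$ by your own accounting.
\item The sampling idea is not worked out, and it fails as stated. A voting block may consist of only $\ell$ nodes. At rate $\Theta(\log n/\ell^2)$, such a stretch contains no sampled node with probability $e^{-\Theta(\log n/\ell)} = \Omega(1)$ when $\ell > \log n$. You also never say how overlapping sampled blocks would be made disjoint.
\end{itemize}

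The $\log \ell$ is built into your subroutines. Luby/Ghaffari-style MIS before shattering needs $\Theta(\log \Delta_\mathcal{B})$ rounds. \autoref{thm:det_ruling_set} both runs in $\mathcal{O}(\log \Delta_\mathcal{B})$ rounds and only dominates within $\mathcal{O}(\log \Delta_\mathcal{B})$ hops. Here $\Delta_\mathcal{B}$ can be $\Theta(\ell^2)$. Separately, the undecided nodes of a truncated MIS do not form a ruling set, so that fallback is also not valid.

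\textbf{The missing idea: trade domination radius for speed.} You do not need an MIS or an $\mathcal{O}(\log \Delta)$-ruling set. The paper uses \autoref{cor:rand_ruling_set}, which computes a $(2,\mathcal{O}(\log\log n))$-ruling set in $\mathcal{O}(\log\log n)$ rounds on any graph, with no $\log \Delta$ term. It works in three steps:
\begin{itemize}
\item Compute a $(1,\mathcal{O}(\log\log \Delta))$-ruling set $S_0$ of induced degree $\mathcal{O}(\log^5 n)$ (\autoref{thm:ruling_set_induced_degree}).
\item Color $G[S_0]$ with $d = \mathcal{O}(\log^{10} n)$ colors using Linial's algorithm.
\item Apply \autoref{thm:ruling_set_coloring} with $c = \Theta(\log\log n)$. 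Then $d^{1/c} = \mathcal{O}(1)$, so this step takes $\mathcal{O}(\log\log n)$ rounds.
\end{itemize}
The graph $G_\mathcal{B}$ has at most $|V_\mathrm{split}| \le n^2$ nodes. Hence $\log\log \Delta_\mathcal{B} = \mathcal{O}(\log\log n)$, and the high-probability guarantee is unaffected.

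Run this on $G_\mathcal{B}$ with your $\mathcal{O}(\ell)$ simulation overhead per round, which gives $\mathcal{O}(\ell \log\log n)$ rounds. The uncovered components then have weak diameter $\mathcal{O}(\ell \log\log n)$. Your greedy completion via \autoref{lem:local_voting_block} therefore also fits in $\mathcal{O}(\ell \log\log n)$ rounds. No case distinction between $\ell$ and $\log n$ is needed.
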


\begin{proof}
    As a first step, we use the ruling set algorithm of \autoref{cor:rand_ruling_set} to compute a subset $\mathcal{P} \subseteq \mathcal{B}$ of pairwise disjoint voting blocks.
    Due to \autoref{obs:true_length}, each voting block $B \in \mathcal{B}$ intersects with at most $3 \ell^2$ other voting blocks in $\mathcal{B}$.
    Hence, we can bound the maximum degree $\Delta_\mathcal{B}$ of $G_\mathcal{B}$ by $\Delta_\mathcal{B} = \bigO(\ell^2)$.
    Since $\mathcal{P}$ is a $\bigO(\log \log n)$-ruling set on $G_\mathcal{B}$, each component of $G_\mathrm{split} \setminus \bigcup_{\mathcal{P}} P$ has weak diameter at most $\bigO(\ell \cdot \log \log n)$.
    As long as there exists a component $C$ with weak diameter at least $\ell$, we can always add a voting block $B \subseteq C$ of size at least $\ell$ and weak diameter at most $\ell$ to $\mathcal{P}$ due to \autoref{lem:local_voting_block}.
    After this procedure terminates, the set $\mathcal{P}$ fulfills both criteria.
    The runtime of this procedure is at most the weak diameter of any component in $G_\mathrm{split} \setminus \bigcup_{\mathcal{P}} P$, which we bounded by  $\bigO(\ell \cdot \log \log n)$.
    The ruling set algorithm of \autoref{cor:rand_ruling_set} runs in $\bigO(\log \log n)$ rounds on $G_\mathcal{B}$, while simulating a round on $G_\mathcal{B}$ takes $\bigO(\ell)$ rounds.
\end{proof}

In order to apply \autoref{thm:better-rand-HSO}, we just need to increase the size of our voting blocks a bit compared to the deterministic setting. 
Indeed, we now build an HSO instance s.t.\ $\delta \geq 2 \cdot 10^{\log^{\ast} r} \cdot r$, when before we only had that $\delta \geq 2 r$.
Since the maximum rank of our hypergraph is tied to the discrepancy we want to achieve, our only option to still satisfy the more demanding degree-rank condition of \autoref{thm:better-rand-HSO} is to increase the size of our voting blocks. Since $r = \Theta(\varepsilon^{-1})$, this leads to a choice of $\ell = \Theta(\varepsilon^{-1} \cdot 10^{\log^{\ast} \varepsilon^{-1}})$.

\SplittingThmRand*

\begin{proof}[Proof of \autoref{thm:balanced-local-rand} (directed)]
    Let $\mathcal{P}$ denote the set of pairwise disjoint paths in $G_\mathrm{split}$ computed in \autoref{lem:ruling_set_voting_blocks_randomized} and $\mathcal{R}$ be the set of buckets defined in \autoref{def:buckets}. Let $H$ denote the hypergraph $H(\mathcal{P}, \mathcal{R})$ constructed in \autoref{def:hypergraph}. 
    In contrast to the deterministic version of our algorithm we now choose $\ell = \lceil 8 \cdot 10^{\log^{\ast} \varepsilon^{-1} } \cdot \varepsilon^{-1} \rceil \geq 2 \cdot 10^{\log^{\ast} r} \cdot r$. 
    The construction of $H$ takes $\bigO(\ell \cdot \log \log n)$ rounds.
    Now we apply \autoref{thm:better-rand-HSO} to solve HSO in $\mathcal{O}(\log \log n)$
    rounds. Recall that simulating a round on $H$ takes $\bigO(\ell)$ rounds on $G$ and thus the total runtime to compute an HSO on $H$ is also $\bigO(\ell \cdot \log \log n)$.
    Finally, by using the randomized sinkless orientation algorithm from \cite{GS_soda17}, we can reduce the discrepancy of odd-degree vertices to $1$ in just $\bigO(\ell \cdot \log \log n)$ rounds as well.
    The final step of computing a sinkless orientation on the virtual graph $G''$---whose construction we detail at the end of the proof of \autoref{thm:balanced-local} ---now takes just $\bigO(\ell \cdot \log \log n)$ rounds.
    Thus, the total round complexity of our algorithm is $\bigO(\ell \cdot \log \log n) = \bigO(\varepsilon^{-1} \cdot 10^{\log^{\ast} \varepsilon^{-1}} \cdot \log \log n)$.
\end{proof}

\section{Undirected Degree Splitting}
\label{sec:undirected-splits}

In this section, we prove the undirected parts of \autoref{thm:balanced-local} and \autoref{thm:balanced-local-rand}.
Our algorithm for directed splitting almost works out-of-the-box for undirected splitting as well.
There is only one small issue: Odd cycles cannot be properly $2$-colored.
Hence, any odd cycle that is left at the end of our algorithm will introduce an additional unit of discrepancy somewhere.
To get around this issue, we prepend a preprocessing step which removes all short cycles from the virtual graph $G_\mathrm{split}$ described in \autoref{def:G_split}.

\subsection{Removing Short Cycles}

We remove short cycles by merging them with adjacent cycles and paths.
Merging two edge-disjoint adjacent cycles results in a larger cycle, while merging a cycle with an adjacent path results in a larger path.

\begin{definition}[Short cycle]
    Let $C$ be a cycle in $G_\mathrm{split}$ and $X$ be a cycle or a path in $G_\mathrm{split}$. We define the distance between $C$ and $X$ as $\dist(C,X) := \min_{v \in C, v' \in X} \dist_G(v,v')$.
    We call a cycle $C$ in $G_\mathrm{split}$ short if the true length $\ell^\star(C)$ of $C$ is less than $\ell$ and long otherwise. 
\end{definition}

\begin{lemma}
\label{lem:cycle_doubling}
    Let $v \in V(G)$ and let $C$ denote a short cycle in $G_\mathrm{split}$ that contains at least one virtual copy $v_C$ of $v$. Then, for any virtual copy $v_X$ of $v$ that belongs to a different connected component $X$ in $G_\mathrm{split}$ we can locally change the split at $v$ such that $v_C$ and $v_X$ are connected in $G_\mathrm{split}'$ in a component of size $\lvert C \rvert + \lvert X \rvert $.
    If $X$ is a cycle, the new component is also a cycle. If $X$ is a path, the new component is also a path.
\end{lemma}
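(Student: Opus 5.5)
The plan is a purely local rewiring at $v$ that exchanges one edge between the two virtual copies $v_C$ and $v_X$; the rest of $G_\mathrm{split}$ stays untouched. Since $C$ is a cycle, $v_C$ has degree exactly $2$ in $G_\mathrm{split}$ and is incident to two edges $e_1=(v,a)$ and $e_2=(v,b)$ of $C$. (If $C$ has length $1$ or $2$, these are parallel edges or a loop in the multigraph sense; the argument is the same with multiplicities.) The copy $v_X$ has degree $1$ or $2$ and is incident to an edge $f_1$, and possibly $f_2$, of $X$. The new split $G_\mathrm{split}'$ is defined by swapping $e_2$ and $f_1$: $v_C$ now holds $e_1,f_1$, and $v_X$ holds $e_2$ and, if present, $f_2$. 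Degrees of all virtual copies are preserved, every other copy keeps its edges, and the edge set is unchanged, so the new object is again a valid split graph of $G$. Also, only $v$ changes its assignment, which it can do without any communication.

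The main step is to verify the structure of the new component. I would view $C$ as a closed walk $v_C \xrightarrow{e_1} \cdots \xrightarrow{e_2} v_C$. Removing $v_C$ from it gives a path $\pi_C$ through the other nodes of $C$, with end-edges $e_1$ and $e_2$. The two cases for $X$ are handled as follows.

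\textbf{$X$ is a cycle.} Removing $v_X$ leaves a path $\pi_X$ with end-edges $f_1,f_2$. After the swap, follow $f_1$ into $\pi_X$, traverse it to $f_2$, arrive at $v_X$, leave via $e_2$, traverse $\pi_C$ to $e_1$, and return to $v_C$, which continues along $f_1$. This is a single closed walk in which every node has degree $2$, so the result is one cycle.

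\textbf{$X$ is a path.} Here $v_X$ may be an interior node with edges $f_1,f_2$, or an endpoint with only $f_1$. If $v_X$ is interior, removing it splits $X$ into two subpaths $X_1\ni f_1$ and $X_2\ni f_2$. The new component is then $X_1 \to v_C \to \pi_C \to v_X \to X_2$, which is a path. If $v_X$ is an endpoint of degree $1$, the result is $X\setminus v_X \to v_C \to \pi_C \to v_X$, a path that ends at $v_X$, which now has degree $1$ and holds $e_2$.

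In every case the node set of the new component is exactly $V(C)\cup V(X)$, so its size is $|C|+|X|$. The two components were disjoint, and no node was created or destroyed.

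The only real obstacle is the degenerate short cycles. When $C$ consists of $v_C$ alone with a loop, or when $e_1$ and $e_2$ go to the same neighbour copy, $\pi_C$ is empty or a single node and the "leave via $e_2$, return via $e_1$" description has to be read with edge labels rather than endpoints. I would handle this by arguing throughout on the level of edges incident to $v$, that is, by tracking the cyclic edge sequence of $C$, so that degeneracies do not matter. It is also worth checking that $a$ or $b$ coinciding with nodes of $X$ is impossible, since $C$ and $X$ are distinct components.
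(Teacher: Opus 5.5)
Your proposal is correct and follows the paper's proof. The paper likewise swaps one $C$-edge at $v_C$ with one $X$-edge at $v_X$, replacing $(v_C,r_C),(v_X,r_X)$ by $(v_C,r_X),(v_X,r_C)$; your case analysis of the resulting component (including the case where $v_X$ is a degree-one path endpoint, which the paper leaves implicit) simply spells out what the paper asserts via its figure.
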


\begin{proof}
    Let $r_C (r_X)$ and $\ell_C (\ell_X)$ denote the two neighbors of $v_C (v_X)$ in $G_\mathrm{split}$. Since the edge $(v_C, r_C)$ closes a cycle, we can exchange it for the edge $(v_C, r_X)$ without disconnecting any vertices. At the same time we exchange the edge $(v_X, r_X)$ by the edge $(v_X, r_C)$. This connects the previously disconnected components $C$ and $X$, see \autoref{fig:cycle_joining} for an illustration.
\end{proof}

\begin{figure}
    \begin{subfigure}[t]{0.45 \textwidth}
        \centering
        \includegraphics{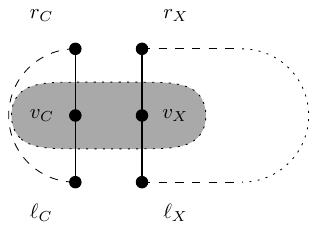}
        \caption{Two disjoint short cycles incident on two distinct virtual copies of the same vertex in $G$.}
    \end{subfigure}
    \hfill
    \begin{subfigure}[t]{0.45 \textwidth}
        \centering
        \includegraphics{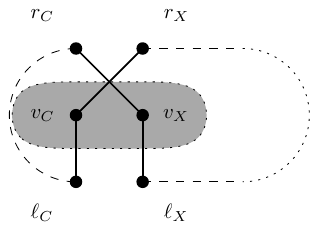}
        \caption{By interchanging two edges between the virtual nodes the two short cycles are merged into one larger cycle.}
    \end{subfigure}
    \caption{Doubling the length of a single short cycle in $G_\mathrm{split}$.}
    \label{fig:cycle_joining}
\end{figure}

The basic idea is now to execute \autoref{lem:cycle_doubling} for all vertices of $G$ in parallel. However, doing this naively runs into issues. Imagine we have three cycles $C_1,C_2,C_3$, where $C_1$ wants to merge with $C_2$, $C_2$ wants to merge with $C_3$ and $C_3$ wants to merge with $C_1$.
After executing the merges $C_1 \cup C_2$ and $C_2 \cup C_3$, we are left with just a single large cycle. If we now try to merge $C_3 \cup C_1$, then we would end up splitting the large cycle into two smaller cycles again.
This scenario needs to be avoided at all costs.
In other words, if we consider a virtual graph $G_\mathrm{join}$, where each short cycle is a node and there is an edge between two cycles that we want to join, then we must avoid cycles in $G_\mathrm{join}$.

\begin{lemma}
    \label{lem:increase_girth}
    Let $\ell \leq D$ be a positive integer, with $D$ the diameter of the graph.
    Then, there is a deterministic distributed algorithm of round complexity $\bigO(\ell)$ that computes a split of each vertex of degree $\deg(v)$ into $\lfloor \deg(v) / 2 \rfloor $ virtual copies of degree two (and one copy of degree one for vertices of odd degree) such that each cycle in the split graph has true length at least $\ell$.
\end{lemma}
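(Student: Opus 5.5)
We start from $G_\mathrm{split}$ (\autoref{def:G_split}) and repeatedly remove short cycles by merging them into other components via \autoref{lem:cycle_doubling}. Each merge at least doubles the number of nodes of the short cycle. To keep the round count at $\bigO(\ell)$, we track true length instead: a merge at vertex $v$ yields a component whose creator set is the union of the creator sets of $C$ and $X$. We proceed in $\bigO(\log \ell)$ phases, and in phase $j$ we guarantee that every remaining cycle has true length at least $2^j$, which can be checked in $\bigO(2^j)$ rounds. Since the phase costs form a geometric series, the total is $\bigO(\ell)$.

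Within a phase, each short cycle $C$ (true length below the current threshold, hence weak diameter below $\ell$) must find a merge partner. The key point is that $C$ always has one. Since $\ell \le D$, the creators of $C$ cannot span the whole graph, so some creator $v$ of $C$ has an incident edge that is not in $C$. That edge belongs to another virtual copy $v_X$ of $v$ lying in a different component $X$. The $v_X$ part uses the fact that each virtual copy is a separate node; if $v_X$ lay in $C$ itself, we would look at a different creator.

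We form the conflict graph $G_\mathrm{join}$, whose nodes are the short cycles. Each short cycle selects one partner edge, to another short cycle or to a path or long cycle, at a designated vertex. To avoid the cyclic-merge problem described above, we orient the selections so that the set of merges performed at once is a forest in which each component is handled consistently. The simplest way to do this is to compute a maximal matching, or a star decomposition, on $G_\mathrm{join}$. A round on $G_\mathrm{join}$ is simulated in $\bigO(2^j)$ rounds, and the matching step costs roughly $\bigO(\log^* n)$ rounds on $G_\mathrm{join}$ using IDs. Alternatively, and more cheaply, each short cycle picks its partner deterministically as the neighbour component with the smallest ID, giving pointer forests. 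We then decompose these forests into stars of depth one using a 3-coloring of the forests (Cole--Vishkin, $\bigO(\log^* n)$), and merge each star centre with all of its leaves at distinct vertices or at distinct copies of the same vertex. Each star merge is a sequence of \autoref{lem:cycle_doubling} operations whose components stay pairwise distinct, so the merges are acyclic and never split a cycle. Every short cycle participates in some merge, so its true length either at least doubles or it becomes part of a path or a long cycle.

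I expect the main obstacles to be, first, the $\log^* n$ term, which does not fit in $\bigO(\ell)$, and second, the concurrency. For the first, I would instead avoid symmetry breaking altogether by merging only in the direction of strictly decreasing (ID of the minimum creator, or component rank). Every pointer chain then ends at a root, and a consistent merge is possible without coordination: all merges in a tree are applied simultaneously at distinct edges, and a component-tree merge of cycles yields one cycle. The second obstacle is that two merges must not use the same virtual copy simultaneously. Each merge chooses a distinct pair of edges at $v$, so this is a local assignment at each vertex $v$, which is easy.
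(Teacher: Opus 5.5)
There is a genuine gap: the step that makes your phases cost only $O(2^j)$ each, and so sum to $O(\ell)$, does not hold.

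\textbf{The doubling claim is false.} Your bound rests on the claim that a short cycle taking part in a merge at least doubles its true length, or ends up in a path or long cycle. Only then does a single merge step restore the phase-$j$ invariant. The true length of the merged component is $|V(C)\cup V(X)|$, and since $C$ and $X$ share a creator they may share almost all of them. Two edge-disjoint Hamiltonian cycles of a $K_5$ subgraph merge into a cycle of true length $5$, no larger than either part. A triangle through two creators of $C$ raises $\ell^\star$ by only one. Merging along a whole pointer tree does not help, because all cycles in a tree may live on the same few creators.

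\textbf{What the salvageable version gives.} What does double per round is the number of nodes of the smallest surviving short cycle. But a cycle of true length $t$ can have up to $\binom{t}{2}$ nodes. So up to $\Theta(j)$ merge rounds may be needed before all cycles reach true length $2^j$. At $\Theta(2^j)$ rounds each, this gives $O(\ell\log\ell)$, not $O(\ell)$.

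\textbf{Roots that never merge.} Under your decreasing-ID rule, consider a short cycle whose adjacent components are all short cycles of larger ID, none of which points to it. It does not merge at all. So even ``every short cycle participates'' needs an extra rule.

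\textbf{The missing idea.} You need a target choice that makes one parallel merge step reach far. The paper has each short cycle $S$ pick $\mathrm{Max}(S)$: the component within distance $2\ell$ maximising $f$, where $f$ is the true length for short cycles and $\infty$ for paths and long cycles, with ties broken by IDs. $S$ then merges with an adjacent $\mathrm{Tar}(S)$ that is strictly closer to $\mathrm{Max}(S)$.

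Monotonicity of $f$ along merge chains gives acyclicity without symmetry breaking, which is the role your decreasing-ID rule plays. It also pulls the merge chain of $S$ towards a distant or large component, so the component containing $S$ afterwards is a path or has large weak diameter. One further join step of the same kind removes the remaining short cycles. Thus only $O(1)$ merge steps are needed, each using $O(\ell)$-hop information.

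\textbf{What is correct.} Your argument that a partner exists because $\ell\le D$ is fine and matches the paper. So is your observation that simultaneous merges along a forest yield a single cycle or path.
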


\begin{proof}
    We define an evaluation function $f$ on every component of $G_{\mathrm{split}}$ as follows:
    For any short cycle $S$ we set $f(S) := \ell^\star(S)$, while for any long cycle $C$ we set $f(C) := \infty$. We also set $f(P) := \infty$ for any path $P$ in $G_\mathrm{split}$. 
    Each vertex $v \in V$ starts by determining the set of short cycles $\mathcal{S}(v)$ containing at least one virtual copy of $v$.
    This can be done in $\bigO(\ell)$ rounds of \LOCAL.
    For each short cycle $S \in \mathcal{S} = \bigcup_{v \in V} \mathcal{S}(v)$ we write $\mathrm{Min}(S)$ to denote the vertex with minimum ID among all creators of nodes in $S$ and let $\mathrm{Min}(S)$ handle all computations regarding $S$.
    Now each cycle $S$ finds a nearby path or the largest nearby cycle $\mathrm{Max}(S) := \mathrm{argmax} \{ f(X): \dist(X,S) \leq 2 \ell \} $. To break ties we use the IDs of $\mathrm{Min}(S)$ concatenated with the smaller ID of the two creators of nodes adjacent to $\mathrm{Min}(S)$ on $S$.
    We write $d(S) = \dist(S,\mathrm{Max}(S))$.
    Note that $\mathrm{Min}(S)$ only needs to collect its $4 \ell$-hop neighborhood in $G$ to determine $\mathrm{Max}(S)$. Each path or cycle $X$ satisfying $\dist(S,X) \leq 2\ell$ has to contain a node $v$ such that $\dist(\mathrm{Min}(S),v) \leq \ell + \dist(S,X) \leq 3 \ell$.
    Next, using just the $\ell$-hop neighborhood around the creator of $v$, $\mathrm{Min}(S)$ can determine whether $f(X) = \ell^\star(X)$ or $f(X) = \infty$.
    Further, if $\mathrm{Max}(S) \neq S$, $S$ identifies a target path or cycle $\mathrm{Tar}(S)$ such that $\dist(S,\mathrm{Tar}(S)) = 0$, that is, there are nodes $v \in S$ and $v' \in \mathrm{Tar}(S)$ that share a common creator, and $\dist(\mathrm{Tar}(S), \mathrm{Max}(S)) \leq d(S) - 1$.

    Now, for all short cycles $S \in \mathcal{S}$ such that $S \neq \mathrm{Max}(S)$
    in parallel, we modify the local vertex splits such that $S$ and $\mathrm{Tar}(S)$ are joined to form a larger path or cycle according to \autoref{lem:cycle_doubling}.
    After executing all those merge operations in parallel, let $\mathrm{New}(S)$ denote the new connected component that $S$ is now a part of. If $\mathrm{New}(S)$ is a path, we are done, as our goal is to delete short cycles. We do not need to argue about the true length of $\mathrm{New}(S)$ if it is a path. Let us assume in what follows that $\mathrm{New}(S)$ is a cycle, that is, $S$ only merged with other cycles, and let us analyze its length.

    We claim that the weak diameter of $\mathrm{New}(S)$ is at least $d(S)$, the distance between $S$ and $\mathrm{Max}(S)$. 
    Let $S = S_0,S_1,\dots,S_k$ denote the chain of cycle joins originating from $S$, where $S_{i+1} = \mathrm{Tar}(S_i)$ for all $i = 0,\dots,k-1$.
    First, we note that these merge chains can never close a cycle, that is, the graph induced by all cycle joins is acyclic itself.
    Indeed suppose that $\mathrm{Tar}(S_i) = S$ for some $i < k$.
    If $\mathrm{Max}(S_i) = \mathrm{Max}(S)$, then $S$ cannot be the target for $S_i$, as $\dist(S_i, \mathrm{Max}(S)) < \dist(S, \mathrm{Max}(S))$.
    On the other hand, if $\mathrm{Max}(S_i) \neq \mathrm{Max}(S)$, then it must hold that 
    \[
        f(\mathrm{Max}(S_i)) \geq f(\mathrm{Max}(S_{i-1})) \geq \cdots \geq f(\mathrm{Max}(S)).
    \]
    But $S$ is also the target of $S_i$, which contradicts the fact that either $f(\mathrm{Max}(S)) < f(\mathrm{Max}(S_i))$ or $\mathrm{Max}(S)$ loses in the tiebreak with $\mathrm{Max}(S_i)$.
    Thus, if we consider a set of cycles merging together, they form a rooted tree if we put a directed edge from $C$ to $C'$ in the set whenever $C' = \mathrm{Tar}(C)$.
    Executing the merge operations sequentially, from root to leaves, we progressively grow the root cycle with each merge until a single cycle remains, spanning the virtual nodes of all cycles in the rooted tree.

    Next, we argue that the weak diameter of $\mathrm{New}(S)$ is at least $d(S)$.
    This is clear, if $S_k = \mathrm{Max}(S)$.
    If $S_k \neq \mathrm{Max}(S)$, let $i^\star < k$ be the smallest $i$ such that $\mathrm{Max}(S_{i^\star}) \neq \mathrm{Max}(S)$. Since $\mathrm{Max}(S_{i^\star})$ beats $\mathrm{Max}(S)$ either in terms of true length or tiebreaks we must have that $\dist(\mathrm{Max}(S_{i^\star}), S) > \ell$.
    Since $f$ is monotonically increasing along the chain, this implies that the chain can never terminate at a cycle $C$ with $\dist(S,C) \leq \ell$.
    Hence, the weak diameter of $\mathrm{New}(S)$ is at least $\ell \geq d(S)$.

    If we are in the second case, we are already done, as the weak diameter is always a lower bound for the true length of a cycle.
    Now suppose there still exists a short cycle $S$, where the true length of $\mathrm{New}(S)$ is less than $\ell$. Unless the entire graph itself has small diameter (trivializing all our problems), there must be two cycles $C$ and $C_i$ such that $\dist(C,C_i) = 0, C_i \subseteq \mathrm{New}(S)$ but $C \not \subseteq \mathrm{New}(S)$, i.e., $C$ has not been merged with $S$. Since $\mathrm{Tar}(C) \neq C_i$ and $\mathrm{Tar}(C_i) \neq \mathrm{Tar}(C)$, we have that $\mathrm{Max}(C_i) = \mathrm{Max}(S)$. Since $\dist(\mathrm{Max}(S), C) \leq 2 \ell$, this implies that $\dist(S, \mathrm{Max}(C_i)) > 2 \ell$.
    Hence, after joining $\mathrm{New}(S)$ with $\mathrm{New}(C)$, we get a cycle with weak diameter at least $\ell$ (or a path, if $\mathrm{New}(C)$ is a path).
    Importantly, since $\dist(S,S_i) < \ell$, this implies that $\mathrm{New}(C)$ already had a weak diameter of at least $\ell$ on its own.
    Since this step only joins cycles of true length less than $\ell$ to cycles of weak diameter at least $\ell$ (or paths), we get that our graph of cycle joins remains acyclic and we never execute a cycle join on two virtual nodes that are part of the same cycle.
\end{proof}

\subsection{The Algorithm}

\SplittingThm*

\begin{proof}[Proof of \autoref{thm:balanced-local} (undirected)]
    First we choose $x(v) = \max \left\{1, \left\lfloor \frac{\varepsilon \deg(v)}{4} \right\rfloor \right\} $ and $\ell = \lceil 16 \varepsilon^{-1} \rceil $.
    After constructing $G_\mathrm{split}$ as described in \autoref{def:G_split} we apply \autoref{lem:increase_girth} to $G_\mathrm{split}$.
    This yields a modified split graph $G_\mathrm{split}'$, where the true length of each cycle is at least $\ell$.
    Hence, according to \autoref{lem:local_voting_block} there is at least one voting block $B \in \mathcal{P}$ participating in the HSO for every cycle.
    Hence, after chopping the segments according to the HSO assignment, each cycle is split into at least two paths.

    In order to reduce the discrepancy for odd-degree vertices down to just $1$, we slightly modify our HSO instance. 
    If a bucket $B$ contains a virtual node $v$ of degree one (created by one of the odd-degree vertices in $G$), then we are not allowed to chop at another node in $B$. 
    However, we cannot afford to let all buckets $B$ simply drop out of the HSO instance. 
    Therefore, we show that for every bucket $B$ already containing a degree-one node $v$, we can do an alternative operation that also chops a path into smaller segments.
    For that, we let the degree-two node $w$ that `won' the bucket $B$ exchange one of its edges with the only edge incident to $v$. 
    This is essentially a `one-sided' chop, as now one side of the path containing $w$ ends at $v$, while the other side now continues along the path $P_v$ that previously started at $v$.
    These one-sided chops are a little bit more precarious to work with, as now we still need to make sure that the path that now continues along $P_v$ eventually does terminate.
    For that, we distinguish two cases: 
    If $P_v$ receives fewer than two voting blocks, then we merge the two hyperedges corresponding to the buckets at either endpoint of $P_v$ into a single hyperedge.     
    Hence, at most one endpoint of $P_v$ exchanges its edge with another virtual node of degree $2$ and the path has to stop at the other endpoint of $P_v$.
    Note that the weak diameter of $P_v$ is still at most $\bigO(\ell)$ and the rank of $H$ increases by a factor of two, which can be compensated by setting $\ell = \lceil 16 \varepsilon^{-1} \rceil $.
    The runtime of the HSO algorithm therefore remains at $\bigO(\ell \cdot \log n)$.

    If a path $P$ receives at least two voting blocks and a voting block $B$ that is the closest voting block to one of the endpoints of $P$ has to do a one-sided chop, then it always chops in such a way that the subpath coming from the endpoint of $P$ immediately terminates at $B$.
    Thus, we can still guarantee that the weak diameter of each path and cycle in $G_\mathrm{split}$ is bounded by $\bigO(\ell)$.

    Now we can color the edges of all those paths red and blue such that each internal node of each path receives one red and one blue edge.
    This yields a discrepancy of at most $\varepsilon \cdot \deg(v) + 2$ for even-degree vertices and $\varepsilon \cdot \deg(v) + 1$ for odd-degree vertices.
    Removing all short cycles from $G_\mathrm{split}$ takes $\bigO(\ell)$ rounds, executing HSO still works in $\bigO(\ell \cdot \log n)$ rounds and the final coloring can be computed in $\bigO(\ell \cdot \log n)$ rounds.
    Since $\ell = \bigO(\varepsilon^{-1})$, this implies that our algorithm runs in $\bigO(\varepsilon^{-1} \cdot \log n)$ rounds in total.
\end{proof}

\SplittingThmRand*

\begin{proof}[Proof of \autoref{thm:balanced-local-rand} (undirected)]
    The randomized algorithm differs from its deterministic counterpart only in the subroutines it calls. The deterministic ruling set algorithm, as well as the deterministic HSO algorithm are replaced by their randomized counterparts.
    Hence, we replace \autoref{lem:ruling_set_voting_blocks} by \autoref{lem:ruling_set_voting_blocks_randomized} to reduce the runtime for constructing the hypergraph $H$ down to $\bigO(\ell \cdot \log \log n)$.
    Next, we choose $\ell = \lceil 16 \cdot 10^{\log^{\ast} \varepsilon^{-1} } \cdot \varepsilon^{-1} \rceil \geq 2 \cdot 10^{\log^{\ast} r} \cdot r$. 
    The construction of $H$ takes $\bigO(\ell \cdot \log \log n)$ rounds.
    Now we apply \autoref{thm:better-rand-HSO} to solve HSO in $\mathcal{O}(\log \log n)$
    rounds. 
    Since simulating a round on $H$ takes $\bigO(\ell)$ rounds on $G$ we can bound the overall runtime of our algorithm by $\bigO(\ell \cdot \log \log n) = \bigO(\varepsilon^{-1} \cdot 10^{\log^{\ast} \varepsilon^{-1}} \cdot \log \log n)$.
\end{proof}

\section{Applications}

\begin{corollary}
    Let $k$ be an integer and $\eps \in (0,2k/3]$.
    There is a deterministic distributed algorithm of round complexity $\bigO(k^2 \cdot \varepsilon^{-1} \cdot \log n)$ for splitting the edges of a graph $G$ into $2^k$ parts such that each vertex $v \in V$ is incident to at most $((1+ \varepsilon) / 2^k) \cdot \deg(v) + 6$ edges of any part.
\end{corollary}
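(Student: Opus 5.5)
The plan is to apply the undirected part of \autoref{thm:balanced-local} recursively, $k$ times. Each color class produced at level $i$ is split again at level $i+1$, so after $k$ levels we have $2^k$ parts. Every level runs on edge-disjoint subgraphs of $G$ in parallel, so the round complexity is the sum over the $k$ levels. At every level I use the same per-level precision $\varepsilon'$ and pick it so that the multiplicative errors compound to at most $1+\varepsilon$. That means $(1+\varepsilon')^k \le 1+\varepsilon$, i.e.\ $\varepsilon' := \ln(1+\varepsilon)/k$. For $\varepsilon \le 1$ this gives $\varepsilon' = \Theta(\varepsilon/k)$. Each level then costs $\bigO(k\varepsilon^{-1}\log n)$ rounds, and the total is $\bigO(k^2\varepsilon^{-1}\log n)$.

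The degree recurrence comes from \autoref{thm:balanced-local}. A vertex of degree $d$ in the current subgraph has discrepancy at most $\varepsilon' d + 2$, so each color class has degree at most $d_{i} \le \frac{1+\varepsilon'}{2} d_{i-1} + 1$. Unrolling gives
\[ d_k \le \Bigl(\tfrac{1+\varepsilon'}{2}\Bigr)^{k}\deg(v) + \sum_{j\ge 0}\Bigl(\tfrac{1+\varepsilon'}{2}\Bigr)^{j} \le \frac{1+\varepsilon}{2^k}\deg(v) + \frac{2}{1-\varepsilon'} . \]
Ensuring $\varepsilon' \le 2/3$ bounds the additive term by $6$. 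This is exactly where the hypothesis $\varepsilon \le 2k/3$ enters, since $\ln(1+\varepsilon)/k \le \varepsilon/k \le 2/3$. One bookkeeping point: the discrepancy guarantee is relative to the degree in the current subgraph, not to $\Delta$. This is why \autoref{thm:balanced-local} is stated with $\deg(v)$, and it is what makes the recurrence hold pointwise.

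The main obstacle is the regime $\varepsilon \gg 1$. There, $\varepsilon'^{-1} = k/\ln(1+\varepsilon)$ is larger than $k/\varepsilon$, so the naive sum $k\cdot\bigO(\varepsilon'^{-1}\log n)$ would overshoot the target $\bigO(k^2\varepsilon^{-1}\log n)$. My fix is to exploit that, for large $\varepsilon$, each level needs only a coarse split. At a level where the allowed ratio $(1+\varepsilon')/2$ is bounded away from $1/2$, I would replace \autoref{thm:balanced-local} by \autoref{cor:balanced-orientation} (used as an undirected 2-coloring via in/out edges) with $s = \Theta(\varepsilon'^{-1}+1)$. That costs only $\bigO(\log_s n) \le \bigO(\log n)$ rounds per level. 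Since $\varepsilon \le 2k/3$ forces $k\varepsilon^{-1} \ge 3/2$, the target already affords $\Omega(\log n)$ rounds per level. Hence every level costs $\bigO(\max\{1,\varepsilon'^{-1}\}\log n)$, and the sum is $\bigO(k\max\{1,k/\ln(1+\varepsilon)\}\log n)$. The step I expect to require the most care is showing this sum is really $\bigO(k^2\varepsilon^{-1}\log n)$ throughout $1 \le \varepsilon \le 2k/3$. If the uniform choice of $\varepsilon'$ is too lossy there, I would first try a non-uniform schedule. The schedule would spend the slack unevenly: cheap coarse splits (with $\varepsilon_i = \Theta(1)$, costing $\bigO(\log n)$ each) on some levels, and precise splits only where needed, subject to $\prod_i(1+\varepsilon_i) \le 1+\varepsilon$. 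I would then recheck that the additive terms still sum to at most $6$.
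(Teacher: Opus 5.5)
For $\varepsilon$ bounded by a constant, your plan is the paper's proof. It runs $k$ levels of the undirected \autoref{thm:balanced-local} with a uniform $\varepsilon'=\Theta(\varepsilon/k)$, uses the pointwise recurrence $d_i\le\frac{1+\varepsilon'}{2}d_{i-1}+1$, and bounds the additive terms by a geometric series. The paper takes $\varepsilon'=\varepsilon/(2k)\le 1/3$; your $\varepsilon'=\ln(1+\varepsilon)/k$ with the additive bound $2/(1-\varepsilon')$ works just as well.

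The regime you flag, however, is a genuine gap, and neither of your fixes closes it.
\begin{itemize}
\item \textbf{Uniform schedule.} Your total cost is $\bigO(k^2\log n/\ln(1+\varepsilon))$, which misses the target by a factor $\varepsilon/\ln(1+\varepsilon)$. At $\varepsilon=2k/3$ it is $\Theta(k^2\log n/\log k)$, against a target of $\bigO(k\log n)$.
\item \textbf{Using \autoref{cor:balanced-orientation}.} It only guarantees $\lfloor d/s\rfloor$ edges of each orientation, so one class can keep $d-\lfloor d/s\rfloor$ edges. Matching $\frac{1+\varepsilon'}{2}d$ needs $s\le 2/(1-\varepsilon')$, not $s=\Theta((\varepsilon')^{-1})$. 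Together with $s\ge 3$ this forces $\varepsilon'\ge 1/3$, which is exactly where the theorem already costs $\bigO(\log n)$.
\item \textbf{Non-uniform schedule.} You need $\sum_i\ln(1+\varepsilon_i)\le\ln(1+\varepsilon)$, so at most $\bigO(\log(1+\varepsilon))$ levels can have $\varepsilon_i\ge 1/3$. The remaining $m\ge k-\bigO(\log(1+\varepsilon))$ levels satisfy $\sum\varepsilon_i=\bigO(\log(1+\varepsilon))$. Cauchy--Schwarz then gives $\sum\varepsilon_i^{-1}\ge m^2/\bigO(\log(1+\varepsilon))$, which is again $\Omega(k^2/\log k)$ at $\varepsilon=2k/3$.
\end{itemize}
So recursive halving with per-level cost $\Theta(\varepsilon_i^{-1}\log n)$ cannot reach the stated runtime once $\varepsilon=\omega(1)$, which the hypothesis $\varepsilon\le 2k/3$ permits.

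The paper's proof does not cover this regime either. It keeps $\varepsilon'=\varepsilon/(2k)$, so its runtime is fine. It then uses $e^{\varepsilon/2}\le 1+\varepsilon$, which is false for $\varepsilon\ge 2.52$. Even the exact factor fails: for $k=6$ and $\varepsilon=4$, $(1+\varepsilon')^k=(4/3)^6>5=1+\varepsilon$. Your choice of $\varepsilon'$ and the paper's fail in complementary ways: yours keeps the degree bound and loses the runtime, while the paper's keeps the runtime and loses the degree bound. What both arguments actually establish is the corollary for $\varepsilon$ at most a constant (e.g.\ $\varepsilon\le 1$). You should state and prove it in that form, rather than assert that the sum is $\bigO(k^2\varepsilon^{-1}\log n)$ on all of $(0,2k/3]$.
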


\begin{proof}
    The statement is trivial for $k=0$ and covered by the undirected version of \autoref{thm:balanced-local} for $k=1$.
    Set $\varepsilon' = \varepsilon / (2 k)$. Note that $\eps' \leq 1/3$.
    We apply the undirected version of \autoref{thm:balanced-local} to $G$ and recurse on the two edge-induced subgraphs $G_\mathrm{red}$ and $G_\mathrm{blue}$. After $k$ iterations, we obtain $2^k$ subgraphs, where each vertex $v$ has degree at most
    \begin{align*}
    \deg_k(v)
    & =
    \parens*{\frac{1 + \varepsilon'}{2}}^k \deg(v)
    + 2 \sum_{\ell = 0}^{k - 1 }
    \parens*{\frac{1 + \varepsilon'}{2}}^\ell
    \\
    &
    \leq 2^{-k} \cdot e^{\eps' \cdot k} \cdot \deg(v)
    + 2 \sum_{\ell = 0}^{\infty}
    \parens*{\frac{1 + 1/3}{2}}^\ell
    = 2^{-k} \cdot \exp(\varepsilon / 2) \cdot \deg(v) + 6
    \\
    & \leq \frac{1+\varepsilon}{2^k}\cdot \deg(v) + 6.
    \end{align*}
    The runtime of each iteration is bounded by $\bigO((\varepsilon')^{-1} \cdot \log n) = \bigO(k \cdot \varepsilon^{-1} \cdot \log n)$ and thus over all $k$ iterations the overall runtime is bounded by $\bigO(k^2 \cdot \varepsilon^{-1} \cdot \log n)$.
\end{proof}

Remark that the $+6$ term in the discrepancy can be removed in graphs of minimum degree at least $12 \cdot 2^k / \epsilon$, by calling the corollary with $\eps/2$ instead of $\eps$.

\subsection{Edge Coloring}

\begin{lemma}
    For any deterministic edge coloring algorithm $\mathcal{A}$ using $k \cdot \Delta$ colors running in time $f(\Delta,n)$ and any $\varepsilon$ such that $k \leq \frac{\varepsilon}{2 (\exp(\varepsilon / 4) - 1)}$, there is a deterministic edge coloring algorithm $\mathcal{A}'$ using $(k + \varepsilon) \cdot \Delta$ colors running in time $\bigO(\varepsilon^{-1} \cdot \log^2 \Delta \cdot \log n) + f(\bigO(\varepsilon^{-1}), n) $.
\end{lemma}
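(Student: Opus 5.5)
The plan is to apply the preceding $2^k$-way splitting corollary recursively until the maximum degree drops to $\bigO(\varepsilon^{-1})$. Then we run $\mathcal{A}$ on each part with a disjoint palette, and account for the color blow-up.

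Concretely, choose $t$ as the largest integer with $2^t \leq c\,\varepsilon\Delta$ for a suitable constant $c$, so that $t = \bigO(\log \Delta)$ and $\Delta/2^t = \Theta(\varepsilon^{-1})$. If $\Delta = \bigO(\varepsilon^{-1})$ already, run $\mathcal{A}$ directly; it uses $k\Delta$ colors in time $f(\bigO(\varepsilon^{-1}),n)$. Otherwise apply the corollary with parameter $\varepsilon/4$ to split $E$ into $2^t$ parts. The cost is $\bigO(t^2 \varepsilon^{-1}\log n) = \bigO(\varepsilon^{-1}\log^2\Delta\log n)$ rounds. Every vertex then has degree at most $\Delta_t := \frac{1+\varepsilon/4}{2^t}\Delta + 6$ in each part.

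The finer bound visible in the proof of that corollary is better suited to our purpose, namely $\Delta_t \leq 2^{-t}e^{\varepsilon/4}\Delta + 6$ once $\varepsilon' = \varepsilon/(4t)$ is used per level. I will work with this $\exp(\varepsilon/4)$ form, since it is the one matching the hypothesis on $k$.

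Next, run $\mathcal{A}$ in parallel on all $2^t$ parts, each with its own palette of $k\Delta_t$ colors. The parts are edge-disjoint subgraphs and one round on each can be simulated in one round on $G$, so this takes $f(\Delta_t,n) = f(\bigO(\varepsilon^{-1}),n)$ rounds. The union is a proper coloring because the palettes are disjoint.

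The total number of colors is
\[
2^t k\Delta_t \leq k e^{\varepsilon/4}\Delta + 6k2^t = k\Delta + k(e^{\varepsilon/4}-1)\Delta + 6k2^t .
\]
The hypothesis $k \leq \frac{\varepsilon}{2(\exp(\varepsilon/4)-1)}$ gives exactly $k(e^{\varepsilon/4}-1)\Delta \leq \frac{\varepsilon}{2}\Delta$. It remains to make the additive term satisfy $6k2^t \leq \frac{\varepsilon}{2}\Delta$. This is achieved by taking $2^t \leq \varepsilon\Delta/(12k)$, i.e.\ $c = 1/(12k)$. With this choice $\Delta/2^t = \bigO(k\varepsilon^{-1})$. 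For the relevant range $k$ is bounded, since $k \leq \frac{\varepsilon}{2(e^{\varepsilon/4}-1)} \leq 2$, so this is $\bigO(\varepsilon^{-1})$.

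The main obstacle is this bookkeeping of constants. The corollary as stated gives the factor $(1+\varepsilon)$, while the hypothesis involves $\exp(\varepsilon/4)$. So I must either invoke it with $\varepsilon/4$ and use $1+\varepsilon/4 \leq e^{\varepsilon/4}$, or reopen its proof to extract the exponential form directly. I must also check that the corollary's range condition $\varepsilon/4 \leq 2t/3$ holds, and that the per-level $+2$ discrepancy terms sum to at most $6$. A further detail is that every part must have maximum degree $\bigO(\varepsilon^{-1})$ as an argument to $f$, which relies on monotonicity of $f$ in $\Delta$.
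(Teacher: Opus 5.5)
Your proposal is correct and follows essentially the paper's route: split recursively for about $\log(\varepsilon\Delta)$ levels with per-level parameter $\Theta(\varepsilon/\log\Delta)$, so the multiplicative blow-up is at most $e^{\varepsilon/4}$, then run $\mathcal{A}$ with disjoint palettes on parts of maximum degree $\bigO(\varepsilon^{-1})$; the only difference is that the paper inlines the recursion of the undirected splitting theorem instead of citing the $2^k$-way corollary. Your choice $2^t \le \varepsilon\Delta/(12k)$ keeps the factor $k$ on the additive term $6k\,2^t$, which the paper's own chain of inequalities drops (it writes $2^h\cdot 3$ where $3k\cdot 2^h$ actually arises), so your accounting of constants is the more careful of the two.
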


\begin{proof}
    Let $G = (V,E)$ be an $n$-vertex graph of maximum degree $\Delta$.
    If $\varepsilon < 6/\Delta$, we simply apply algorithm $\mathcal{A}$ directly as $f(\Delta, n) = f(\mathcal{O}(\varepsilon^{-1}), n)$.
    Otherwise, set $\varepsilon' = \varepsilon / (4 \log \Delta)$ and let $h = \lfloor \log (\varepsilon \Delta / 6) \rfloor$.
    We apply \autoref{thm:balanced-local} to $G$ and obtain two subgraphs $G_\mathrm{red}$ and $G_\mathrm{blue}$ of maximum degree $\Delta_1 = (1+\varepsilon')/ 2 \cdot \Delta + 1$.
    Now we recurse on the resulting subgraphs in parallel for $h$ iterations.
    This procedure yields $2^h$ subgraphs, each of maximum degree
    \[
        \Delta_h = \left( \frac{1 + \varepsilon'}{2} \right)^h \cdot \Delta
        + \sum_{i = 0}^{h - 1} \left( \frac{1 + \varepsilon'}{2} \right)^i
        \leq 2^{-h} \cdot (1 +  \varepsilon' )^h \cdot \Delta + \frac{2}{1 - \varepsilon'}.
    \]
    Therefore we have that
    \[
        \Delta_h \leq 2^{-h} \cdot (1 + \varepsilon')^h \cdot \Delta +  \frac{2}{1 - \varepsilon'} 
        \leq 12 \varepsilon^{-1} \cdot \exp(h \cdot \varepsilon') + 3
        \leq 12 e \cdot \varepsilon^{-1} + 3.
    \]
    Now we execute algorithm $\mathcal{A}$ on all those $2^h$ subgraphs in parallel, using disjoint color palettes. The total number $k'$ of colors used is therefore
    \begin{align*}
        k' &= 2^h \cdot (k \cdot \Delta_h) \leq k \cdot \Delta \cdot (1 + \varepsilon / (4 \log \Delta))^h + 2^h \cdot 3
        \\
        &\leq k \cdot \Delta \cdot (1 +  \varepsilon / (4 \log \Delta))^{\log \Delta} + \frac{\varepsilon}{2} \Delta \\
        &\leq k \cdot \Delta \cdot \exp(\varepsilon / 4) + \frac{\varepsilon}{2} \Delta \\
        &\leq (k + \varepsilon) \cdot \Delta.
    \end{align*}
    Each of the $h = \bigO(\log \Delta)$ levels of the recursive split takes $\bigO((\varepsilon')^{-1} \cdot \log n) = \bigO( \varepsilon^{-1} \cdot \log \Delta \cdot \log n)$ rounds, while simulating algorithm $\mathcal{A}$ takes $f(\bigO(\varepsilon^{-1}),n)$ rounds.
\end{proof}

Plugging in the $(3\Delta / 2)$-edge coloring algorithm from \cite{BMNSU_soda25} running in $\bigO(\Delta^2 \cdot \log n)$ rounds in \LOCAL we get the following corollary.

\begin{corollary}
    For any $\varepsilon > 1 / \Delta$, there is a deterministic \LOCAL algorithm of complexity $\bigO(\varepsilon^{-1} \cdot \log^2 \Delta \cdot \log n + \varepsilon^{-2} \cdot \log n)$ that computes a $(3/2 + \varepsilon)\Delta$-edge coloring on any $n$-vertex graph with maximum degree $\Delta$.
\end{corollary}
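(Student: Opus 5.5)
This corollary follows by instantiating the preceding lemma with the deterministic $(3\Delta/2)$-edge coloring algorithm $\mathcal{A}$ of \cite{BMNSU_soda25}. That algorithm uses $k\cdot\Delta$ colors with $k = 3/2$ and runs in $f(\Delta,n) = \bigO(\Delta^2 \cdot \log n)$ rounds on multigraphs. The lemma then produces a $(3/2+\varepsilon)\Delta$-coloring in time
\[
\bigO(\varepsilon^{-1}\cdot\log^2\Delta\cdot\log n) + f(\bigO(\varepsilon^{-1}),n) = \bigO(\varepsilon^{-1}\cdot\log^2\Delta\cdot\log n + \varepsilon^{-2}\cdot\log n),
\]
which is exactly the claimed bound. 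The recursive splits create subgraphs of maximum degree $\bigO(\varepsilon^{-1})$, and $\mathcal{A}$ runs on all of them in parallel with disjoint palettes. The only care needed is that $\mathcal{A}$ is run with the true maximum degree $\Delta_h = \bigO(\varepsilon^{-1})$ of each part, so its runtime is $\bigO(\Delta_h^2\log n)$. If $\varepsilon<6/\Delta$, the lemma applies $\mathcal{A}$ directly. The hypothesis $\varepsilon>1/\Delta$ then gives $\Delta=\bigO(\varepsilon^{-1})$, so that branch also costs $\bigO(\varepsilon^{-2}\log n)$.

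The main obstacle is the lemma's precondition $k \le \frac{\varepsilon}{2(\exp(\varepsilon/4)-1)}$. Since $e^{y}-1 \le y e^{y}$, the right-hand side is at least $2e^{-\varepsilon/4}$, so the condition holds with $k=3/2$ whenever $e^{-\varepsilon/4}\ge 3/4$, i.e. for all $\varepsilon \le 4\ln(4/3)\approx 1.15$. For larger $\varepsilon$ I reduce to a smaller value. Applying the construction with $\varepsilon_0 := \min\{\varepsilon, 1\}$ gives a $(3/2+\varepsilon_0)\Delta \le (3/2+\varepsilon)\Delta$ coloring. The runtime for $\varepsilon_0 = 1$ is $\bigO(\log^2\Delta\cdot\log n)$, which is within the claimed bound for constant $\varepsilon\ge1$.

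I would also confirm that the splitting step, the undirected version of \autoref{thm:balanced-local}, applies to multigraphs. In the split-graph construction, parallel edges just become distinct edges of $G_\mathrm{split}$, so nothing changes. The additive $+2$ per level is already absorbed in the lemma's computation of $\Delta_h$.
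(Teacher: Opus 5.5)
Your proposal is correct and follows the paper's argument, which is simply to plug the $\bigO(\Delta^2\log n)$-round $(3\Delta/2)$-edge coloring algorithm of \cite{BMNSU_soda25} into the preceding lemma with $k=3/2$. Your explicit check that $3/2 \le \varepsilon/(2(\exp(\varepsilon/4)-1))$ holds for $\varepsilon \le 4\ln(4/3)$, and the reduction to $\varepsilon_0=\min\{\varepsilon,1\}$, fill in details the paper leaves implicit. One small slip: in the branch $\varepsilon<6/\Delta$, the bound $\Delta=\bigO(\varepsilon^{-1})$ comes from the branch condition $\Delta<6/\varepsilon$ itself. It does not come from the hypothesis $\varepsilon>1/\Delta$, which gives the opposite inequality.
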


\subsection{HSO in CONGEST}

As a warm-up we show that HSO can also be solved efficiently in \CONGEST albeit with a slightly relaxed degree-rank condition.
The warm-up is a simple example of how directed degree splitting can be used to solve some instances of HSO, and the results afterwards build on the same idea.

\begin{lemma}
\label{lem:hso-det-congest-powers}
    Let $k \in \mathbb{N}$ be a positive integer.
    Then, there is a $\bigO(k \cdot \log n)$-round \CONGEST algorithm that computes a hypergraph sinkless orientation in any $r$-uniform hypergraph with $r = 2^k$ and minimum degree $\delta = 3^k$.
\end{lemma}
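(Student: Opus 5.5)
We proceed by induction on $k$, halving the rank in each step while only losing a factor $3/2$ in the degree. The halving step uses directed degree splitting in the bipartite representation $B_H$. Since $B_H$ is the communication network, splitting it costs only a constant-size message per edge, which makes the approach \CONGEST-friendly.

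The base case is $k=0$: here $r=1$ and $\delta=1$, so every hyperedge is a singleton and is simply oriented toward its unique vertex, which takes $0$ rounds.

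For the inductive step, take an $r$-uniform hypergraph $H$ with $r=2^k$ and $\delta = 3^k$. We may assume $H$ is exactly $3^k$-regular by letting each vertex drop out of surplus hyperedges. Dropping out changes ranks, so we instead let each vertex keep all its memberships and treat the hyperedges it ignores as ones it never claims; ranks only matter as upper bounds. We then want to split every hyperedge $e$ into two halves $e^{(1)},e^{(2)}$ of rank $2^{k-1}$ each, such that every vertex keeps at least $3^{k-1}$ memberships in the halves it is assigned to.

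The concrete plan is a sink- and sourceless orientation in the spirit of \autoref{cor:balanced-orientation} applied to $B_H$. Orient each edge $(v,e)$ of $B_H$. An outgoing edge at $e$ means that $v$ goes into $e^{(1)}$, and an incoming edge means that $v$ goes into $e^{(2)}$. Balance at $e$ requires exactly $2^{k-1}$ edges of each type, i.e.\ discrepancy $0$ at even-degree nodes. This is more than sink-/sourcelessness gives, so I would instead use a split of exact discrepancy, as in the split-and-orient idea. At each $v$, the edges are paired up into degree-2 virtual copies. Along each path, the direction alternates between ``$v$ keeps membership in the half it enters'' and ``$v$ drops''. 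This gives each $v$ at least $\lfloor 3^k/3\rfloor$ kept memberships if we only require a weaker local guarantee.

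In fact it is simpler to use a coarser guarantee: we only need each vertex to retain a $1/3$ fraction of its memberships, and each half-edge to have rank at most $2^{k-1}$. So the plan is the following. Each hyperedge node $e$ splits into $2^{k-1}$ virtual copies of degree $2$, and each vertex $v$ splits into $3^{k-1}$ copies of degree $3$. We compute a min-degree-$3$ sinkless orientation on this virtual graph via \autoref{lem:sink-sourceless} in $\bigO(\log n)$ rounds. Each virtual copy of $e$ has degree $2$, so the orientation assigns at most one of its two vertices as ``outgoing from the vertex into $e$''. Each $v$-copy gets at least one outgoing edge. The hyperedge $e$ thus becomes a sub-hyperedge $e'$ of rank at most $2^{k-1}$, consisting of the vertices whose edge points into $e$. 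Every vertex lies in at least $3^{k-1}$ such sub-hyperedges, because each of its degree-$3$ copies contributes one. Ranks at most $2^{k-1}$ suffice, since HSO only gets easier with smaller hyperedges; we can pad conceptually, or note that the induction works with ``rank at most''.

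We then recurse on $H'=(V,\{e'\})$, which has rank at most $2^{k-1}$ and minimum degree at least $3^{k-1}$. An HSO on $H'$ is an HSO on $H$, since $e'\subseteq e$. After $k$ levels we have $\bigO(k\log n)$ rounds.

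The main obstacle is the \CONGEST implementation. We need to verify that the sinkless orientation algorithm of \autoref{lem:sink-sourceless} runs in \CONGEST on the virtual graph, and that virtual copies of one node are hosted by that node. Each virtual edge corresponds to a distinct real edge of $B_H$, so one virtual round costs one real round. The deterministic sinkless orientation of \cite{GS_soda17} is known to work in \CONGEST, and that is what I would cite. Reusing $B_H$ at each level is fine because $H'$ is a subgraph of $B_H$.
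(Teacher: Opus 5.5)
\textbf{There is a genuine gap in the rank-halving step.} You claim that because each virtual copy of $e$ has degree $2$, the min-degree-$3$ sinkless orientation lets at most one of its two vertices point into it. But \autoref{lem:sink-sourceless} imposes no constraint on nodes of degree less than $3$, so a degree-$2$ copy of $e$ may be a sink.

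Concretely, orient \emph{every} edge of your virtual graph from the vertex copies toward the hyperedge copies. This is a valid min-degree-$3$ sinkless orientation: every vertex copy has three outgoing edges, and the hyperedge copies are unconstrained. Under your reading it gives $e' = e$ for every hyperedge, so the rank does not drop and the induction makes no progress. Asking for sourcelessness at the vertex copies as well does not help, since nothing then stops an individual hyperedge from keeping all $2^k$ of its vertices.

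The missing idea is to force each degree-$2$ hyperedge copy to be neither a source nor a sink, and the paper does exactly this. Let $c$ be a copy of $e$ containing vertices $u$ and $w$. Contract $c$ into a single edge between the corresponding copies of $u$ and $w$. Then run sinkless orientation on the resulting multigraph, whose nodes are all vertex copies of degree at least $3$. Read an orientation $u \to w$ of the contracted edge as $u \to c \to w$ in $B_H$.

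With this reading:
\begin{itemize}
\item Exactly one vertex claims each copy, so $e'$ has exactly $2^{k-1}$ vertices. Uniformity is preserved, and your detour through ``rank at most'' and padding becomes unnecessary.
\item Every vertex copy still claims at least one copy, so each vertex keeps degree at least $3^{k-1}$.
\item Each contracted edge is a distinct length-$2$ path $u$--$e$--$w$ in $B_H$, so one round on the contracted graph costs $\bigO(1)$ CONGEST rounds.
\end{itemize}

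With this fix the rest of your plan goes through. Your last reduction step at $k=1$ (one degree-$2$ copy per hyperedge, one degree-$3$ copy per vertex) coincides with the paper's final sinkless orientation on the rank-$2$, min-degree-$3$ graph $H^\star$. Your trivial base case at $k=0$ is then equivalent, and the total is $\bigO(k \log n)$ rounds.

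Separately, your regularization step is muddled. Rather than having vertices drop or ignore memberships, just let each vertex copy have degree at least $3$ when $\deg(v) > 3^k$, as the paper does.
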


\begin{proof}
    We execute $k - 1$ iterations of the following degree-rank reduction scheme: Given a $2^{i}$-uniform hypergraph $H$ with minimum degree $3^{i}$, we compute a $2^{i-1}$-uniform subhypergraph $H'$ with minimum degree $3^{i-1}$.
    Let $B = (V \sqcup E, F)$ be the bipartite representation of $H$.
    For each vertex $v \in V$, we divide the hyperedges incident to $v$ arbitrarily among $3^{i-1}$ copies of $v$ such that each copy has degree at least three. Similarly, we split each hyperedge $e \in E$ into $2^{i-1}$ copies of $e$ such that each copy contains at most two vertices in $V$.
    On this virtual graph, we compute a sinkless orientation in $\bigO(\log n)$ rounds using \autoref{lem:sink-sourceless}.
    Importantly, we also need that the degree-two copies of each hyperedge are neither sources nor sinks. This can be achieved by contracting the two edges incident to a single copy before running the sinkless orientation algorithm.
    This leads to an orientation of $B$, where each vertex $v \in V$ has at least $3^{i-1}$ outgoing edges. Further, each edge $e \in E$ has exactly $2^{i-1}$ incoming edges.
    Now let $H'$ be the subhypergraph obtained by removing all edges oriented from $E$ to $V$ in $B$.
    Then $H'$ has minimum degree $\delta' \geq 3^{i-1}$, while each edge has rank $r' = 2^{i-1}$.
    After $k - 1$ iterations of this procedure, we are left with a hypergraph $H^\star$ of minimum degree $3$, where every edge has rank $2$. 
    Thus, we have reduced our original instance $H$ to a graph $H^\star$, where each vertex in $H$ corresponds to exactly one vertex in $H^\star$ and each hyperedge in $H$ corresponds to exactly one edge in $H^\star$.
    Finally, we compute a sinkless orientation on $H^\star$, which immediately yields a hypergraph sinkless orientation of $H$.
    Since each of the $k$ iterations of the degree-rank reduction takes $\bigO(\log n)$ rounds in \CONGEST, the overall procedure takes $\bigO(k \cdot \log n)$ rounds.
\end{proof}

\begin{corollary}
\label{cor:hso-det-congest-basic}
    There is a $\bigO(\log r \cdot \log n)$-round deterministic \CONGEST algorithm that computes a hypergraph sinkless orientation in any $r$-uniform hypergraph with minimum degree $\delta \geq 3\cdot r^{\log_2(3)}$.
\end{corollary}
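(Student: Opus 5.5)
The plan is to reduce the general $r$-uniform instance to the power-of-two case handled by \autoref{lem:hso-det-congest-powers}. Let $k := \lceil \log_2 r \rceil$, so that $2^{k-1} < r \leq 2^k$. There are two discrepancies with the lemma's hypotheses. First, the rank $r$ need not equal $2^k$. Second, the degree need not equal $3^k$.

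The degree is the easy part. We have
\[
3^k \leq 3^{\log_2 r + 1} = 3 \cdot r^{\log_2 3} \leq \delta,
\]
so every vertex simply drops out of all but $3^k$ of its incident hyperedges. This gives a subhypergraph of minimum degree exactly $3^k$ (or at least $3^k$, which suffices). Any sinkless orientation of the subhypergraph extends to one of $H$ by orienting the dropped incidences arbitrarily among the remaining members.

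For the rank, I would not alter the hyperedges themselves. Instead I would check that the proof of \autoref{lem:hso-det-congest-powers} works for hyperedges of rank at most $2^k$ rather than exactly $2^k$.

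1. In each reduction iteration, a hyperedge of rank $r_e \leq 2^i$ is split into $2^{i-1}$ copies, each containing at most two vertices.
2. Copies containing two vertices are contracted into edges. Copies containing one vertex are oriented towards that vertex, or simply discarded.
3. After the sinkless orientation, each hyperedge keeps at most $2^{i-1}$ of its vertices. Each vertex still has at least $3^{i-1}$ copies with an outgoing edge, hence keeps at least $3^{i-1}$ incident hyperedges.
4. So the invariant ``rank at most $2^{i}$, min degree at least $3^{i}$'' propagates down to ``rank at most $2$, min degree at least $3$''.
5. Rank-$1$ edges at the end are assigned to their unique vertex. This only helps, since a vertex receiving such an edge is already satisfied.

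The final step is a sinkless orientation of a multigraph with minimum degree $3$, after deleting satisfied vertices. Alternatively, one can pad hyperedges with dummy vertices, simulated by the hyperedge node in $B_H$, to reach rank exactly $2^k$. The dummies would have to be handled so that they never need satisfying, which is the same bookkeeping.

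The runtime is $k = \bigO(\log r)$ iterations of $\bigO(\log n)$ rounds each, giving $\bigO(\log r \cdot \log n)$. All steps use only the bipartite representation with $\bigO(\log n)$-bit messages, so the algorithm runs in \CONGEST.

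The main obstacle I expect is making sure that nonuniform ranks and odd-sized copies do not break the \lemref-style invariant. In particular, the sink- and sourceless requirement on degree-$2$ hyperedge copies is only needed for copies with two vertices, and degree-$1$ copies must not create sinks among vertex copies. A degree-$1$ hyperedge copy can only give its vertex an outgoing edge, which is harmless. So I would verify that a vertex copy of degree at least $3$ in the virtual graph still gets an outgoing edge under \autoref{lem:sink-sourceless} once degree-$1$ hyperedge copies are pre-oriented towards their vertex.
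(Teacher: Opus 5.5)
Your proposal is correct and follows the paper's proof: set $k = \lceil \log_2 r \rceil$, note that $r \leq 2^k$ and $\delta \geq 3 \cdot r^{\log_2 3} \geq 3^k$, and invoke \autoref{lem:hso-det-congest-powers}. The paper applies that lemma directly even though it is stated only for exactly $2^k$-uniform instances, so your check that its proof tolerates ranks at most $2^k$ fills in a detail the paper glosses over. In that check, make sure single-vertex hyperedge copies are always assigned to their vertex rather than discarded by having the vertex drop out, since otherwise a vertex copy could end up with contracted degree below $3$ and no outgoing edge.
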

\begin{proof}
    Let $k = \ceil{\log_2(r)}$. The hypergraph has maximum rank $r \leq 2^k$, and minimum degree $\delta \geq 3\cdot r^{\log_2(3)} > 3\cdot (2^{k-1})^{\log_2(3)} = 3^k$.
    This allows us to apply \autoref{lem:hso-det-congest-powers}.
\end{proof}

In order to recover a tighter degree-rank condition we use a more balanced degree-rank reduction inspired by \cite{weakSplitting19}.

\begin{algorithm*}[ht!]
	\caption{Degree-rank reduction}
    \label{alg:degree_rank_reduction}
	\begin{algorithmic}[1]
        \Statex \textbf{Input:} A bipartite graph $B = (U \sqcup V,E), r := \max_{v \in V} \deg(v), \delta := \min_{u \in U} \deg(u)$ and a parameter $0 < \varepsilon \leq 1/3$.
        \Statex \textbf{Output:} A subgraph $B'$ with $r' \leq (1/2 + \varepsilon)r + 2$ and $\delta' \geq (1/2 - \varepsilon)\delta - 2$.
        \State Compute a directed split on $B$ with discrepancy at most $\varepsilon \cdot \deg(v)$ for all $v \in U \cup V$.
        \State Let $B'$ be the graph obtained by deleting all edges oriented from $V$ to $U$ in $B$.
	\end{algorithmic}
\end{algorithm*}

\begin{lemma}
    \label{lem:HSO_CONGEST}
    Let $T_\mathrm{split}(n,\varepsilon)$ denote the runtime of a directed degree splitting algorithm with discrepancy at most $\varepsilon \cdot \deg(v)$ in \CONGEST.
    Then, there are distributed algorithms that compute a hypergraph sinkless orientation in any hypergraph with maximum rank $r$ and minimum degree $\delta \geq 50 r$ in
    \begin{itemize}
        \item $\bigO(T_\mathrm{split}(n, 1 / \log r) \cdot \log r + \log n)$ rounds of deterministic \CONGEST and
        \item $\bigO(T_\mathrm{split}(n, 1 / \log r) \cdot \log r + \log \log n)$ rounds of randomized \CONGEST.
    \end{itemize}
\end{lemma}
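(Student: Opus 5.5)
The plan is to apply Algorithm~\ref{alg:degree_rank_reduction} repeatedly to the bipartite representation $B_H$, with $U$ the vertices and $V$ the hyperedges of $H$, and $\varepsilon := 1/\log r$. Each application halves both the rank and the minimum degree, up to factors $(1/2\pm\varepsilon)$ and additive $\pm 2$. After $t = \Theta(\log r)$ rounds the rank is at most a small constant, and the degree/rank ratio is still at least a constant larger than that rank. At that point we finish with a constant-rank HSO routine.

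\textbf{Step 1: correctness of one reduction step.} Orient $B$ with discrepancy at most $\varepsilon\deg$ at every node. Deleting the edges oriented from $V$ to $U$ means a vertex $u$ keeps its out-edges, so $\deg'(u)\ge (1/2-\varepsilon)\deg(u)$, rounding adding the $-2$. A hyperedge keeps only its in-edges, so $r'\le(1/2+\varepsilon)r+2$. Any HSO of the reduced instance $B'$ is an HSO of $B$: each remaining hyperedge is assigned to one of its remaining vertices, and deleted incidences are simply never used. Each such step costs $T_{\mathrm{split}}(n,\varepsilon)$ rounds of \CONGEST, since $B_H$ is the communication graph.

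\textbf{Step 2: tracking the ratio.} Iterate while $r_i$ exceeds a large constant $c$, for $t\le \log r$ iterations. Then
\[
\delta_t/r_t \ge \frac{\delta}{r}\prod_i \frac{1/2-\varepsilon-2/\delta_i}{1/2+\varepsilon+2/r_i}.
\]
Since $\varepsilon=1/\log r$ and $t\le\log r$, the main factor satisfies $\left(\frac{1-2\varepsilon}{1+2\varepsilon}\right)^{t}\ge e^{-O(1)}$. The additive terms are the main obstacle, because the $2/r_i$ terms are not summable unless $r_i$ decays geometrically. It does decay geometrically, $r_i\approx r/2^i$ when $r_i\ge c$, so $\sum_i 2/r_i = O(1/c)$, and similarly for $\delta_i\ge r_i$. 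The product therefore loses only a constant factor. With $\delta\ge 50r$ I expect the final ratio to stay above a constant such as $2$, with $r_t\le c$. Checking that $50$ suffices means verifying these constants carefully. If it is tight, I would stop at a larger constant rank $c$.

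\textbf{Step 3: base case.} We now have constant rank $r_t\le c$ and $\delta_t> r_t$. Because messages have constant rank, the deterministic HSO of Theorem~\ref{thm:hso} can plausibly be simulated in \CONGEST in $O(\log n)$ rounds. As a fallback, I would use Corollary~\ref{cor:hso-det-congest-basic}, which needs $\delta\ge 3r^{\log_2 3}$. That bound is a constant when $r\le c$, so I would instead run the reduction until the ratio still exceeds $3c^{\log_23}/c$, adjusting the $50$ if needed. This costs $O(\log c\cdot\log n)=O(\log n)$ rounds deterministically. For the randomized case, the base uses sinkless orientation via Lemma~\ref{lem:sink-sourceless}, which runs in $O(\log\log n)$ rounds and is \CONGEST-implementable (Ghaffari--Su), applied in Lemma~\ref{lem:hso-det-congest-powers}'s style with constant $k$. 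The total is $O(T_{\mathrm{split}}(n,1/\log r)\log r)$ plus the base cost, as claimed.
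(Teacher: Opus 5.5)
Your skeleton matches the paper's: iterate \autoref{alg:degree_rank_reduction} on $B_H$ for about $\log r$ rounds, lift back as in your Step~1, and finish on a constant-rank instance. As written, though, it does not prove the statement with the constant $50$, and the statement fixes that constant, so ``adjusting the $50$'' is not available. The culprit is $\varepsilon=1/\log r$. Under your own per-step bounds $(1/2\pm\varepsilon)$, the factor $\bigl(\frac{1-2\varepsilon}{1+2\varepsilon}\bigr)^{t}$ with $t\approx\log r$ tends to $e^{-4}\approx 0.018$. So for large $r$, $\delta\ge 50r$ certifies a final ratio of only about $0.9$, before any additive loss, and a ratio above $2$ does not follow. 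Even with the sharper bounds $(1\pm\varepsilon)/2$ that a split of discrepancy $\varepsilon\deg$ gives, the loss is $\bigl(\frac{1-\varepsilon}{1+\varepsilon}\bigr)^{\log r}\approx e^{-2}$, leaving at most about $6.8$. That does not pay for your fallback. Because of the additive $+2$ per step, the recursion never certifies a rank bound below $4$. At stopping rank $c$, \autoref{cor:hso-det-congest-basic} needs ratio $3c^{\log_2 3-1}$, which exceeds $6.8$ for every $c\ge 5$. At $c=4$, certifying $r_t\le 4$ forces $r/2^t<1/2$, so the best certifiable lower bound on $\delta_t$ is below $25$, short of the required $27$. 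Your primary base case, simulating \autoref{thm:hso} in \CONGEST because the rank is constant, is also unjustified: bounded rank, or even bounded degree, does not by itself let an $O(\log n)$-round \LOCAL algorithm run with $O(\log n)$-bit messages.

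What is missing is a smaller $\varepsilon$ and a cleaner end-game. Take $\varepsilon=1/(10k)$ with $k=\lceil\log r\rceil$. For splitting algorithms polynomial in $\varepsilon^{-1}$ this changes $T_{\mathrm{split}}$ only by a constant factor, and it makes the compounded loss harmless. Indeed $(1-\varepsilon)^k\ge 9/10$ and $(1+\varepsilon)^k\le e^{1/10}<6/5$. So after $k$ rounds $\delta_k>\frac{9}{10}\cdot\frac{\delta}{2r}-4\ge 15$ and $r_k<\frac65+\frac{40}{9}<6$. The paper then avoids any constant-rank HSO routine and does two more splits. These use that a node's discrepancy has the parity of its degree and is at most $\varepsilon\deg+1$ (odd degree) or $\varepsilon\deg+2$ (even degree). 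This takes the rank $5\to3\to2$ and the minimum degree $15\to7\to3$. What remains is an ordinary multigraph of minimum degree $3$, where \CONGEST sinkless orientation takes $O(\log n)$ rounds deterministically and $O(\log\log n)$ randomized. Lifting back through all levels gives the HSO of $H$. With this smaller $\varepsilon$, your fallback via \autoref{lem:hso-det-congest-powers} would also have enough room, stopping at rank at most $8$ where $\delta_t\ge 27$ suffices.
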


\begin{proof}
    We execute $k = \lceil \log r \rceil$ iterations of \autoref{alg:degree_rank_reduction} for $\varepsilon = 1/(10k)$.
    Let $\delta_k$ (and $r_k$) denote the minimum degree (and the maximum rank) of the hypergraph $H_{k}$ obtained after $k$ iterations.
    Via induction over the number of iterations one can show that
    \[
        \delta_k \geq \left( \frac{1 - \varepsilon}{2}\right)^k \delta - 2\sum_{i=0}^{k-1} \left(\frac{1 - \varepsilon}{2}\right)^{i} > \left( \frac{1 - \varepsilon}{2} \right)^k \delta - 2 \sum_{i= 0}^{\infty}  \frac{1}{2^i} = \left( \frac{1 - \varepsilon}{2} \right)^k \delta - 4,
    \]
    as well as
    \[
        r_k \leq \left( \frac{1 + \varepsilon}{2}\right)^{k}r + 2\sum_{i=0}^{k-1} \left( \frac{1 + \varepsilon}{2}\right)^{i} \leq \left( \frac{1 + \varepsilon}{2}\right)^{k}r + \frac{2}{1 - \frac{1+\varepsilon}{2}} \leq \left( \frac{1 + \varepsilon}{2}\right)^{k}r + 6.
    \]
    Plugging in our chosen values for $k$ and $\varepsilon$ we get that
    \[
       \delta_{k} > \left( \frac{1-\varepsilon}{2} \right)^{k}\delta - 4 = \left( \frac{1-\frac{1}{10k}}{2} \right)^{k}\delta - 4 > \frac{9}{10} \cdot \frac{1}{2^k}\delta - 4 > \frac{9}{10} \cdot \frac{\delta}{2r} - 4 \geq 15.
    \]
    as well as
    \[
        r_{k} < \left( \frac{1+\varepsilon}{2} \right)^{k}r + \frac{40}{9} = \left( \frac{1+\frac{1}{10k}}{2} \right)^{k}r + \frac{40}{9} < \frac{6}{5} \cdot \frac{1}{2^k}r + \frac{40}{9} < \frac{6}{5} + \frac{40}{9} < 6.
    \]
    Now observe that two more iterations of \autoref{alg:degree_rank_reduction} reduces the minimum degree from at least $15$ to at least $7$ to at least $3$.
    The maximum rank on the other hand reduces from at most $5$ to at most $3$ to at most $2$.
    Thus, the hypergraph $H_{k+2}$ is indeed a graph (that might contain loops) of minimum degree $3$.
    Hence, using \cite{NolinBA} we can compute a sinkless orientation on $G := H_{k+2}$ in $\mathcal{O}(\log n)$ rounds of deterministic \CONGEST or $\mathcal{O}(\log \log n)$ rounds of randomized \CONGEST.
    Finally note that a sinkless orientation on $G$ immediately induces a hypergraph sinkless orientation on $H$: Each vertex in $H$ also appears as a vertex in $G$ and thus receives at least one outgoing hyperedge. On the other hand, each hyperedge in $H$ corresponds to at most one edge in $G$, which is marked outgoing for exactly one incident vertex. For every vertex $v$ that dropped out of a hyperedge $e$ at any point of the degree-reduction procedure, we mark $e$ as incoming for $v$.
\end{proof}

Although not explicitly stated in the paper, the degree splitting algorithm from \cite{GHKMSU_dc20} also works in the \CONGEST model. 
Plugging in their runtime into \autoref{lem:HSO_CONGEST} we get the following concrete runtimes:

\begin{corollary}
    \label{cor:HSO_CONGEST}
    There is a distributed algorithm for computing a hypergraph sinkless orientation in any hypergraph with maximum rank $r$ and minimum degree $\delta \geq 50r$ in $\bigO(\log^2 r \cdot \log \log r \cdot (\log \log \log r)^{1.71} \cdot \log n)$ rounds of deterministic \CONGEST.
\end{corollary}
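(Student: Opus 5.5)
This follows directly from \autoref{lem:HSO_CONGEST} once we plug in a concrete \CONGEST degree splitting algorithm. We use the deterministic splitting algorithm of \cite{GHKMSU_dc20}. It computes a directed split with discrepancy at most $\varepsilon \cdot \deg(v)$ in $\bigO(\varepsilon^{-1} \log \varepsilon^{-1} (\log\log \varepsilon^{-1})^{1.71} \cdot \log n)$ rounds. As remarked above, it is implementable in \CONGEST.

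The first step is to justify that remark. That algorithm is built from two ingredients. One is sink- and sourceless orientations on virtual split graphs, as in \autoref{cor:balanced-orientation}. The other is a recursive bootstrapping that only manipulates virtual copies of vertices created by partitioning incident edges. We therefore check two things. First, each virtual copy lives at its creator, and every edge of the virtual graph corresponds to a distinct edge of the communication graph. Hence one virtual round costs one \CONGEST round. Second, the sinkless orientation subroutine itself runs in \CONGEST, for instance via \cite{NolinBA}. The bipartite representation $B_H$ is the communication graph, so this applies to running the splitting on $B_H$ inside \autoref{alg:degree_rank_reduction}.

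Given this, substitute $\varepsilon = 1/\log r$ into the runtime bound. This gives
\[
T_\mathrm{split}(n, 1/\log r) = \bigO\bigl(\log r \cdot \log\log r \cdot (\log\log\log r)^{1.71} \cdot \log n\bigr).
\]
By the first bullet of \autoref{lem:HSO_CONGEST}, the total deterministic round complexity is $\bigO(T_\mathrm{split}(n,1/\log r)\cdot \log r + \log n)$. The additive $\log n$ term is absorbed, which yields the claimed $\bigO(\log^2 r \cdot \log\log r \cdot (\log\log\log r)^{1.71}\cdot \log n)$ bound.

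One small point remains. The proof of \autoref{lem:HSO_CONGEST} actually uses $\varepsilon = 1/(10\lceil \log r\rceil)$ rather than $1/\log r$. This changes the runtime only by constant factors. For small constant $r$, the logarithmic factors are interpreted as constants.

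The only non-routine step is the \CONGEST implementability of \cite{GHKMSU_dc20}, in particular whether any bootstrapping phase needs large messages. If that check fails, we would substitute \autoref{thm:balanced-local}. That is not obviously \CONGEST-compatible either, because of the HSO and voting-block simulation, so we expect the argument to rest on verifying \cite{GHKMSU_dc20} directly.
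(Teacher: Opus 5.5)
Your proposal is correct and follows the paper's own argument. The paper only asserts that the splitting algorithm of \cite{GHKMSU_dc20} runs in \CONGEST, then plugs its $\bigO(\varepsilon^{-1}\log\varepsilon^{-1}(\log\log\varepsilon^{-1})^{1.71}\log n)$ bound with $\varepsilon=\Theta(1/\log r)$ into the first bullet of \autoref{lem:HSO_CONGEST} and absorbs the additive $\log n$, exactly as you do. Your extra \CONGEST check goes beyond the paper. The one refinement is that in the recursive bootstrapping levels a virtual edge may correspond to an edge-disjoint path of $G$ rather than a single edge, so a virtual round costs more than one round. This causes no congestion, and the extra cost is already part of the cited runtime.
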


\begin{corollary}
\label{cor:rand-hso-from-splitting}
    There are distributed algorithms for computing a hypergraph sinkless orientation in any hypergraph with maximum rank $r$ and minimum degree $\delta \geq 50r$ in 
    \begin{itemize}
        \item $\bigO(\log^2 r \cdot 10^{\log^\ast r} \cdot  \log \log n)$ rounds of randomized \LOCAL and
        \item $\bigO(\log^2 r \cdot \log \log r \cdot (\log \log \log r)^{1.71} \cdot \log \log n)$ rounds of randomized \CONGEST. 
    \end{itemize}
    Note that for $r = \omega(\log n)$, the dependency on $r$ can be reduced to $\log n$ by the subsampling technique presented in the proof of \autoref{thm:balanced-local}.
\end{corollary}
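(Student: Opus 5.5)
This follows directly from \autoref{lem:HSO_CONGEST}: we plug in the best available directed degree splitting subroutine with $\varepsilon = 1/\log r$ and bound the resulting round complexity in each of the two models.

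\emph{Randomized \LOCAL.} We use the randomized directed splitting algorithm of \autoref{thm:balanced-local-rand}. It gives discrepancy at most $\varepsilon\cdot\deg(v)+1$ or $+2$. The additive constant is harmless, since \autoref{alg:degree_rank_reduction} already budgets a $+2$ loss per iteration. If one insists on the clean $\varepsilon\cdot\deg(v)$ form, we instead run the splitting with $\varepsilon/2$: vertices of degree at least $4/\varepsilon$ then absorb the constant, and low-degree vertices only occur once $r_i$ or $\delta_i$ is already $O(\log r)$, where the recursion in the proof of \autoref{lem:HSO_CONGEST} already tracks the additive terms. With $\varepsilon = 1/\log r$ the splitting takes
\[
T_\mathrm{split}(n, 1/\log r) = \bigO(\log r \cdot 10^{\log^\ast \log r}\cdot \log\log n) \leq \bigO(\log r\cdot 10^{\log^\ast r}\cdot\log\log n)
\]
rounds. \autoref{lem:HSO_CONGEST} multiplies this by $\log r$ and adds $\bigO(\log\log n)$ for the final randomized sinkless orientation. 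The proof of \autoref{lem:HSO_CONGEST} is model-agnostic: its only communication steps are the splitting calls and the final sinkless orientation, and \LOCAL is at least as strong as \CONGEST. So the same bound holds in randomized \LOCAL, giving $\bigO(\log^2 r\cdot 10^{\log^\ast r}\cdot\log\log n)$.

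\emph{Randomized \CONGEST.} We instead use the splitting algorithm of \cite{GHKMSU_dc20}, which, as remarked above, works in \CONGEST. Its deterministic complexity is $\bigO(\varepsilon^{-1}\log\varepsilon^{-1}(\log\log\varepsilon^{-1})^{1.71}\cdot\log n)$. Its randomized variant replaces the sink- and sourceless orientation calls by the $\bigO(\log\log n)$ randomized ones of \autoref{lem:sink-sourceless}, giving $\bigO(\varepsilon^{-1}\log\varepsilon^{-1}(\log\log\varepsilon^{-1})^{1.71}\cdot\log\log n)$. With $\varepsilon = 1/\log r$ and the extra $\log r$ factor from \autoref{lem:HSO_CONGEST}, we get $\bigO(\log^2 r\cdot\log\log r\cdot(\log\log\log r)^{1.71}\cdot\log\log n)$.

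\emph{Main obstacle.} The hard part is justifying the randomized \CONGEST splitting bound. We must check that the bootstrapping of \cite{GHKMSU_dc20} uses only subroutines whose randomized counterparts run in $\bigO(\log\log n)$ rounds, namely the sinkless orientation of \cite{GS_soda17} and \cite{NolinBA} and the virtual splitting of nodes. We must also check that each of its $\bigO(\varepsilon^{-1}\,\mathrm{polylog}\,\varepsilon^{-1})$ phases only costs a multiplicative simulation overhead that fits in \CONGEST. This holds because virtual copies and contracted paths can be simulated by forwarding single $\bigO(\log n)$-bit messages along edges.

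\emph{Closing remark.} For $r=\omega(\log n)$, we first apply the initial subsampling step from the proof of \autoref{thm:better-rand-HSO} (each hyperedge keeps $16\log n$ random vertices). This preserves $\delta/r$ up to a constant factor with high probability, so it suffices to start from a slightly larger constant than $50$. After this step the rank is $\bigO(\log n)$, which justifies replacing $r$ by $\log n$ in the bounds.
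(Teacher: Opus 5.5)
Your proposal is correct and matches the paper's implicit argument. You plug \autoref{thm:balanced-local-rand}, respectively the randomized \CONGEST variant of the splitting of \cite{GHKMSU_dc20}, with $\varepsilon = \Theta(1/\log r)$ into the $\bigO(\log r)$ iterations of \autoref{alg:degree_rank_reduction} in \autoref{lem:HSO_CONGEST}, and finish with a randomized sinkless orientation. Your side remarks are also accurate: the additive $+1$/$+2$ discrepancy fits the $+2$ slack of \autoref{alg:degree_rank_reduction}, and the relevant subsampling is the one in the proof of \autoref{thm:better-rand-HSO}, so the statement's pointer to \autoref{thm:balanced-local} is evidently a slip.
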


To sum up our current understanding of hypergraph sinkless orientation in randomized \LOCAL, we know the problem to be solvable in the following complexities depending on the ratio between the parameters $\delta$ and $r$:
\begin{itemize}
    \item when only assuming $\delta/r > 1$, we know the problem to be solvable in $\bigO(\log_{\delta/r} n)$ by using the deterministic algorithm for the problem (\autoref{thm:hso}),
    \item assuming a larger ratio of $\delta / r\geq 50$, we know the problem to be solvable in $\bigO(\log^2 r \cdot \log \log n)$ rounds, and in fact, in $\bigO(\log^3 \log n)$ by using a simple subsampling argument (\autoref{cor:rand-hso-from-splitting}),
    \item for an even larger ratio of $\delta / r \geq 2x \cdot 10^{\log^* r}$ for any value $x\geq 2$, we have an algorithm of complexity $\bigO(\log_x \log n)$
    (\autoref{thm:better-rand-HSO}). The complexity is even $\bigO(1)$ when $\delta / r = \Omega(\log n)$.
\end{itemize}
Note that an algorithm of complexity $\bigO(\log_{\delta/r} \log n)$ would subsume all these results, leaving as a natural open question whether such a complexity is achievable. This complexity is a natural target, as it corresponds to the complexity of deterministically solving similar HSO instances of size $\poly \log n$. 

\section{Mending Radius Lower Bound}

Before defining the mending radius, let us formally define what we mean by a partial solution to a problem.
For simplicity, we only give the definition of a partial solution for directed splitting. For the more general definition that applies to all locally checkable labeling problems, see~\cite{BHMORS_sirocco22}.

For a graph $G=(V,E)$, we represent a partial orientation of the edges of that graph as an assignment $\psi:E \to \set{-,+,\bot}$.
For an edge $e \in E$, $\psi(e) = \bot$ corresponds to $e$ being unoriented.
For an edge $e=uv \in E$, with $u < v$ (i.e., $v$ has a larger identifier than $u$), $\psi(uv) = +$ represents that the edge is oriented from $u$ to $v$. Conversely $\psi(uv) = -$ indicates that the edge is oriented in the opposite way, from $v$ to $u$.
When $\psi(e) \neq \bot$ for all $e\in E$, $\psi$ is said to be a full orientation.
We say that a partial solution $\psi'$ extends another partial solution $\psi$ if $\psi'^{-1}(\bot) \subseteq \psi^{-1}(\bot)$ and for all $e$ with $\psi(e) \neq \bot$ it holds that $\psi(e) = \psi'(e)$.
For two solutions $\psi$ and $\psi'$, their difference $\psi \triangle \psi'$ is the set of edges $\set{e \in E \mid \psi(e) \neq \psi'(e)}$.

For each vertex $v$, let us partition its neighbors according to whether their identifiers are lower or greater than that of $v$. That is, we define $N^+(v) = \set{u \in N(v) \mid u > v}$ and $N^-(v) = N(v) \setminus N^+(v)$.
A partial solution to the problem of computing a directed splitting with discrepancy $\eps\Delta$ is an assignment $\psi$ satisfying the following properties at each vertex $v$
\begin{align*}
\card{\psi^{-1}(+) \cap \{ uv \in E: u \in N^+(v)\}} + \card{\psi^{-1}(-) \cap \{ uv \in E: u \in N^-(v)\}} \leq \frac{1+\eps}{2}\Delta\ ,
\\
\card{\psi^{-1}(-) \cap \{ uv \in E: u \in N^+(v)\}} + \card{\psi^{-1}(+) \cap \{ uv \in E: u \in N^-(v)\}} \leq \frac{1+\eps}{2}\Delta\ .
\end{align*}
Intuitively, the first equation indicates that a vertex $v$ does not have too many outgoing edges, while the second indicates that it does not have too many incoming edges.
When these equations hold at a vertex $v$, it is always possible to give an orientation to the unoriented edges incident to $v$ such that the discrepancy at $v$ is at most $\eps \Delta$.

However this is a very local property. It might not be possible to give an orientation to unoriented edges such that all vertices have discrepancy below $\eps \Delta$.
In fact, we will show that there exists a graph $G=(V,E)$, a partial assignment $\psi$, and an edge $e \in \psi^{-1}(\bot)$, such that for any partial assignment $\psi'$ extending $\psi$ on $e$ (i.e., $\psi'(e) \neq \bot$), the graph induced by the edges in $\psi \triangle \psi'$ has diameter $\Omega(\eps^{-1}\log n)$.
The mending radius measures exactly this: given a partial assignment $\psi$, which we extend to a new edge $e$, how large can the diameter of the subgraph induced by the changes made to $\psi$ be?

\begin{definition}[Mendability, mending radius~\cite{BHMORS_sirocco22}]
    For any integer $T$, a partial solution $\psi$ on a graph $G$ is $T$-mendable at a vertex $v$ if there exists a valid partial solution $\psi'$ extending $\psi$ s.t.\ $\psi'(e) \neq \bot$ for all edges $e$ incident to $v$ and for every edge $uu' \in \psi \triangle \psi'$, $\max_{w\in\set{u,u'}} \dist_G(v,w) \leq T$.
    
    A problem $\Pi$ on input graph family $\mathcal{G}$ is $T$-mendable for a non-decreasing function $T:\mathbb{N} \to \mathbb{N}$ if for every graph $G =(V,E) \in \mathcal{G}$, every valid partial solution $\psi$ on $G$, and every vertex $v$, $\psi$ is $T(\card{V})$-mendable at $v$.

    A problem $\Pi$ on input graph family $\mathcal{G}$ has mending radius $\Theta(T(n))$ for a non-decreasing function $T:\mathbb{N} \to \mathbb{N}$ if there exist two functions $T',T''$ s.t.\ $T'(n) \in \Theta(T(n))$, $T''(n) \in \Theta(T(n))$, and $\Pi$ on input graph family $\mathcal{G}$ is $T'$-mendable but not $T''$-mendable.
\end{definition}

To show our mending lower bound, at a high level, we build a graph consisting of $\Omega(\eps^{-1} \log n)$ ordered layers. The lowest layer contains unoriented edges between its vertices, while all other edges connect vertices in different layers and are oriented towards lower layers.
Each layer is only connected to the two layers preceding it and the two layers succeeding it, so that the graph has diameter $\Omega(\eps^{-1} \log n)$. Having all edges pointing in the direction of the lower layers makes it impossible to give an orientation to edges in the lowest layer without making changes in the higher layers.
A similar idea was used in \cite{CHLPU_talg20} in the context of sub-$4\Delta/3$-edge coloring.

\thmMendability*
    
\begin{proof}[Proof of the directed case]
    For simplicity throughout the proof, we assume $\eps \Delta/2$ and $\Delta/2$ to be integers, so that a fortiori $(1-\eps)\Delta/2$ and $(1+\eps)\Delta/2$ are integers.
    
    We build a graph $G=(V,E)$ and a partial solution $\psi$ on $G$
    that is not $c\eps^{-1} \log (n/\Delta)$-mendable at some vertex $v$ for a small enough constant $c\in \Theta(1)$.
    At a high level, we construct a graph and a solution in which a lot of edges are oriented towards some unoriented clique.
    In oriented parts of the graph, all vertices have the maximum allowed discrepancy, so as to create an imbalance between the number of edges pointing towards the unoriented clique and the number of edges pointing away from it.
    In the unoriented clique, the unoriented edges ensure that no vertex has a discrepancy above the allowed threshold, but extending the solution to the clique cannot be done without changing the orientation of some edges at distance $\Omega(\eps^{-1} \log (n/\Delta))$ from the clique.
    
    The graph consists of $D \in \Theta(\eps^{-1} \log (n/\Delta))$ sets of vertices $V_1,\dots,V_D$, forming layers.
    For each $i\in [D-1]$, $\card{V_{i+1}} \approx \frac{1+\eps}{1-\eps}\card{V_i}$.
    Also, for each $i\in [3,D-2]$, each vertex $v \in V_i$ has $\frac{1+\eps}{2}\Delta$ edges to vertices in $V_{i+1} \sqcup V_{i+2}$, and $\frac{1-\eps}{2}\Delta$ to vertices in $V_{i-1} \sqcup V_{i-2}$.
    Vertices from the first set $V_1$ form a $(\frac{1-\eps}{2}\Delta+1)$-clique.
    The next layers $V_2,\dots,V_D$ are constructed iteratively as follows:
    \begin{itemize}
        \item Each vertex in $V_1$ has $\frac{1+\eps}{2}\Delta$ edges to vertices in $V_2 \sqcup V_3$, for a total of $\frac{1+\eps}{2}\Delta\card{V_1}$. Let $V_2$ be of size $\floor{\frac{1+\eps}{1-\eps}\card{V_1}}$, and add $\frac{1-\eps}{2}\Delta\card{V_2}$ edges between $V_2$ and $V_1$, respecting that each vertex in $V_2$ is connected to exactly $\frac{1-\eps}{2}\Delta$ vertices in $V_1$, and each vertex in $V_1$ is connected to at most $\frac{1+\eps}{2}\Delta$ vertices in $V_2$.
        Vertices in $V_1$ are missing $\frac{1+\eps}{2}\Delta\card{V_1} - \frac{1-\eps}{2}\Delta \floor*{\frac{1+\eps}{1-\eps}\card{V_1}}$ incident edges for all of them to be of degree $\Delta$. Let us denote this number of missing edges by $m_1$.
        By definition $m_1$ is between $0$ and $\frac{1-\eps}{2}\Delta$. We choose edges between $V_1$ and $V_2$ s.t.\ exactly one vertex $w_1 \in V_1$ is missing $m_1$ edges to be of degree $\Delta$.
        \item When constructing $V_3$, we connect $m_1$ vertices in $V_3$ to $w_1$, before adding edges between $V_2$ and $V_3$.
        $V_3$ is chosen to be of size $\floor{\frac{2}{(1-\eps)\Delta}\cdot(\frac{1+\eps}{2}\Delta\card{V_2}+m_1)}$.
        Again, we add edges between vertices of $V_2$ and $V_3$, such that all vertices in $V_3$ have exactly $\frac{1-\eps}{2}\Delta$ edges to $V_2 \cup \set{w_1}$, and all vertices in $V_2$ except one have $\frac{1+\eps}{2}\Delta$ edges to $V_3$. One vertex $w_2 \in V_2$ has only $\frac{1+\eps}{2}\Delta - m_2$ edges to $V_3$, where $m_2 \in [0 , \frac{1-\eps}{2}\Delta)$.
        \item We continue this construction recursively, choosing $\card{V_{i+2}} = \floor{\frac{2}{(1-\eps)\Delta}\cdot(\frac{1+\eps}{2}\Delta\card{V_{i+1}}+m_{i})}$ for each $i \in [1,D-2]$, defining a vertex $w_i$ and a number of missing edges $m_i$ for each layer.  
    \end{itemize}
    For $\eps$ close to $0$, layers grow by a multiplicative factor of about $1+2\eps$.
    As the first layer is of size $\Theta(\Delta)$, the process constructs a graph of size $\leq n$ when choosing $D$, the number of layers, to be $c\cdot \eps^{-1} \log (n/\Delta)$ with a small enough constant $c$.

    We now give a partial orientation $\psi$ of that graph, and argue that it is not $(D/2-1)$-mendable.
    For each edge between two vertices $u\in V_i$ and $v\in V_j$ s.t.\ $i<j$, we orient the edge towards $u$. That is, we orient edges towards lower layers. We do not orient edges in the clique formed by $V_1$.

    Consider an orientation $\psi'$ extending $\psi$ to edges of the clique formed by $V_1$. Let us denote $\bias(v,\psi) = \indeg(v) - \outdeg(v)$ the difference between the indegree and outdegree of a vertex in a partial orientation $\psi$.
    From all edges between $V_1$ and $V_2 \sqcup V_3$ being oriented towards $V_1$, we have that
    \begin{align*}
    \sum_{v \in V_1} \bias(v,\psi)
    & =
    \frac{1+\eps}{2}\Delta\card{V_1}\ .
    \intertext{
    Similarly, if we do the same sum over all layers from $V_1$ to $V_{D-2}$, we get
    }
    \sum_{i=1}^{D-2}\sum_{v \in V_i} \bias(v,\psi)
    &=
    \frac{1+\eps}{2}\Delta\card{V_1} + \eps\Delta\sum_{i=2}^{D-2}\card{V_i}
    >
    \eps\Delta\sum_{i=1}^{D-2}\card{V_i}.
    \end{align*}
    That is, the average bias of a node $v \in \bigcup_{i=1}^{D-2} V_i$ is strictly greater than $\eps \Delta$.

    Suppose that $\psi'$ only differs from $\psi$ on edges incident to the vertices in $\bigcup_{i=1}^{D-4} V_i$. Any change in that set of edges does not change the sum of biases.
    Indeed, when changing the orientation of an edge $uv$ in that set, the effect on the bias of $v$ is the opposite of the effect on the bias of $u$, and both biases are in the sum. Therefore, if $\psi'$ only differs from $\psi$ on edges within distance $D/2-2$ from $V_1$, there must exist a vertex of bias greater than $\eps \Delta$, meaning that the orientation does not have discrepancy $\leq \eps \Delta$ on all vertices.
    Mending the partial solution therefore requires changes at distance $\Omega(D) = \Omega(\eps^{-1} \log (n/\Delta))$ from where we want to extend the solution, and the problem has mending radius $\Omega(\eps^{-1} \log (n/\Delta))$.
\end{proof}

\begin{proof}[Proof of the undirected case]
    We build a different graph which again consists of $D \in \Theta(\eps^{-1}\log(n/\Delta))$ layers.
    The first two layers, with sets of nodes $V_1$ and $V_2$, form a bipartite graph where each node $v \in V_1$ is connected to $\Delta$ nodes in $V_2$, and each node in $V_2$ has $(1-\eps)\Delta/2$ neighbors in $V_1$. We give them size $\card{V_1} = (1-\eps)\Delta/2$ and $\card{V_2} = \Delta$.

    For each $i \in [D]$, the neighbors of a node in $V_i$ are contained in $V_{i-3}\sqcup V_{i-1} \sqcup V_{i+1}\sqcup V_{i+3}$ (with $V_i = \emptyset$ when $i<1$ or $i>D$).
    For each $i\in [2,D-3]$, as in the directed case, we have that $\card{V_{i+1}} \approx \frac{1+\eps}{1-\eps}\card{V_i}$. That is, the layers grow geometrically in size by a small $1+\bigO(\eps)$ factor. This is achieved, similar to the directed case, by having each node in $V_i$ have $(1+\eps)\Delta/2$ edges with nodes in $V_{i+1} \sqcup V_{i+3}$, and $(1-\eps)\Delta/2$ edges with nodes in $V_{i-1} \sqcup V_{i-3}$. Most of the edges are between successive layers, but some edges between layers $i$ and $j$ where $\abs{i-j} = 3$ are introduced to deal with ``rounding errors'' stemming from the fact that layers might not be able to grow by exactly a $\frac{1+\eps}{1-\eps}$ factor.
    
    We now describe a partial undirected splitting $\psi$ that is not $o(D)$-mendable. For all $i \in [2,D]$ such that $i = 0 \mod 2$, all edges between a node in $V_i$ and a node in $V_{i+1}\sqcup V_{i+3}$ are colored red.
    For all $i \in [3,D]$ such that $i = 1 \mod 2$, all edges between a node in $V_i$ and a node in $V_{i+1}\sqcup V_{i+3}$ are colored blue. Edges with an endpoint in $V_1$ are uncolored, and they are the only uncolored edges.

    For any $i \in [D]$ and a node $v \in V_i$, let $\bias(v,\psi) = (-1)^i(\reddeg(v) - \bluedeg(v))$ be the difference between the numbers of red and blue edges incident on $v$, with an alternating sign depending on the parity of $v$'s layer.

    If we sum the biases over all nodes in the first $D-3$ layers, we get 
    \begin{align*}
    \sum_{i=1}^{D-3}\sum_{v \in V_i} \bias(v,\psi)
    =
    \frac{1+\eps}{2}\Delta\card{V_1} + \eps\Delta\sum_{i=2}^{D-3}\card{V_i}
    >
    \eps\Delta\sum_{i=1}^{D-3}\card{V_i}.
    \end{align*}
    Changing the color of an edge between two layers $i,j \leq D-3$ does not change this sum, since changing the color of an edge $uv$ has opposite effects on the biases of $u$ and $v$. Therefore, unless we change the color of an edge incident to a node in a layer $V_i$ with $i \geq D-2$, there necessarily exists a node of bias $> \eps \Delta$, i.e., a node which does not satisfy the splitting's requirements.
    Extending the undirected splitting to the edges incident on $V_1$ thus requires changing the colors of some edges incident on nodes in layers $\geq D-2$, at distance $D/3-1$ from where the solution is being extended.
\end{proof}

An important caveat to our lower bound is that some problems have a mending radius far exceeding their complexity.
For instance, there exists a problem of mending radius $\Theta(n)$ but complexity $\bigO(\log^* n)$: coloring paths with $5$ colors, where the colors are partitioned into a set of size $2$ and another one of size $3$, and colors are only compatible within their set~\cite[Theorem~7.1, arXiv version]{BHMORS_sirocco22}. An even more pathological example is the following LCL with $3$ labels: $1$ and $2$ can only be adjacent to each other, and $3$ can only be adjacent to itself. The problem has a trivial $0$ round solution (output $3$ everywhere) but mending radius $\Omega(n)$.
These examples show what to look out for: the mending radius can be overly pessimistic, because a problem can admit fragile hard-to-mend partial solutions that an efficient algorithm will simply never consider.
In fact, many problems admit randomized algorithms of complexity lower than the mending radius, showing that randomness is one way to avoid bad partial solutions.

\appendix

\section{Complementary subroutines / lemmas}

\begin{lemma}[The Shattering Lemma {\cite[Lemma 6 (simplified)]{FG17}}]
    \label{lem:shattering}
    Let $G = (V,E)$ be a graph with maximum degree $\Delta$ and $c_{1},c_{2} \geq 1$ constants satisfying $c_{1} > 4c_{2} + 2$. Consider a process which generates a random subset $B \subseteq V$ such that $\mathbb{P}[v \in B] \leq \Delta^{-c_{1}}$ and such that the random variables $\mathbbm{1}[v \in B]$ depend only on the randomness of nodes within at most $c_{2}$ hops from $v$, for all $v \in V$. Moreover, let $H = G^{[2c_{2}+1,4c_{2}+2]}$ be the graph which contains an edge between $u$ and $v$ iff their distance in $G$ is between $2c_{2} + 1$ and $4c_{2}+2$. Then, with probability at least $1 - n^{-(c_{1} - 4c_{2} - 2)}$, the following holds:
    \begin{itemize}
        \item Any connected component in $G[B]$ has size at most $\bigO(\log_{\Delta}n \cdot \Delta^{2c_{2}})$.
    \end{itemize}
\end{lemma}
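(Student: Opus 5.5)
The plan is the standard ``$(2c_2+1,4c_2+2)$-tree'' counting argument. Large connected components of $G[B]$ will be shown to contain large sets of bad nodes that are pairwise far apart, hence independent, yet connected in $H$. A union bound over all such connected sets in $H$ then shows that none of size about $\log_\Delta n$ survives.

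\emph{Step 1 (extracting a spread-out skeleton).} Fix a connected component $C$ of $G[B]$ and a node $v \in C$. I would greedily build a set $S \subseteq C$ with $S = \{v\}$ initially. As long as some node of $C$ is at $G$-distance more than $2c_2$ from every node of $S$, I take a path inside $C$ from $S$ to such a node. Along this path the distance to $S$ changes by at most one per step, so the path contains a first node $u$ with $\dist_G(u,S) = 2c_2+1$. I add $u$ to $S$.

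This invariant-preserving choice gives two properties:
\begin{itemize}
\item nodes of $S$ are pairwise at distance at least $2c_2+1$;
\item each new node is at distance exactly $2c_2+1 \le 4c_2+2$ from some earlier node, so $S$ is connected in $H$.
\end{itemize}
At termination every node of $C$ lies within distance $2c_2$ of $S$. Hence $|C| \le |S| \cdot \bigO(\Delta^{2c_2})$, since a ball of radius $2c_2$ has at most $1+\Delta+\dots+\Delta^{2c_2} = \bigO(\Delta^{2c_2})$ nodes. Truncating the process, I conclude: if $|C| \ge t \cdot \bigO(\Delta^{2c_2})$, then $B$ contains a set of exactly $t$ nodes that is $H$-connected and pairwise $(2c_2+1)$-separated in $G$.

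\emph{Step 2 (probability of a fixed skeleton).} For any fixed set $S$ of $t$ nodes pairwise at distance at least $2c_2+1$, the $c_2$-hop dependency regions of the events $\mathbbm{1}[u\in B]$, $u\in S$, are disjoint. The events are therefore mutually independent, and
\[
\mathbb{P}[S\subseteq B] \le \Delta^{-c_1 t}.
\]

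\emph{Step 3 (counting and union bound).} The maximum degree of $H$ is at most $\Delta^{4c_2+2}$. The number of $H$-connected sets of size $t$ containing a fixed node is at most $(e\Delta^{4c_2+2})^{t}$, by the standard bound on subtrees or lattice animals in a bounded-degree graph, which is at most $4^t\Delta^{(4c_2+2)t}$. A union bound over the $n$ possible roots gives
\[
\mathbb{P}[\exists \text{ bad skeleton of size } t] \le n\cdot\bigl(4\Delta^{4c_2+2-c_1}\bigr)^t .
\]
I would choose $t = \Theta(\log_\Delta n)$ and use $c_1 > 4c_2+2$ to make this at most $n^{-(c_1-4c_2-2)}$. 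Combining with Step 1 then bounds every component by $\bigO(\log_\Delta n\cdot\Delta^{2c_2})$.

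\emph{Main obstacle.} I expect the delicate point to be the constants in Step 3. The factor $4^t$ must be absorbed, either by taking $t$ a suitable constant multiple of $\log_\Delta n$ or by assuming $\Delta$ at least a large constant, while still obtaining the exact exponent $c_1-4c_2-2$. Small $\Delta$ can be handled by adjusting the hidden constant in the size bound. The greedy construction in Step 1 is conceptually the key idea, but it is routine once the ``first node at distance exactly $2c_2+1$'' trick is in place.
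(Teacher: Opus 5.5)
Your route is the standard one behind the cited result of Fischer and Ghaffari; the paper itself gives no proof and only quotes the lemma. Steps 1 and 2 are correct.

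The gap is in Step 3. Your union bound is $n\,(4\Delta^{4c_2+2-c_1})^t$, and the hypothesis only gives $\eta := c_1-4c_2-2>0$. Whenever $\Delta^{\eta}\le 4$, the ratio $4\Delta^{-\eta}$ is at least $1$. The bound then does not decay in $t$ at all, so no choice of $t$ rescues it; in particular, enlarging the hidden constant in the size bound does not. This is not a small-$\Delta$ corner case: for $\eta=0.1$ it covers every $\Delta\le 4^{10}$. Even when $\Delta^{\eta}>4$, $t=\log_\Delta n$ does not suffice; you need $t=\frac{1+\eta}{\eta-\log_\Delta 4}\log_\Delta n$. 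So as written, your argument proves the lemma only under the extra assumption $\Delta^{c_1-4c_2-2}>4$, which the statement does not make. Your proposed remedy for small $\Delta$ does not work.

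The fix is already contained in your Step 1, which gives more than $H$-connectivity. Each new node is at distance \emph{exactly} $2c_2+1$ from some earlier node. Hence $S$ is connected in the graph $G^{=2c_2+1}$ that joins nodes at distance exactly $2c_2+1$. This graph has maximum degree at most $\Delta(\Delta-1)^{2c_2}\le\Delta^{2c_2+1}$.

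Counting connected sets in this graph instead of in $H$, the union bound becomes $n\,(4\Delta^{2c_2+1-c_1})^t$. Since $c_1-2c_2-1=2c_2+1+\eta\ge 3+\eta$, the ratio is at most $4\Delta^{-3-\eta}\le 2^{-1-\eta}<1$ for every $\Delta\ge 2$. Take $t=\lceil\log_\Delta n\rceil$ and use $\log_\Delta 4\le 2$. The bound is then at most $n\cdot n^{\log_\Delta 4-3-\eta}\le n^{-\eta}$, which is exactly the claimed failure probability $n^{-(c_1-4c_2-2)}$. On the complementary event, Step 1 bounds every component by $(t-1)\cdot\bigO(\Delta^{2c_2})=\bigO(\log_\Delta n\cdot\Delta^{2c_2})$.
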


\begin{lemma}[Chernoff bound]
    \label{lem:chernoff}
    Let $X_1,\dots,X_n \in \{ 0, 1 \} $ be independent Bernoulli random variables and let $X = \sum_{i = 1}^{n} X_i$. Then, for any $\varepsilon > 0$ it holds that
    \[
        \mathbb{P}[X \geq (1 + \varepsilon) \cdot \mathbb{E}[X]] \leq \exp\left(- \frac{\varepsilon^2 \cdot \mathbb{E}[X]}{2 + \varepsilon}\right)
    \]
    and for any $\varepsilon \in [0,1]$ it holds that
    \[
        \mathbb{P}[X \leq (1 - \varepsilon) \cdot \mathbb{E}[X]] \leq \exp\left(\frac{-\varepsilon^2 \cdot \mathbb{E}[X]}{2}\right).
    \]
\end{lemma}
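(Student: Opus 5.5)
We prove both tails with the standard exponential moment (Chernoff) method. Write $\mu := \mathbb{E}[X] = \sum_i p_i$ with $p_i := \mathbb{P}[X_i = 1]$.

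\textbf{Upper tail.} Fix $t > 0$. By Markov's inequality applied to $e^{tX}$,
\[
\mathbb{P}[X \geq (1+\varepsilon)\mu] \leq e^{-t(1+\varepsilon)\mu}\, \mathbb{E}[e^{tX}].
\]
Independence gives $\mathbb{E}[e^{tX}] = \prod_i \bigl(1 + p_i(e^t - 1)\bigr)$. Using $1 + y \leq e^y$, this is at most $\exp\bigl(\mu(e^t - 1)\bigr)$. Choosing $t = \ln(1+\varepsilon)$ gives
\[
\mathbb{P}[X \geq (1+\varepsilon)\mu] \leq \exp\bigl(-\mu\, g(\varepsilon)\bigr), \qquad g(\varepsilon) := (1+\varepsilon)\ln(1+\varepsilon) - \varepsilon .
\]

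\textbf{Lower tail.} Apply Markov's inequality to $e^{-tX}$ with $t > 0$. The same product bound gives $\mathbb{E}[e^{-tX}] \leq \exp\bigl(\mu(e^{-t} - 1)\bigr)$. For $\varepsilon \in (0,1)$, choose $t = -\ln(1-\varepsilon)$. This yields
\[
\mathbb{P}[X \leq (1-\varepsilon)\mu] \leq \exp\bigl(-\mu\, h(\varepsilon)\bigr), \qquad h(\varepsilon) := (1-\varepsilon)\ln(1-\varepsilon) + \varepsilon .
\]
The endpoint cases are handled separately. For $\varepsilon = 0$ the claim is trivial. For $\varepsilon = 1$, note $\mathbb{P}[X = 0] = \prod_i (1-p_i) \leq e^{-\mu} \leq e^{-\mu/2}$.

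\textbf{Elementary inequalities (main obstacle).} What remains is to show
\[
g(\varepsilon) \geq \frac{\varepsilon^2}{2+\varepsilon} \ \text{ for } \varepsilon > 0, \qquad h(\varepsilon) \geq \frac{\varepsilon^2}{2} \ \text{ for } \varepsilon \in [0,1).
\]

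For $h$, use the Taylor series:
\[
h(\varepsilon) = \sum_{k \geq 2} \frac{\varepsilon^k}{k(k-1)} \geq \frac{\varepsilon^2}{2}.
\]
Alternatively, $h(0) = h'(0) = 0$ and $h''(\varepsilon) = 1/(1-\varepsilon) \geq 1$.

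For $g$, the key step is the logarithm bound
\[
\ln(1+\varepsilon) \geq \frac{2\varepsilon}{2+\varepsilon}.
\]
Both sides vanish at $0$, and the derivatives compare as $1/(1+\varepsilon) \geq 4/(2+\varepsilon)^2$. This is equivalent to $(2+\varepsilon)^2 \geq 4(1+\varepsilon)$, i.e.\ $\varepsilon^2 \geq 0$.

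Substituting the logarithm bound into $g$ gives
\[
g(\varepsilon) \geq \frac{2\varepsilon(1+\varepsilon)}{2+\varepsilon} - \varepsilon = \frac{\varepsilon^2}{2+\varepsilon}.
\]
This completes the argument.
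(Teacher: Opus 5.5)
Your proof is correct. The exponential-moment bound with $t=\ln(1+\varepsilon)$ (resp.\ $t=-\ln(1-\varepsilon)$), the product estimate $1+p_i(e^{\pm t}-1)\le \exp(p_i(e^{\pm t}-1))$, and the two elementary inequalities $\ln(1+\varepsilon)\ge 2\varepsilon/(2+\varepsilon)$ and $h(\varepsilon)=\sum_{k\ge 2}\varepsilon^k/(k(k-1))\ge \varepsilon^2/2$ all check out. You also handle the endpoint $\varepsilon=1$ separately, and the degenerate case $\mathbb{E}[X]=0$ is covered automatically.

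The paper gives no proof of this lemma; it is quoted in the appendix as a standard tool. Your argument is the textbook moment-generating-function derivation that such a citation implicitly relies on, so there is nothing further to compare.
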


\begin{theorem}[{\cite[Theorem 8]{GV_RulingSetBoundedGrowth_podc07}}]
    \label{thm:ruling_set_induced_degree}
    There is a randomized distributed algorithm that computes a $(1,\bigO(\log\log\Delta))$-ruling set with induced degree at most $\bigO(\log^{5} n)$ in any $n$-vertex graph with maximum degree $\Delta$ in $\bigO(\log\log\Delta)$ rounds.
\end{theorem}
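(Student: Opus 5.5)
The plan is a sampling-based degree-reduction scheme in the style of Gfeller and Vicari. It runs in $\bigO(\log\log\Delta)$ phases, each of which shrinks the relevant degree from $d$ to roughly $\sqrt{d}\cdot\poly\log n$ and adds only $\bigO(1)$ to the domination distance. Throughout, I use that the alive degree of every node is bounded by the current threshold. Set $d_0=\Delta$ and $d_{j+1}=\sqrt{d_j}$, stopping once $d_j\le \log^{4} n$; this gives $\bigO(\log\log\Delta)$ phases. Let $A_j$ be the set of alive nodes at the start of phase $j$, and maintain the invariant that every node of $A_j$ has at most $d_j$ neighbors in $A_j$.

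In phase $j$, each node of $A_j$ joins a sample $S_j$ independently with probability $p_j=c\log n/d_{j+1}$. Every node of $A_j$ that has more than $d_{j+1}$ neighbors in $A_j$ has, by a Chernoff bound (\autoref{lem:chernoff}), a neighbor in $S_j$ with high probability. Such a node is therefore dominated at distance $1$ and leaves the alive set. We then set $A_{j+1}$ to be the nodes of $A_j$ that are neither in $S_j$ nor adjacent to $S_j$. Every node of $A_{j+1}$ has at most $d_{j+1}$ neighbors in $A_{j+1}$, so the invariant holds. When the process stops, the surviving alive nodes have degree at most $\log^4 n$ among themselves, and I add all of them to the output set. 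These nodes dominate themselves, and their contribution to the induced degree is polylogarithmic.

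The main obstacle is the induced degree of the samples $S_j$ themselves. A node in $S_j$ has in expectation $d_jp_j=c\sqrt{d_j}\log n$ sampled neighbors, which is far from polylogarithmic in early phases. To fix this, I would not output $S_j$ directly. Instead I recurse on the graph $G[S_j]$, whose maximum degree is $\bigO(\sqrt{d_j}\log n)$ with high probability (again by a Chernoff bound), computing a ruling set of $S_j$ with the same scheme. Each node of $S_j$ is then within the recursive ruling distance of the chosen subset, so every node dominated through $S_j$ is within distance $1+\beta(\sqrt{d_j}\log n)$ of the output. Two things remain to be verified carefully.

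The first is the round count. Degrees drop from $d$ to about $\sqrt{d}\log n$ per level, so the recursion depth is $\bigO(\log\log\Delta)$. The recursive instances for different $j$ can be run in parallel and pipelined, so that the total is $\bigO(\log\log\Delta)$ rounds rather than a product of depths. My first attempt at this would be to interleave the sampling, so that phase $j$ of the outer process samples directly from the already-thinned sets. Equivalently, each node draws a single random level and joins all samples up to that level, which makes the distances telescope to $\bigO(\log\log\Delta)$.

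The second is that the union of the final sets across phases still has induced degree $\bigO(\log^5 n)$. For this I would bound, for a fixed output node, the number of output neighbors contributed by each phase by $\bigO(\log^4 n)$ using Chernoff bounds, and then sum over the $\bigO(\log\log\Delta)\le\bigO(\log n)$ phases. The sum gives the stated $\log^5 n$, and a union bound over all nodes and phases makes every high-probability event hold simultaneously.
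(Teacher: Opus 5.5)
The paper gives no proof of this statement; it imports it from Gfeller and Vicari. Your scheme is correct in substance, but it is organized differently from the argument behind the cited result. That argument is, in essence, iterated sampling on a single shrinking candidate set $V=S_0\supseteq S_1\supseteq\cdots$. In each phase the current candidates are sampled, and an unsampled candidate with a sampled neighbor is discarded. The sample together with the undominated leftovers forms the next candidate set. Each phase costs $O(1)$ rounds, adds one hop, and cuts the induced degree from $d$ to about $\sqrt{d}\cdot\mathrm{polylog}\,n$, so there is no recursion and nothing to pipeline. You instead separate the sample $S_j$ from the alive set $A_{j+1}$ and recurse on $G[S_j]$, which turns the chain into a recursion tree. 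Since your construction makes $S_j$ and $A_{j+1}$ non-adjacent, you could merge them into one candidate set of induced degree $O(\sqrt{d_j}\log n)$. That recovers the single chain and removes both of the items you left open.

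If you keep the recursion, both items close easily. For the round count, start the call on $G[S_j]$ as soon as $S_j$ is fixed. Its nodes have left the outer process, so the calls do not interfere, and $T(D)\le\max\{O(\log\log D),\ \max_j[\,O(j+1)+T(C\sqrt{d_j}\log n)\,]\}$. Since $d_j>\log^4 n$ gives $\log\log(C\sqrt{d_j}\log n)\le\log\log D-j-\Omega(1)$, this recurrence solves to $T(D)=O(\log\log D)$. Your ``single random level'' reformulation is not equivalent to your process and is not needed. For the induced degree, you need no additional Chernoff argument and no sum over phases. The sets $S_j$ are pairwise non-adjacent and non-adjacent to the final alive set, so the output induces a disjoint union of the pieces. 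By induction its maximum degree is at most your base threshold $\log^4 n$, slightly better than the stated $\log^5 n$.
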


\begin{theorem}[{\cite[Theorem 3]{SEW_RulingSetColoring_tcs13}}]
    \label{thm:ruling_set_coloring}
    Let $G = (V,E)$ and $W \subseteq V$. Given a $d$-coloring of $G$, there is a deterministic distributed algorithm that computes a $(2,c)$-ruling set for $W$ in time $\bigO(c \cdot d^{1/c})$.
\end{theorem}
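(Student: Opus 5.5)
The plan is to use the classical digit-by-digit recursion (in the style of Awerbuch--Goldberg--Luby--Plotkin), done by induction on $c$. Set $b := \lceil d^{1/c}\rceil$. Since $b^c \ge d$, each color in $\{0,\dots,d-1\}$ can be written as a string of $c$ digits in base $b$. Throughout we use only that the coloring is proper, so two adjacent nodes never share a color. We prove the following by induction on $c$: for any $W' \subseteq V$ whose colors lie in a set of at most $b^{c}$ values, a $(2,c)$-ruling set for $W'$ can be computed in at most $c\cdot b$ rounds, up to constant factors.

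\emph{Base case $c=1$.} Here there are at most $b$ color classes. We build an independent set greedily, one class per round. In round $j$, every node of $W'$ with color $j$ joins $S$ unless one of its neighbors already joined. Nodes of the same class are pairwise non-adjacent, so they never conflict. The resulting $S$ is a maximal independent set of $G[W']$, that is, a $(2,1)$-ruling set for $W'$, and it is computed in $b$ rounds.

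\emph{Inductive step.} Partition $W'$ into groups $W_0,\dots,W_{b-1}$ according to the leading digit of the color. Within one group the remaining $c-1$ digits form a proper coloring with at most $b^{c-1}$ colors. We first apply the induction hypothesis to all groups in parallel; these runs do not interfere, since each run only consults its own group. This yields $(2,c-1)$-ruling sets $S_j$ for the groups $W_j$ within $(c-1)b$ rounds. Next we merge the sets sequentially over $j=0,\dots,b-1$, using one round per $j$: a node of $S_j$ is kept if and only if it has no neighbor among the nodes kept in groups $0,\dots,j-1$.

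\emph{Correctness of the merge.} Let $S$ be the set of kept nodes. Independence: each $S_j$ is independent, and the merge rule removes every edge that would run between different groups. Domination: take any $w\in W_j$. It lies within distance $c-1$ of some $s\in S_j$. Either $s$ is kept, or $s$ was removed, in which case $s$ has a kept neighbor. Either way $w$ is within distance $c$ of $S$. The running time satisfies $T(c)\le T(c-1)+b$ with $T(1)=b$, so $T(c)=\bigO(c\cdot b)=\bigO(c\cdot d^{1/c})$.

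The main point needing care is this domination step. A node of $S_j$ can only be removed because of a neighbor that is actually kept, and kept nodes are never removed afterwards. Therefore the distance to $S$ grows by exactly one per recursion level, and never by more. All the other steps are routine bookkeeping on the digit representation.
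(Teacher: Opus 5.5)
The paper gives no proof of this statement; it imports it from Schneider, Elkin and Wattenhofer. Your argument is correct and is the standard recursive construction (in the style of Awerbuch, Goldberg, Luby and Plotkin) that underlies the cited result. You recurse on the $b=\lceil d^{1/c}\rceil$ groups given by a colour digit, then merge the partial ruling sets greedily in $b$ rounds, losing at most one unit of domination distance per level, for $\mathcal{O}(c\cdot d^{1/c})$ rounds in total.
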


\begin{corollary}[randomized ruling set]
    \label{cor:rand_ruling_set}
    There is a randomized distributed algorithm that computes a $(2,\bigO(\log \log n))$-ruling set in any graph $G$ in $\bigO(\log \log n)$ rounds of the \LOCAL model.
\end{corollary}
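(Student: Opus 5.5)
The plan is to combine the two theorems just restated, in the standard way: first compute a sparse ruling set with small induced degree using \autoref{thm:ruling_set_induced_degree}, and then use a coloring of the induced subgraph to sparsify further deterministically with \autoref{thm:ruling_set_coloring}.

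\textbf{Step 1.} Run the algorithm of \autoref{thm:ruling_set_induced_degree} on $G$. This takes $\bigO(\log\log\Delta)\subseteq\bigO(\log\log n)$ rounds, since $\Delta\le n$. It produces a set $W$ with the following two properties:
\begin{itemize}
\item every vertex is within distance $\bigO(\log\log\Delta)$ of $W$;
\item $G[W]$ has maximum degree $d_W=\bigO(\log^5 n)$.
\end{itemize}

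\textbf{Step 2.} Compute a proper coloring of $G[W]$ with $d=\bigO(d_W^2)=\bigO(\log^{10} n)$ colors using Linial's algorithm. This costs $\bigO(\log^\ast n)$ rounds, because communication within $G[W]$ uses only edges of $G$. Alternatively, one can compute a $(d_W+1)$-coloring in $\bigO(\log^\ast n)$ plus lower-order rounds.

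\textbf{Step 3.} Apply \autoref{thm:ruling_set_coloring} to the graph $G[W]$ with subset $W$, this $d$-coloring, and parameter $c=\lceil\log\log n\rceil$. Its running time is $\bigO(c\cdot d^{1/c})$. Since $d\le \log^{10}n$ up to constants,
\[
d^{1/c}\le 2^{\bigO(\log\log n)/\log\log n}=\bigO(1),
\]
so this step takes $\bigO(\log\log n)$ rounds. The output $S\subseteq W$ is independent in $G[W]$. Since $S\subseteq W$ and $G[W]$ is induced, $S$ is also independent in $G$, so pairwise distances in $S$ are at least $2$. Moreover, every $w\in W$ is within $c$ hops of $S$.

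\textbf{Step 4: bounding distances.} A vertex $v$ reaches some $w\in W$ within $\bigO(\log\log n)$ hops. From $w$ it reaches $S$ within $c$ further hops. Distances in $G[W]$ upper bound distances in $G$. Hence $v$ is at distance $\bigO(\log\log n)$ from $S$. Adding the round counts gives $\bigO(\log\log n+\log^\ast n)=\bigO(\log\log n)$.

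\textbf{Main obstacle.} The delicate point is how ``$(2,c)$-ruling set for $W$'' in \autoref{thm:ruling_set_coloring} measures domination. If distances are measured in $G$ rather than in $G[W]$, running the algorithm directly on $G$ with $W$ and a coloring of $G[W]$ may fail, because non-adjacent-in-$G[W]$ vertices of $W$ can share a color. To avoid this ambiguity, I run it on the induced subgraph $G[W]$ as above. Domination in $G[W]$ then implies domination in $G$, and independence transfers because $G[W]$ is induced.
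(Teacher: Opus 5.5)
Your proof is correct and follows the paper's argument exactly: first a $(1,O(\log\log\Delta))$-ruling set with induced degree $O(\log^5 n)$, then a Linial $O(\log^{10} n)$-coloring of the induced subgraph, then the Schneider--Elkin--Wattenhofer ruling set with $c=\Theta(\log\log n)$ so that $d^{1/c}=O(1)$, and finally adding the two domination radii. Two side remarks, neither affecting validity: your worry about running on $G$ versus $G[W]$ is moot for $\alpha=2$, since vertices of $W$ that are non-adjacent in $G[W]$ are non-adjacent in $G$ (this is why the paper can use a coloring of $G[S_0]$ directly); and your aside that a $(d_W+1)$-coloring is available in $O(\log^\ast n)$ plus lower-order rounds is not accurate for $d_W=\Theta(\log^5 n)$, but you never use it, so the proof stands.
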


\begin{proof}
    We first apply \autoref{thm:ruling_set_induced_degree} to get a $(1,\bigO(\log \log \Delta))$-ruling set $S_0$ with induced degree at most $\bigO(\log^5 n)$. 
    Next we compute a $\bigO(\log^{10} n)$-coloring $\chi$ of $G[S_0]$ using Linial's algorithm \cite{Linial1, linial92}. 
    Finally using $\chi$ as the input coloring and setting $c = \bigO(\log \log n)$, we get a $(2,\bigO(\log \log n))$ ruling set $S_1$ for $S_0$ in 
    $$
    \bigO(c \cdot d^{1/c}) = \bigO(\log \log n \cdot \log^{10 / \log \log n} n) = \bigO(\log \log n)
    $$
    randomized time. 
    Since every vertex $v \in V$ is at most $\bigO(\log \log \Delta) \subseteq \bigO(\log \log n)$ hops away from a vertex in $S_0$ and every vertex in $S_0$ is at most $\bigO(\log \log n)$ hops away from a vertex in $S_1$, we get that $S_1$ is also a $(2,\bigO(\log \log n))$-ruling set for $V$. 
\end{proof}

\bibliographystyle{plainurl}
\bibliography{0a_refs_preamble.bib, 0b_refs_sinksparse.bib}

\end{document}